\documentclass[11pt]{article} 

\usepackage[margin=1in]{geometry} 
\usepackage[utf8]{inputenc}
\usepackage[T1]{fontenc}
\usepackage[sc]{mathpazo} 

\usepackage{xspace}

\usepackage{booktabs,multicol,multirow}
\usepackage{subcaption}

\usepackage{enumitem} 
\usepackage{float}
\usepackage[dvipsnames]{xcolor} 
\definecolor{ptblue}{RGB}{15,76,129} 
\definecolor{ptemerald}{HTML}{009473} 
\definecolor{ptgray}{HTML}{939597} 

\usepackage{tikz} 
\usetikzlibrary{positioning}

\usepackage{amsmath,amsfonts,amssymb,amsthm,bbm,mathtools}

\let\OLDland\land
\renewcommand{\land}{\:\OLDland\:}
\let\OLDlor\lor
\renewcommand{\lor}{\:\OLDlor\:}
\let\OLDforall\forall
\renewcommand{\forall}{\;\OLDforall\:}
\let\OLDexists\exists
\renewcommand{\exists}{\;\OLDexists\,}

\DeclareMathOperator*{\argmax}{arg\,max}

\usepackage[ruled,linesnumbered,vlined]{algorithm2e}

\usepackage[square,sort]{natbib} 

\usepackage{hyperref}
\hypersetup{
linktocpage,
colorlinks=true,
citecolor=ptemerald, 
urlcolor=ptblue, 
linkcolor=Plum, 
pdftitle={Optimal Subsidy Bounds for Goods and Chores: One Dollar Each Suffices},
pdfauthor={Xinhang Lu, Simon Mackenzie, Mashbat Suzuki},
}
\usepackage{aliascnt}
\usepackage{cleveref} 

\theoremstyle{plain}
\newtheorem{theorem}{Theorem}[section]
\newaliascnt{claim}{theorem}

\aliascntresetthe{claim}
\newaliascnt{conjecture}{theorem}

\aliascntresetthe{conjecture}
\newaliascnt{corollary}{theorem}

\aliascntresetthe{corollary}
\newaliascnt{lemma}{theorem}
\newtheorem{lemma}[lemma]{Lemma}
\aliascntresetthe{lemma}
\newaliascnt{proposition}{theorem}
\newtheorem{proposition}[proposition]{Proposition}
\aliascntresetthe{proposition}
\newaliascnt{observation}{theorem}
\newtheorem{observation}[observation]{Observation}
\aliascntresetthe{observation}

\theoremstyle{definition}
\newaliascnt{definition}{theorem}
\newtheorem{definition}[definition]{Definition}
\aliascntresetthe{definition}

\theoremstyle{remark}

\makeatletter 
\newcommand{\EF}[1]{\if\relax\detokenize\expandafter{\@firstofone#1{}}\relax \text{EF}\xspace\else \text{EF#1}\fi}
\makeatother
\newcommand{\EFOne}{\EF{1}\xspace}
\newcommand{\EFX}{\EF{X}\xspace}

\newcommand{\Inj}{\mathrm{Inj}}
\newcommand{\val}{\mathrm{val}}

\newcommand{\EFable}{\text{envy-freeable}\xspace}
\newcommand{\EFability}{envy-freeability\xspace}

\newcommand{\alloc}{\mathcal{A}\xspace}

\newcommand{\choreMaximality}{chore-maximality\xspace}

\newcommand{\IMWM}{\texttt{\textup{IMWM}}\xspace} 
\newcommand{\IMWPM}{\texttt{\textup{IMWPM}}\xspace} 

\title{Optimal Subsidy Bounds for Goods and Chores: \\ One Dollar Each Suffices}

\author{Xinhang Lu\thanks{\nolinkurl{xinhang.lu@inf.kyushu-u.ac.jp}} \\ Kyushu University
\and
Simon Mackenzie\thanks{\nolinkurl{simon.william.mackenzie@gmail.com}} \\ UNSW Sydney
\and
Mashbat Suzuki\thanks{\nolinkurl{mashbat.suzuki@unsw.edu.au}} \\ UNSW Sydney}

\date{}

\begin{document}
\maketitle

\begin{abstract}
We study the fair allocation of $m$ indivisible items to $n$ agents with additive utilities.
In our setting, each indivisible item may be a good, yielding non-negative utility to some agents, or a chore, yielding negative utility to others.
Whilst envy-free allocations may not exist in the indivisible-items setting, envy-freeness can be achieved if some amount of divisible good (i.e., \emph{money}) is introduced. 
When each item’s utility or disutility is bounded by one, we show that a subsidy of at most one dollar per agent suffices to guarantee the existence of an envy-free allocation, and that this bound is tight. Moreover, such an allocation can be computed in polynomial time.
Since at least one agent need not receive any subsidy, our results imply that a total subsidy of at most $n-1$ dollars suffices to ensure envy-freeness.

\end{abstract}

\section{Introduction}

Dividing scarce resources among multiple parties is ubiquitous, arising in contexts such as divorce settlements and inheritance divisions.
A natural objective is to select an allocation that every participant perceives as fair.
Among the many fairness notions studied in the literature, the gold standard is arguably \emph{envy-freeness}, introduced by \citet{Foley67}: an allocation is envy-free if no agent prefers another agent's bundle to their own.

Our focus is on allocating \emph{indivisible items}.
This includes familiar \emph{goods} (e.g., houses, artworks, electronics) as well as \emph{chores} or undesirable tasks (e.g., household chores, shift scheduling).
With indivisible items, envy-free allocations need not exist; for example, with two agents and a single valuable indivisible good, whichever agent does not receive the good necessarily envies the other.

A classical way to circumvent such non-existence is to allow \emph{monetary compensation}.
We follow the standard \emph{subsidy} model, where an outcome consists of an allocation of the indivisible items together with non-negative payments to agents, and envy-freeness is defined with respect to the sum of an agent's item-utility and her subsidy.
The idea goes back to \citet{Maskin87} and \citet{Svensson83}, and much early work focuses on assignment-type settings in which each agent receives at most one good~\citep[e.g.,][]{AlkanDeGa91,Aragones95,Klijn00},\footnote{The model of \citet{AlkanDeGa91} allows for undesirable items and negative amounts of money.} with the exception of \citet{MeertensPoRe02}, whose model places no restrictions on the number of agents and goods and allows agents to have general preference relations over their allocated bundle of goods and amount of money.

More recently, \citet{HalpernSh19} initiated the line of research that studies the asymptotic amount of subsidy required to achieve envy-freeness in the setting with $n$ agents, $m$ indivisible goods (with per-item values normalized to $[0, 1]$), and quasi-linear preferences.
In this goods-only setting, \citet{BrustleDiNa20} proved a tight universal bound: one dollar of subsidy per agent always suffices, and this bound is worst-case optimal.
Subsidies have also been studied for \emph{chores} and for fairness notions beyond envy-freeness (e.g., \emph{proportionality} and \emph{maximin share}), including tight and small-subsidy guarantees in several settings~\citep{WuZhZh23,WuZh24,WuXuZh25,WuXuZh25-revisiting}.

Many real-world allocation problems, however, involve a \emph{mix of goods and chores}, and the same indivisible item can be desirable to some agents and undesirable to others.
For instance, in a shared household, one roommate may enjoy cooking while another dislikes it; in a workplace, a late-night shift may be attractive for some employees but undesirable for others; and more generally, responsibilities and perks can be perceived very differently across participants.
This ``mixed manna'' setting has received substantial attention in fair division---via competitive and algorithmic approaches---see~\citep[e.g.,][]{BogomolnaiaMoSa17,KulkarniMeTa21,AzizCaIg22,HosseiniSiVa23,ChaudhuryGaMc23,BarmanHVSe25,BarmanVe26}, and also the survey of \citet{LiuLuSu24} for an overview.

Extending the one-dollar subsidy guarantee from the goods-only setting to such mixed-sign instances is not straightforward.
Existing arguments for goods exploit that every item is weakly beneficial, while analyses for chores rely on all items being weakly harmful; both assumptions fail when an item can be a good for one agent and a chore for another.
Meanwhile, techniques developed for mixed-manna fair division typically do not aim to control the \emph{minimum subsidy} needed for \emph{exact} envy-freeness.
This motivates our central question:
\begin{quote}
\itshape
Given additive utilities with per-item values normalized to $[-1, 1]$, and allowing each item to be a good for some agents and a chore for others, does there always exist an allocation $(A_1, \dots, A_n)$ of the indivisible items among $n$ agents and a subsidy payment vector $p \in [0, 1]^n$ such that
\[
u_i(A_i) + p_i \;\ge\; u_i(A_j) + p_j \qquad \text{for all agents } i, j \; ?
\]
\end{quote}

\subsection{Our Results}

We answer the question posed above affirmatively.
We study the fair division of $m$ indivisible items~$M$ among $n$ agents~$N$ with additive utilities.
As is standard in the subsidy literature, we assume without loss of generality that each marginal value is normalized to lie in $[-1, 1]$ (equivalently, $\max_{i \in N, g \in M} |u_i(g)| \le 1$).
An outcome consists of an allocation $\alloc = (A_1, \dots, A_n)$ and a non-negative subsidy vector $\mathbf{p} = (p_1, \dots, p_n) \in \mathbb{R}_{\ge 0}^n$, where $(\alloc, \mathbf{p})$ is \emph{envy-free} if $u_i(A_i) + p_i \ge u_i(A_j) + p_j$ for all~$i, j \in N$.

\begin{theorem}
\label{thm:main}
For every instance with additive utilities $u_i \colon M \to [-1, 1]$, where each item may be a good for some agents and a chore for others, there exists an envy-free outcome $(\alloc, \mathbf{p})$ such that $0 \le p_i \le 1$ for all~$i \in N$. Furthermore, such an allocation can be computed in polynomial time.
\end{theorem}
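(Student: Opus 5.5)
We will work within the standard envy-graph framework of \citet{HalpernSh19}. For an allocation $\alloc$, form the complete weighted envy graph $G_\alloc$ on $N$, where arc $(i,j)$ has weight $w(i,j)=u_i(A_j)-u_i(A_i)$. An allocation is \emph{envy-freeable} if and only if $G_\alloc$ has no positive-weight cycle, which in turn holds if and only if $\alloc$ maximizes $\sum_i u_{\pi(i)}(A_i)$ over all permutations $\pi$ of the bundles. In that case the minimal payments are $p_i=\ell(i)$, the maximum weight of a path starting at $i$; these are non-negative, and some agent has $p_i=0$. So it suffices to construct in polynomial time an envy-freeable allocation in which \emph{every path} of $G_\alloc$ has weight at most $1$. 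This immediately gives $p_i\in[0,1]$ and total subsidy at most $n-1$.

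The construction will follow the round-based approach of \citet{BrustleDiNa20}, adapted to mixed signs. We first split the items. An item $g$ with $u_i(g)\ge 0$ for some agent is \emph{relevant as a good}. An item with $u_i(g)<0$ for all $i$ is a \emph{pure chore}. In the first phase we allocate the items round by round. In each round we compute a maximum-weight matching between agents and the remaining items (\IMWM), where each agent receives at most one item, and we add the matched items to the current bundles. The key invariant, as in the goods case, is that after each round the allocation still maximizes total welfare over bundle permutations. We then argue that each new arc weight increases by at most the value of one item. Combined with the matching-optimality inequalities, which telescope along any path, this gives $\ell(i)\le 1$. Pure chores require a different rule: in each round, each chore is given to an agent who dislikes it least, or the round is run as a perfect matching \IMWPM over chores with padding dummy items of value $0$. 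We must show that these operations also preserve envy-freeability and the path bound. For chores, the natural invariant is the mirror image: every path weight is bounded by the disutility of one chore received in the last round.

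The main obstacle is combining the two phases. Along a path in the final envy graph, the good-phase contribution and the chore-phase contribution could each be up to $1$, giving a bound of $2$ instead of $1$. To avoid this, I would run a single matching-based process on all items with signed weights. We perform a sequence of maximum-weight perfect matchings on agents versus items, padding with zero-valued dummy items, and each agent receives exactly one item per round. A path argument in the spirit of \citet{BrustleDiNa20} then shows, via the optimality conditions of each round's matching (dual potentials), that along a path $i_1\to\dots\to i_k$ the sum $\sum_t \big(u_{i_t}(A_{i_{t+1}})-u_{i_t}(A_{i_t})\big)$ telescopes round by round. Everything cancels except a single boundary term: the last item's value to the first agent minus the first agent's own item in the final round. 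We then need to control this boundary term by $1$, not by $2$, since it has the form $u(g)-u(h)$ with both values lying in $[-1,1]$.

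I would try to resolve this by ordering the rounds: goods-type matchings first and chore-heavy rounds last, or by choosing the tie-breaking and padding so that in the critical round either the received item is non-negative for the path's starting agent or the compared item is non-positive. A mixed sign in the wrong direction would then contradict maximality of the matching, for instance by swapping in a dummy item. I expect this sign-case analysis to be the hardest step. Polynomial time follows because there are $\lceil m/n\rceil$ matching computations plus one longest-path computation in a graph with no positive cycles, which can be done via Bellman–Ford on the negated weights.
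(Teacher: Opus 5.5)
Your reduction is correct and is also the paper's starting point: by \citet{HalpernSh19} it suffices to build an \EFable allocation in which every envy path has weight at most $1$. The gap is the construction itself. The sign-case analysis you defer cannot be carried out, because matching original items one per agent per round is the wrong primitive. Take two agents and two items with $u_1(g)=1$, $u_2(g)=0.9$, $u_1(c)=-1$, $u_2(c)=-0.5$. Whatever round order, padding or tie-breaking you use, two things always happen. Every maximum-weight round containing $g$ gives $g$ to agent~1. Every perfect-matching round that assigns $c$ gives it to agent~2. You therefore end with $A_1=\{g\}$ and $A_2=\{c\}$. Agent~2's envy is then $0.9+0.5=1.4$, so some payment exceeds $1$, yet giving both items to agent~2 is envy-free with no subsidy at all. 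In the symmetric version, with $u_i(g)=1$ and $u_i(c)=-1$ for both agents, your single signed \IMWPM run has $m=n$, adds no dummies, and forces a subsidy of $2$. So the boundary term $u(g)-u(c)$ in your telescoping genuinely can approach $2$. One-item-per-round matching separates a good from the chore that ought to cancel it, and no exchange argument within a round repairs this.

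The missing ingredient is bundling before any matching. The paper first merges objective chores into subjective goods, and, while chores remain, merges goods that share an interested agent. The result is a set of meta-goods with four properties: some agent values each one non-negatively; each is worth at most $1$ to every interested agent; each is chore-maximal with respect to the residual chores $Z_{\mathrm{rem}}$; and their interest sets are pairwise disjoint whenever $Z_{\mathrm{rem}}\neq\emptyset$. Matching is then used only where it telescopes cleanly.
\begin{itemize}
\item If $Z_{\mathrm{rem}}=\emptyset$, \IMWM on the meta-goods, whose values lie in $(-\infty,1]$, gives path weight at most $1$.
\item If $|Z_{\mathrm{rem}}|\ge|G|$, every meta-good is paired with a distinct residual chore to form an objective meta-chore. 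The pairing is chosen roundwise optimally (computed via flows), and after thresholding values at $-1$ the \IMWPM bound applies.
\item The hardest regime, $0<|Z_{\mathrm{rem}}|<|G|$, uses no telescoping at all. It is an incremental construction: agents are added one at a time, bundle--payment pairs are rotated along equality paths, and payments are lowered so that all of them stay in $[v_{k+1},1]$. The low-value meta-goods are then handed out with offsetting payment reductions.
\end{itemize}
Without some version of this bundling and case analysis, your outline does not prove the theorem.
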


The per-agent bound in \Cref{thm:main} is optimal: even in the goods-only setting, some instances require a subsidy of~$1$ for at least one agent~\citep{BrustleDiNa20}.
Thus, one dollar per agent is the best possible worst-case guarantee for the mixed setting as well.
Note that our \Cref{thm:main} recovers the tight subsidy bound for the objective-goods setting~\citep{BrustleDiNa20} and significantly generalizes this result to the more general setting with mixed goods and chores.

En route, in \Cref{sec:structural-properties}, we develop structural properties of iterated maximum-weight matching (used in the work of \citet{BrustleDiNa20}) and an objective-chores analogue
based on iterated maximum-weight \emph{perfect} matchings.
These properties yield clean envy-path bounds not only for objective goods (recovering the result of \citet{BrustleDiNa20} with a much simpler proof), but also for subjective-goods and objective-chores instances.

Given an envy-free outcome $(\alloc, \mathbf{p})$, we can uniformly decrease agents' payments until one agent's payment reaches~$0$.
At that point, at least one agent does not need to receive any subsidy.
Since each of the remaining agents receives a subsidy of at most~$1$, a total subsidy of at most $n-1$ dollars suffices to ensure envy-freeness.

\subsection{Technical Challenges and Overview}
Our proof is guided by the \EFability characterization of
\citet{HalpernSh19}. Once an allocation is fixed, the minimum payments needed
to eliminate envy are determined by longest paths in the corresponding envy
graph. Under our normalization, it therefore suffices to construct an
allocation in which every envy path has weight at most one. The main
difficulty is to enforce this global path bound in the presence of both goods
and chores.

A natural starting point is the recent result of \citet{AzizLuMa26}, who show
that every instance of indivisible mixed manna admits an allocation that is
both \EFOne and \EFable. These two properties alone, however, do not imply a
tight subsidy bound. Even in goods-only instances, an allocation may be both
\EFOne and \EFable while its heaviest envy path has weight \(n-1\). The
reason is that \EFOne controls envy between individual pairs of agents,
whereas the required payments depend on envy accumulated along an entire
path. Thus, the approach of \citet{AzizLuMa26} cannot be applied directly.

We nevertheless adopt one of their central ideas by bundling items into
\emph{meta-goods}. Our construction differs in an important respect, as we
require every meta-good to have value at most one. This quantitative control
is necessary. Without it, even a two-agent instance containing a single
meta-good may require a subsidy strictly greater than one.

Bundling is only the starting point. In the mixed setting, especially after
goods have been bundled with chores, requiring the allocation to remain
\EFOne is too restrictive and is incompatible with the operations used later
in the proof. We therefore do not maintain \EFOne and instead control the
weights of envy paths directly. Our construction distinguishes two cases
according to whether the number of meta-goods exceeds the number of remaining
singleton chores.

The more technically challenging case arises when meta-goods outnumber
singleton chores. Here, we construct the outcome incrementally while
maintaining a collection of carefully chosen invariants. At each intermediate
stage, we work with a subset of the agents and a designated subset of the
special items. The active agents are envy-free under payments bounded by one.
We then add the remaining agents one at a time while preserving envy-freeness
and the subsidy bound within the enlarged active set.

This incremental construction relies on the equality graph and a
payment-reduction procedure. Equality paths identify sequences of
bundle-payment pairs that can be rearranged while preserving the relevant
envy relations. Payment reduction creates the slack needed to incorporate
the next agent without violating the one-dollar bound. Together, these
operations coordinate the allocation of meta-goods and partially allocated
chores with the evolution of the payment vector.

When singleton chores are at least as numerous as the meta-goods, the
additional chores make a different construction possible. We combine a
pairing procedure with thresholding and a flow argument to produce the
desired allocation and payments. Together, the two cases yield an efficient algorithm for computing an envy-free allocation of goods and chores in which each agent receives a subsidy of at most one dollar.

\subsection{Additional Related Work}

\paragraph{Fair Division with Subsidy}
As we mentioned previously, the use of monetary compensation to achieve exact envy-freeness with indivisible items dates back at least to \citet{Maskin87} and \citet{Svensson83} and early algorithmic work on indivisible objects with money \citep[e.g.,][]{Klijn00}.
\citet{HalpernSh19} initiated the modern algorithmic study of subsidies for general additive goods, including an
envy-graph characterization of \EFability and polynomial-time computation of minimum envy-eliminating subsidies
for a fixed allocation.
For goods, \citet{BrustleDiNa20} proved the tight bound of one dollar per agent for additive utilities.
\citeauthor{BrustleDiNa20} also studied the case where agents have general monotone utilities, their subsidy bound was improved by \citet{KawaseMaSuTaYo25}.
Subsequent work refines subsidy bounds and studies additional structure in related models
\citep[e.g.,][]{BarmanKrNa22,GokoIgKa24,LiSuSu25}.

Beyond envy-freeness, subsidy has also been studied for other fairness notions, including proportionality and MMS:
\citet{WuZhZh23} gave tight worst-case total-subsidy bounds for proportionality,
\citet{WuZh24} studied weighted proportionality,
and \citet{WuXuZh25} showed that a small total subsidy suffices for MMS with three agents.
Weighted entitlements in the subsidy model have also been investigated under weighted envy-freeness~\citep{ElmalemAzGo25}.

\citet{NarayanSuVe21} considered the related setting with \emph{transfer payments}, i.e., negative payments are allowed and agents' payments sum to zero.
In this setting, \citeauthor{NarayanSuVe21} studied whether transfer payments can be used to achieve both envy-freeness and economic efficiency.

Most of this literature focuses on instances consisting entirely of goods or entirely of chores; in contrast, we study envy-freeness with subsidies in the fully mixed setting where each item may be a good for some agents and a chore for others.
The work of \citet{WuXuZh25-revisiting} also concerned the same setting and showed that it suffices to guarantee proportionality with a total subsidy of~$\frac{n}{4}$ when $n$ is even and $\frac{n^2 - 1}{4n}$ when $n$ is odd; both bounds are tight~\citep{WuZhZh23}.

\paragraph{Rent Division}
Rent division allocates indivisible rooms together with payments.
Under quasilinear preferences and budget-balance constraints, envy-free outcomes are guaranteed to exist
\citep[e.g.,][]{Su99}.
This differs from subsidy-based fair division, where payments are required to be nonnegative subsidies funded
by an external source, leading to different feasibility and extremal questions.

\paragraph{Relations of Envy-Freeness}
When exact envy-freeness is unattainable without money, a large literature studies relaxations such as
\EFOne and \EFX, as well as share-based notions
such as MMS~\citep{Budish11,CaragiannisKuMoPrShWa19,LiptonMaMoSa04}.
We refer to the following survey articles~\citep{AmanatidisAzBi23,GuoLiDe23,LiuLuSu24,NguyenRo23,Suksompong21,Suksompong25} for an overview of recent developments and progress.
This line of work is complementary to ours: we keep exact envy-freeness, but allow bounded subsidies to restore
feasibility.

\section{Preliminaries}
\label{sec:preliminaries}

For any positive integer~$s$, let $[s] \coloneqq \{1, 2, \dots, s\}$.
A fair division instance is a tuple $\langle N, M, (u_i)_{i \in N} \rangle$, where
$N = [n]$ is the set of agents and $M = [m]$ is the set of indivisible items.
Each agent~$i\in N$ has an \emph{additive utility function} $u_i \colon 2^M \to \mathbb{R}$, determined by item values $u_i(j) \in \mathbb{R}$ for $j \in M$; i.e.,
\[
u_i(S) \;=\; \sum_{j \in S} u_i(j) \qquad \forall S \subseteq M.
\]
We assume the standard normalization that
\[
u_i(j) \in [-1, 1] \qquad \forall i \in N,\ \forall j \in M.
\]
For our computational claims, the item values are rational numbers encoded in binary, and running time is measured in the total input bit length.

An item~$j \in M$ is a \emph{subjective good} if $u_i(j) \ge 0$ for some~$i \in N$.
An item~$j \in M$ is an \emph{objective good} (resp., \emph{objective chore}) if $u_i(j) \ge 0$ (resp., $u_i(j) < 0$) for all agents~$i \in N$.

An \emph{allocation} is an ordered partition $\alloc = (A_1, \dots, A_n)$ of~$M$ into $n$ (possibly empty) bundles, where agent~$i$ receives bundle~$A_i$.
The allocation~$\alloc$ is \emph{envy-free} if for all~$i, j \in N$,
\[
u_i(A_i)\ \ge\ u_i(A_j).
\]

A \emph{subsidy vector} is $\mathbf{p} = (p_1, \dots, p_n) \in \mathbb{R}_{\ge 0}^n$, where $p_i$ is a payment received by agent~$i$ from an external sponsor.
An \emph{allocation with payments} $(\alloc, \mathbf{p})$ is \emph{envy-free (EF)} if for all~$i, j \in N$,
\[
u_i(A_i) + p_i\ \ge\ u_i(A_j) + p_j.
\]
We refer to $\sum_{i\in N} p_i$ as the \emph{total subsidy}. Our objective is to find an envy-free outcome $(\alloc, \mathbf{p})$ minimizing total subsidy.

\subsection{Envy-Freeability and the Envy Graph}
\label{subsec:EFable-envy-graph}

An allocation~$\alloc$ is \emph{\EFable} if there exists a subsidy vector $\mathbf{p} \in \mathbb{R}_{\ge 0}^n$ such that $(\alloc, \mathbf{p})$ is envy-free.

\begin{definition}
\label{def:envy-graph}
Given an allocation $\alloc = (A_1, \dots, A_n)$, its \emph{envy graph} is the complete directed graph $\mathcal{G}_\alloc$ on vertex set~$N$ with edge weights
\[
w_\alloc(i, j)\ \coloneqq\ u_i(A_j) - u_i(A_i) \qquad \forall i, j \in N.
\]
A (directed) path $P = (i_1, i_2, \dots, i_k)$ has weight $w_\alloc(P) \coloneqq \sum_{t = 1}^{k-1} w_\alloc(i_t, i_{t+1})$.
For $i, j \in N$, let $\ell_\alloc(i, j)$ denote the maximum weight of any directed path from~$i$ to~$j$ in $\mathcal{G}_\alloc$, and let
\[
\ell_\alloc(i)\ \coloneqq\ \max_{j \in N} \ell_\alloc(i, j)
\]
denote the maximum weight of any directed path starting at~$i$.
(We allow the length-$0$ path, so $\ell_\alloc(i) \ge 0$ for all~$i$.)
\end{definition}

\citet{HalpernSh19} gave a characterization of \EFability for goods; however, the characterization also works for the setting with mixed goods and chores.

\begin{theorem}[\citealp{HalpernSh19}]
\label{thm:Halpern-Shah-EFable}
For an allocation~$\alloc$, the following are equivalent:
\begin{enumerate}[label=(\alph*)]
\item $\alloc$ is \EFable.
\item $\alloc$ maximizes the utilitarian welfare across all reassignments of its bundles; i.e., for every permutation~$\sigma$ of~$[n]$,
\[
\sum_{i \in N} u_i(A_i)\ \ge\ \sum_{i \in N} u_i(A_{\sigma(i)}).
\]
\item The envy graph~$\mathcal{G}_\alloc$ has no positive-weight directed cycle.
\end{enumerate}
\end{theorem}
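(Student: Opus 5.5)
We prove the equivalence of (a), (b) and (c) via the cycle of implications (a)$\Rightarrow$(b)$\Rightarrow$(c)$\Rightarrow$(a). None of the steps uses the sign of item values, so the argument is valid for mixed goods and chores.

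\emph{(a)$\Rightarrow$(b).} Suppose $(\alloc,\mathbf{p})$ is envy-free and let $\sigma$ be a permutation of $[n]$. For each $i$, envy-freeness gives $u_i(A_i)+p_i \ge u_i(A_{\sigma(i)})+p_{\sigma(i)}$. Sum these inequalities over $i\in N$. Since $\sigma$ is a bijection, $\sum_i p_{\sigma(i)}=\sum_i p_i$, so the payments cancel and we are left with $\sum_i u_i(A_i)\ge\sum_i u_i(A_{\sigma(i)})$.

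\emph{(b)$\Rightarrow$(c).} We argue by contrapositive. Suppose there is a directed cycle $C=(i_1,\dots,i_k,i_1)$ with positive weight; we may take it to be simple, since any closed walk of positive weight decomposes into simple cycles, one of which must be positive. Define $\sigma(i_t)=i_{t+1}$ with indices taken cyclically, and $\sigma(j)=j$ for all other $j$. Then
\[
\sum_i u_i(A_{\sigma(i)})-\sum_i u_i(A_i)=\sum_{t}\bigl(u_{i_t}(A_{i_{t+1}})-u_{i_t}(A_{i_t})\bigr)=w_\alloc(C)>0,
\]
which contradicts (b).

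\emph{(c)$\Rightarrow$(a).} This is the substantive direction.
\begin{itemize}
\item Since $\mathcal{G}_\alloc$ has no positive-weight cycle, maximum-weight paths (equivalently, walks) are well defined. Concretely, $\ell_\alloc(i)$ is finite: any walk can be shortcut to a simple path without decreasing its weight, because every removed cycle has weight at most $0$.
\item Set $p_i \coloneqq \ell_\alloc(i)$. This is nonnegative because the length-$0$ path is allowed.
\item For any $i,j$, prepending the edge $(i,j)$ to a maximum-weight path from $j$ yields a walk from $i$ of weight $w_\alloc(i,j)+\ell_\alloc(j)$. By the shortcutting remark, this gives $\ell_\alloc(i)\ge w_\alloc(i,j)+\ell_\alloc(j)$.
\item Rearranging, $u_i(A_i)+p_i\ge u_i(A_j)+p_j$, so $(\alloc,\mathbf{p})$ is envy-free.
\end{itemize}

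The only delicate point is this walk-versus-path issue: prepending an edge can create a repeated vertex. It is handled exactly by the no-positive-cycle hypothesis, since removing the resulting cycle does not decrease the weight.
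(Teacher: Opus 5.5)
Your proof is correct and follows essentially the same route as the paper. The paper does not reprove the result: it cites Halpern and Shah, remarks that the characterization is insensitive to the signs of item values, and uses exactly your longest-path payments $p_i = \ell_{\mathcal{A}}(i)$ for the direction (c)$\Rightarrow$(a), while your explicit walk-versus-path shortcutting under the no-positive-cycle hypothesis is a correct detail the cited argument leaves implicit.
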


\paragraph{Optimal Subsidies for a Fixed Envy-Freeable Allocation}
When $\alloc$ is \EFable (equivalently, $\mathcal{G}_\alloc$ has no positive cycle),
\citet{HalpernSh19} show that the payment vector~$\mathbf{p}$ defined by
\[
p_i\ \coloneqq\ \ell_\alloc(i) \qquad \forall i \in N
\]
makes $(\alloc, \mathbf{p})$ envy-free; moreover, it is componentwise minimal among all non-negative payment vectors that make~$\alloc$ envy-free, and thus minimizes total subsidy among those payment vectors.

\subsection{Iterated Maximum-Weight Matching Procedures}
\label{subsec:IMWM-IMWPM}

We use two iterative matching procedures as subroutines.
In both procedures, the input consists of a set of agents~$N$ and a set of indivisible \emph{objects}~$J$ (items, meta-goods, meta-chores, and/or dummy objects).
Each object~$o \in J$ has a well-defined utility $u_i(o)$ for each agent~$i \in N$.
Both procedures return an allocation $\alloc = (A_1, \dots, A_n)$ of (a superset of) $J$ into~$n$ bundles, where each~$A_i$ is the multiset of objects assigned to agent~$i$ across rounds.\footnote{When dummy objects are used, they can be ignored in the induced allocation of the original objects.}
Below, we define \emph{bipartite matching graphs}: $G[N, J]$ and $H[N, J]$, which are unrelated to the envy graph $\mathcal{G}_\alloc$ in \Cref{def:envy-graph}.

\begin{definition}[Matching graphs]
\label{def:matching-graphs}
Fix a set of agents~$N$ and a set of indivisible \emph{objects}~$J$, together with utilities $(u_i)_{i \in N}$ over objects.

\begin{itemize}
\item \textbf{Goods-style graph.}
Let $G[N, J] = (N \cup J, E_G)$ be the \emph{complete} weighted bipartite graph where $E_G \coloneqq N \times J$,
and each edge $(i, o) \in E_G$ has weight $u_i(o)$.
For any $J' \subseteq J$, let $G[N, J']$ denote the subgraph of $G[N, J]$ induced by $N \cup J'$.

A \emph{maximum-weight matching} in $G[N, J']$ is a matching maximizing total edge weight, breaking ties in favour of maximum cardinality among all maximum-weight matchings.

\item \textbf{Chores-style graph.}
Let $H[N, J] = (N \cup J, E_H)$ be the \emph{complete} weighted bipartite graph where $E_H \coloneqq N \times J$, and each edge $(i, o) \in E_H$ has weight $u_i(o)$.
For any $J' \subseteq J$, let $H[N, J']$ denote the subgraph of $H[N, J]$ induced by $N \cup J'$.

A \emph{maximum-weight perfect matching} in $H[N, J']$ is a maximum-weight matching of size $|N|$ that matches every agent in~$N$ to exactly one distinct object in~$J'$ (it may leave objects unmatched when $|J'| > |N|$).
\end{itemize}
\end{definition}

Next, we introduce the two iterative matching procedures: \IMWM (Iterated Maximum-Weight Matching) and \IMWPM (Iterated Maximum-Weight Perfect Matching); see also \Cref{alg:IMWM,alg:IMWPM} for the pseudocode.

\begin{algorithm}[t]
\caption{\IMWM (Iterated Maximum-Weight Matching)}
\label{alg:IMWM}
\DontPrintSemicolon

\KwIn{Complete weighted bipartite graph $G[N, J]$ as in \Cref{def:matching-graphs}, where every object has a non-negative edge to some agent.}
\KwOut{Allocation $\alloc = (A_1, \dots, A_n)$.}

$A_i \gets \emptyset$ for all~$i \in N$.\;
$t \gets 1$; $J^1 \gets J$.\;
\While{$J^t \neq \emptyset$}{
	Compute a maximum-weight matching $\mu^t = \{(i, \mu_i^t)\}$ in $G[N, J^t]$.\;
	\lForEach{$i \in N$}{
			$A_i \gets A_i \cup \{\mu_i^t\}$ if $\mu_i^t \neq \emptyset$.
	}
	$J^{t+1} \gets J^t \setminus \{\mu_i^t \colon i \in N,\ \mu_i^t \neq \emptyset\}$.\;
	$t \gets t+1$.\;
}

\Return{$(A_1, \dots, A_n)$}
\end{algorithm}

\begin{definition}[\IMWM]
\label{def:IMWM}
Given a goods-style matching graph $G[N, J]$ (see Definition~\ref{def:matching-graphs}) in which every object has a non-negative edge to some agent, \IMWM (Algorithm~\ref{alg:IMWM}) allocates the objects in \emph{rounds}.
It maintains a set $J^t$ of objects not yet allocated.
In each round~$t$, it computes a maximum-weight matching in the induced subgraph $G[N, J^t]$, assigns each matched agent the (at most one) object she is matched to, removes all assigned objects from~$J^t$, and continues until no objects remain.

The output allocation $\alloc = (A_1, \dots, A_n)$ collects, for each agent~$i$, all objects assigned to~$i$ across rounds (possibly none).
Although $G$ is complete, a maximum-weight matching never contains a negative-weight edge, since deleting such an edge would strictly increase its weight. Hence \IMWM never assigns an object to an agent who values it negatively.
\end{definition}

\begin{algorithm}[t]
\caption{\IMWPM (Iterated Maximum-Weight Perfect Matching)}
\label{alg:IMWPM}
\DontPrintSemicolon

\KwIn{Complete weighted bipartite graph $H[N, J]$ as in \Cref{def:matching-graphs}.}
\KwOut{Allocation $\alloc = (A_1, \dots, A_n)$.}

If $|J|$ is not a multiple of~$n$, add dummy objects of value~$0$ to all agents so that $|J| = T \cdot n$ for some~$T \in \mathbb{N}$.\;
$A_i \gets \emptyset$ for all~$i \in N$.\;
$t \gets 1$; $J^1 \gets J$.\;
\While{$J^t \neq \emptyset$}{
	Compute a maximum-weight perfect matching $\mu^t = \{(i, \mu_i^t)\}$ in $H[N, J^t]$.\;
	\lForEach{$i \in N$}{
		$A_i \gets A_i \cup \{\mu_i^t\}$.
	}
	$J^{t+1} \gets J^t \setminus \{\mu_i^t \colon i \in N\}$.\;
	$t \gets t+1$.\;
}

\Return{$(A_1, \dots, A_n)$}
\end{algorithm}

\begin{definition}[\IMWPM]
\label{def:IMWPM}
Given a chores-style matching graph $H[N, J]$ (see Definition~\ref{def:matching-graphs}), \IMWPM (Algorithm~\ref{alg:IMWPM}) is the analogous iterated procedure based on \emph{maximum-weight perfect matchings}.
If $|J|$ is not a multiple of~$n$, it first adds dummy objects of value~$0$ for all agents so that $|J| = T \cdot n$ for some integer~$T$.
It then proceeds in rounds with a remaining set $J^t$: in each round~$t$, it computes a maximum-weight perfect matching in $H[N, J^t]$, assigns every agent exactly one object in that matching (possibly a dummy), removes the assigned objects from~$J^t$, and repeats until $J^t = \emptyset$.

The output allocation $\alloc = (A_1, \dots, A_n)$ is obtained by collecting each agent's assigned objects across rounds (ignoring dummy objects).
\end{definition}

We will use that the allocations returned by \IMWM and \IMWPM are \EFable; see the proof of \citet{BrustleDiNa20} for goods and the analogous argument for objective chores~\citep[e.g.,][]{AzizLuMa26}.

\subsection{Meta-Goods and Chore-Maximality}
\label{subsec:metagoods}

Finally, we introduce some important ingredients that are needed for our proof.

\begin{definition}[Meta-good]
A non-empty subset $M'\subseteq M$ is a \emph{meta-good} if $u_i(M')\ge 0$ for some agent $i\in N$.
\end{definition}

Note that a meta-good may be a singleton.
In our algorithms and arguments, we treat each meta-good as an individual item rather than as a set of items.
A natural restriction on meta-goods, when reasoning about subsidies is the following property:

\begin{definition}[Chore-maximality]
A non-empty subset $M '\subseteq M$ is \emph{chore-maximal} if it is a meta-good and, for every objective chore $c \in M \setminus M'$, we have $u_i(M' \cup \{c\}) < 0$ for all~$i \in N$.
\end{definition}

\section{Structural Properties of Iterated Maximum-Weight Matching}
\label{sec:structural-properties}

In this section, we introduce the key properties of \IMWM that are used throughout the paper.
Our structural insights not only provide a simpler proof of the result of \citet{BrustleDiNa20}, which shows that one dollar each suffices for objective-goods-only instances, but also allow for a generalization to instances with only subjective goods or only objective chores.

\subsection{Subjective-Goods-Only Instances}

It is worth noting that in this special case, we only require that each agent's utility for each item is upper bounded by~$1$, deviating from the common normalization of~$[-1, 1]$ stated in \Cref{sec:preliminaries}.
The generality of this result allows us to apply it directly to a subcase of the mixed-goods-and-chores setting in \Cref{sec:main:0-remaing-chores}.

\begin{proposition}
\label{prop:SG_only}
Suppose that for every item~$g \in M$ and every agent~$i \in N $, the utility satisfies $u_i(g) \in (-\infty, 1]$, and that for each item~$g$, there exists at least one agent with non-negative utility for~$g$.
Then, \IMWM produces an envy-free allocation $(\alloc, \mathbf{p})$ such that $0 \le p_i \le 1$ for all~$i \in N$.
\end{proposition}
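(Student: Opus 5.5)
The plan is to bound every envy path in $\mathcal{G}_\alloc$ by one, round by round, and then take the Halpern--Shah payments. We already know from \citet{BrustleDiNa20} (and the remark after \Cref{def:IMWPM}) that the output $\alloc$ of \IMWM is \EFable. So by the discussion following \Cref{thm:Halpern-Shah-EFable}, the payments $p_i \coloneqq \ell_\alloc(i) \ge 0$ make $(\alloc,\mathbf{p})$ envy-free. It therefore suffices to show that every directed path $P=(i_1,\dots,i_k)$ in $\mathcal{G}_\alloc$ satisfies $w_\alloc(P)\le 1$. If the argument below bounds every path, it will also bound every cycle by $0$; then \EFability follows directly from \Cref{thm:Halpern-Shah-EFable}(c), without citing it.

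\textbf{Setup.} Let $\mu^1,\dots,\mu^T$ be the matchings computed by \IMWM. Write $\mu^t_i$ for the object agent $i$ receives in round $t$, with the convention that an empty assignment has utility $0$ and $\mu^{T+1}_i=\emptyset$. Every object $\mu^t_i$ that is actually assigned satisfies $u_i(\mu^t_i)\ge 0$, because \IMWM never uses a negative edge (\Cref{def:IMWM}). Additivity then gives
\[
w_\alloc(P)=\sum_{t=1}^T X_t, \qquad X_t \coloneqq \sum_{s=1}^{k-1}\bigl(u_{i_s}(\mu^t_{i_{s+1}})-u_{i_s}(\mu^t_{i_s})\bigr).
\]

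\textbf{Per-round bound.} The key step is to show, for each round $t$,
\[
X_t \;\le\; u_{i_k}(\mu^t_{i_k}) - u_{i_k}(\mu^{t+1}_{i_k}).
\]
To get this, I modify $\mu^t$ into another matching in $G[N,J^t]$:
\begin{itemize}
\item each $i_s$ with $s<k$ receives $\mu^t_{i_{s+1}}$ (if nonempty);
\item $i_k$ receives $\mu^{t+1}_{i_k}$, which is still available in round $t$ because $J^{t+1}\subseteq J^t$ and it was not used in $\mu^t$;
\item the object $\mu^t_{i_1}$ is left unassigned;
\item all agents off the path keep their edges.
\end{itemize}
The objects used are distinct and all lie in $J^t$, so this is a valid matching. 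Since $\mu^t$ has maximum weight, the modified matching has weight at most that of $\mu^t$. Comparing the two weights gives exactly the displayed inequality. Empty assignments contribute $0$ and cause no trouble. Utilities above $1$ are never used, and negative utilities only appear on edges that \IMWM avoids, so the hypothesis $u_i(g)\in(-\infty,1]$ is all that is needed.

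\textbf{Telescoping.} Summing the per-round bound over $t=1,\dots,T$ gives
\[
w_\alloc(P)\le u_{i_k}(\mu^1_{i_k})-u_{i_k}(\mu^{T+1}_{i_k}) = u_{i_k}(\mu^1_{i_k})\le 1.
\]
This bound holds for every path, so $\ell_\alloc(i)\le 1$ for all $i$, and hence $0\le p_i\le 1$.

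The main obstacle is choosing the right exchange in round $t$. Comparing naively to the first round, or to the previous round's own item, leaves $k-1$ uncancelled terms of size up to $1$, which only gives a bound of $k-1$. The trick is to let the path's last agent absorb her own next-round object. This makes the per-round slack telescope along $i_k$'s own sequence of values, rather than accumulating along the path.
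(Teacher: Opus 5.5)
Your proof is correct and follows essentially the paper's argument: an exchange in each round along the path, in which the terminal agent absorbs an object from $J^{t+1}$, followed by telescoping over rounds. The only real difference is that you telescope along $i_k$'s own received values $u_{i_k}(\mu^t_{i_k})$ rather than the paper's best-available potential $F_t$, which removes the need for the step $u_k(\mu^t_k)\le F_t$. The paper also proves two things directly that you cite or leave implicit: that the allocation is envy-freeable (by a per-round permutation argument), and that every item gets allocated (using the non-negative-edge hypothesis and the cardinality tie-break).
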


\begin{proof}
Let $\alloc = (A_1, \dots, A_n)$ be the allocation returned by \IMWM, and let $J^t$ be the set of items available at the beginning of round~$t$.
For each agent~$i$, let $\mu_i^t$ be the item assigned to~$i$ in round~$t$; if~$i$ is unmatched, set $\mu_i^t = \emptyset$ and $u_i(\emptyset)=0$.
As observed above, no edge used by a maximum-weight matching has negative weight.
Moreover, the maximum-cardinality tie-break ensures that a nonempty set~$J^t$ containing a subjective good always yields a nonempty matching, so every item is eventually allocated.

We first show that~$\alloc$ is \EFable.
Fix a round~$t$ and a permutation~$\sigma$ of~$N$.
Because the goods-style graph is complete, reassigning $\mu_{\sigma(i)}^t$ to agent~$i$ for every~$i$ gives a feasible matching after empty assignments are ignored.
The optimality of~$\mu^t$ therefore implies
\[
\sum_{i \in N} u_i(\mu_i^t)
\ge
\sum_{i \in N} u_i(\mu_{\sigma(i)}^t).
\]
Summing over all rounds gives
\[
\sum_{i \in N}u_i(A_i)
\ge
\sum_{i \in N}u_i(A_{\sigma(i)}).
\]
By \Cref{thm:Halpern-Shah-EFable}, the allocation~$\alloc$ is \EFable.

It remains to bound the weight of every path in its envy graph.
Fix a directed path $P=(1,2,\dots,k)$ and let
\[
w_t(P) \coloneqq \sum_{i=1}^{k-1}
\left(u_i(\mu_{i+1}^t)-u_i(\mu_i^t)\right)
\]
be the contribution of round~$t$ to its weight.
For the terminal agent~$k$, define
\[
F_t \coloneqq \max\bigl(\{0\}\cup\{u_k(g):g\in J^t\}\bigr),
\]
and set $F_{T+1}=0$ after the final round.

In round~$t$, consider the alternative matching that assigns $\mu_{i+1}^t$ to agent~$i$ for each $i\in\{1,\dots,k-1\}$, leaves the original assignments of agents outside the path unchanged, and gives agent~$k$ an item in~$J^{t+1}$ of value~$F_{t+1}$ if this value is positive, leaving~$k$ unmatched otherwise.
This matching is feasible because the graph is complete and the additional item, when used, remains unallocated after round~$t$.
Maximum-weight optimality gives
\[
F_{t+1}-u_k(\mu_k^t)+w_t(P)\le 0.
\]
Since $u_k(\mu_k^t)\le F_t$, we obtain
\[
w_t(P)\le F_t-F_{t+1}.
\]
Summing over the rounds yields
\[
w_\alloc(P)=\sum_{t=1}^T w_t(P)
\le \sum_{t=1}^T(F_t-F_{t+1})
=F_1\le 1.
\]
Thus every directed path has weight at most~$1$.
The longest-path payment characterization of \citet{HalpernSh19} now gives an envy-free outcome with $0\le p_i\le 1$ for every agent~$i$.
\end{proof}

\subsection{Objective-Chores-Only Instances}

The proof for objective-chores only instances is similar to that of \Cref{prop:SG_only}, we thus only state the result below and defer its proof to  Appendix~\ref{app:prop32}.

\begin{proposition}
\label{prop:OC_only}
Suppose that every item~$g \in M$ is an objective chore and that, for every agent~$i \in N$, the utility satisfies $u_i(g) \geq -1$.
Then, \IMWPM produces an envy-free allocation $(\alloc, \mathbf{p})$ such that $0 \le p_i \le 1$ for all~$i \in N$.
\end{proposition}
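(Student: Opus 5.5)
We mirror the proof of \Cref{prop:SG_only}. Let $\alloc$ be the output of \IMWPM, with rounds $t=1,\dots,T$ after padding with dummies. In each round every agent $i$ receives exactly one object $\mu_i^t$, possibly a dummy. Dummies are worth $0$ to everyone, so ignoring them changes neither utilities nor envy-graph weights. We may therefore work with the padded allocation, in which $u_i(o)\in[-1,0]$ for every object $o$ and every agent $i$.

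\textbf{Envy-freeability.} Fix a permutation $\sigma$ and a round $t$. Reassigning $\mu_{\sigma(i)}^t$ to agent $i$ for every $i$ is again a perfect matching in $H[N,J^t]$. Optimality of $\mu^t$ gives
\[
\sum_i u_i(\mu_i^t)\ \ge\ \sum_i u_i(\mu^t_{\sigma(i)}).
\]
Summing over rounds and applying \Cref{thm:Halpern-Shah-EFable} shows that $\alloc$ is \EFable.

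\textbf{Path bound.} Fix a path $P=(1,\dots,k)$ with distinct agents, and let
\[
w_t(P)=\sum_{i<k}\bigl(u_i(\mu_{i+1}^t)-u_i(\mu_i^t)\bigr).
\]
The key point is that a perfect matching must give agent $k$ an object, so agent $k$ cannot simply be left unmatched as in the goods case. We instead exploit that later rounds still contain objects.

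For $t<T$, build the following alternative perfect matching on $J^t$:
\begin{itemize}
\item each agent $i<k$ takes $\mu_{i+1}^t$;
\item agent $k$ takes $\mu_k^{t+1}\in J^{t+1}\subseteq J^t$, which is unused in round $t$;
\item all other agents keep their round-$t$ objects.
\end{itemize}
This is feasible because the graph is complete and all assigned objects are distinct; the freed object $\mu_1^t$ is simply left unmatched. Optimality of $\mu^t$ yields
\[
w_t(P)\le u_k(\mu_k^t)-u_k(\mu_k^{t+1}).
\]
In the last round $T$ no spare object remains, so we close the cycle instead: agent $k$ takes the freed object $\mu_1^T$. This gives
\[
w_T(P)\le u_k(\mu_k^T)-u_k(\mu_1^T).
\]

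Summing these bounds over all rounds, the terms telescope:
\[
w_\alloc(P)\ \le\ u_k(\mu_k^1)-u_k(\mu_1^T)\ \le\ 0-(-1)\ =\ 1.
\]
Here we used that every object is an objective chore or a dummy, so $u_k(\cdot)\in[-1,0]$. If $T=1$, only the cycle bound is needed and the same conclusion holds.

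Hence every path in $\mathcal{G}_\alloc$ has weight at most $1$. The longest-path payments $p_i=\ell_\alloc(i)$ of \citet{HalpernSh19} then give an envy-free outcome with $0\le p_i\le 1$ for every agent $i$.

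The main obstacle is handling the perfect-matching constraint on the terminal agent $k$, which prevents the direct transfer of the goods argument. It is resolved by charging agent $k$ against her own next-round object and using the cycle swap only in the final round.
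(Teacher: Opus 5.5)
Your proof is correct and follows the paper's argument: you decompose the path weight round by round, use an alternative perfect matching that gives agent $k$ an object from $J^{t+1}$ for $t<T$, close the path into a cycle in the final round, and telescope. The only difference is that you charge agent $k$ against her own next-round object $\mu_k^{t+1}$ rather than her favourite object in $J^{t+1}$; this telescopes just as well, to $u_k(\mu_k^1)-u_k(\mu_1^T)\le 1$.
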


\section{Proof of the Main Result}

In this section, we prove \Cref{thm:main}, the main result of the paper, which states that as long as the marginal utility or the marginal disutility of each item is bounded above by one, there exists an envy-free allocation in which each agent is subsidized by at most one.
Since the main focus of this section is the existence result, we will defer the polynomial-time implementation of certain steps to \Cref{sec:polytime}.

Previously in \Cref{sec:structural-properties}, we exploited the structures of subjective-goods-only or objective-chores-only instances to guide the item allocation.
In this section, however, we no longer make any assumption about the structure of the set of agents with positive or negative utilities for an item.
As a result, no analogous structure is available in this general case.
Thus, the problem becomes substantially more difficult.

We now describe the high-level idea of our algorithm and proof.
We begin with an initial bundling that partitions the set of items~$M$ into objective chores (denoted as~$Z_{\mathrm{rem}}$) and subjective meta-goods (denoted as~$G$), each of which satisfies certain structural properties presented in the next subsection.
We then branch into three cases: (i) $|Z_{\mathrm{rem}}| = 0$, (ii) $0 < |Z_{\mathrm{rem}}| < |G|$, and (iii) $|Z_{\mathrm{rem}}| \geq |G|$.
The first case is relatively straightforward, as we can apply \IMWM (\Cref{alg:IMWM}) directly to this instance without any rebundling.
The second case is the most technically involved, as it requires both rebundling steps and a more intricate allocation procedure and analysis.
The final case can be transformed into an objective-chores-only instance, in which, with appropriate bundling, we can apply \IMWPM (\Cref{alg:IMWPM}).
We show that in each of these cases, at most one dollar per agent suffices to achieve envy-freeness, thereby establishing Theorem~\ref{thm:main}.

\subsection{Initial Bundling of Items}

\begin{algorithm}[t]
\caption{Conditional Pairwise Merging}
\label{alg:condpairwise}
\DontPrintSemicolon

\KwIn{Agents~$N$, indivisible items~$M$ (partitioned into subjective goods $U_{\mathrm{init}}$ and objective chores $Z_{\mathrm{init}}$), and agents' utilities~$(u_i)_{i \in N}$.}
\KwOut{A set of meta-goods $G = \{M_1, \dots, M_\ell\}$ and a set of residual objective chores~$Z_{\mathrm{rem}}$.}

Initialize $U \gets U_{\mathrm{init}}$ and $Z \gets Z_{\mathrm{init}}$.\;

\BlankLine
\tcp{Phase 1: Greedy Cancellation (Individual)}
\While(\tcp*[f]{Pre-process single items.}){$\exists g \in U, c \in Z, i \in N$ such that $u_i(g \cup \{c\}) \geq 0$}{
	$g \gets g \cup \{c\}$ \tcp*{Merge chore into good (Value $\le 1$).}
	$Z \gets Z \setminus \{c\}$\;
}

\BlankLine
\tcp{Phase 2: Pairwise Merging with Chore Maximality Check}
Construct the interest graph $R[N,U]=(N\cup U,E)$, where $(i,g)\in E$ if and only if $u_i(g)\geq 0$.\;
\While{$|Z| > 0$ \textup{\bfseries and} $\exists i \in N$ with $\deg(i) \geq 2$ in $R[N, U]$}{ \label{alg:condpairwise:loop}
	Select agent~$i \in N$ with $\deg(i) \geq 2$.\;
	Select two distinct items $g_1, g_2 \in \{g \in U \mid (i, g) \in E\}$.\;
	$S_{\mathrm{temp}} \gets g_1 \cup g_2$ \tcp*{Tentative merge.}

	\tcp{Check if new bundle is ``Too Strong'' (Not Chore-Maximal)}
	\eIf{$\exists c \in Z$ such that $u_j(S_{\mathrm{temp}} \cup \{c\}) \geq 0$ for some~$j \in N$}{ \label{alg:IfMerge}
		$g_{\mathrm{new}} \gets S_{\mathrm{temp}} \cup \{c\}$ \tcp*{Consume chore, cancelling out one of the goods.}
		$Z \gets Z \setminus \{c\}$\;
	}{
		$g_{\mathrm{new}} \gets S_{\mathrm{temp}}$ \label{alg:ElseMerge} \tcp*{Already maximal (dominated by chore $\le 1$).}
	}

	\lForEach{agent~$j \in N$}{
		Add an edge $(j, g_{\mathrm{new}})$ to~$E$ if $u_j(g_{\mathrm{new}}) \geq 0$.
	}
	Remove edges incident to $g_1, g_2$ from~$E$.\;
	Update $U \gets (U \setminus \{g_1, g_2\}) \cup \{g_{\mathrm{new}}\}$.\;
}

\Return{A set of meta-goods $G \gets U$ and residual chores~$Z_{\mathrm{rem}} \gets Z$}
\end{algorithm}

Our initial bundling of the items aims to partition~$M$ into a set of residual \emph{objective chores} $Z_{\mathrm{rem}}$ and a family of \emph{meta-goods} $G = \{M_1, \dots, M_\ell\}$, where each meta-good satisfies \choreMaximality with respect to $Z_{\mathrm{rem}}$.
When $Z_{\mathrm{rem}} \neq \emptyset$, we additionally enforce that every agent is interested in \emph{at most one} meta-good (equivalently, the interest sets of the meta-goods are pairwise disjoint).

\begin{definition}[Interest set]
\label{def:interest-set}
For any meta-good~$M_j \in \{M_1, \dots, M_\ell\}$, we define the \emph{interest set}~$T_j$ as the set of agents who derive non-negative utility from it:
\[
T_j \coloneqq \{i \in N \mid u_i(M_j) \geq 0\}.
\]
\end{definition}

We will build a bundling where the interest sets are disjoint via \Cref{alg:condpairwise}.
The algorithm is closely related to the iterative item-merging/bundling routine used in the study of envy-freeness for mixed resources (EFM)~\citep{AzizLuMa26}.
At a high level, both procedures repeatedly merge items into \emph{meta-goods} so as to separate agents' interest structure and to enforce a form of chore-maximality with respect to the remaining objective chores.
The key additional constraint in the subsidy setting is that every resulting meta-good must remain \emph{unit-bounded} for all interested agents, i.e., $u_i(M_j) \le 1$ for all~$i \in T_j$ (Property~\ref{prop:PTwo} in Lemma~\ref{lem:algoOneProperties}).
This upper bound is crucial later when we treat each meta-good as a single $[0, 1]$-valued object in the \IMWM analysis and bound envy-graph path weights.
Accordingly, Algorithm~\ref{alg:condpairwise} consumes objective chores during merging so as to preserve chore-maximality w.r.t.\ $Z_{\mathrm{rem}}$, which in turn implies the desired upper bound.

We now describe Algorithm~\ref{alg:condpairwise} in more detail.
Formally, let
\[
U_{\mathrm{init}} \coloneqq \{g \in M \colon \exists i \in N \text{ with } u_i(g) \ge 0\}
\qquad\text{and}\qquad
Z_{\mathrm{init}} \coloneqq M \setminus U_{\mathrm{init}}.
\]
Note that $Z_{\mathrm{init}}$ is exactly the set of \emph{objective chores} (items that are negative for all agents), and $U_{\mathrm{init}}$ contains all subjective goods (including objective goods).
Algorithm~\ref{alg:condpairwise} maintains a current set~$U$ of (meta-)goods (each element of~$U$ is a subset of original items) and a current set~$Z$ of remaining objective chores, initialized as $U \gets U_{\mathrm{init}}$ and $Z \gets Z_{\mathrm{init}}$.

A key book-keeping device is the bipartite \emph{interest graph} $R[N, U] = (N \cup U, E)$, where an edge $(i, g) \in E$ indicates that agent~$i$ values the (meta-)good $g \in U$ non-negatively, i.e., $u_i(g) \ge 0$.\footnote{The edge weights can be taken as $u_i(g)$, though Algorithm~\ref{alg:condpairwise} uses only adjacency and degrees.}
For~$i \in N$, let $\deg(i)$ denote the degree of~$i$ in~$R[N, U]$.

Algorithm~\ref{alg:condpairwise} has the following two phases.
\begin{itemize}
\item \textbf{Phase 1 (Greedy cancellation).}
As long as there exist $g \in U$, $c \in Z$, and an agent~$i \in N$ with $u_i(g \cup \{c\}) \ge 0$, we merge the chore into the good by setting $g \gets g \cup \{c\}$ and deleting~$c$ from~$Z$.
When this phase terminates, we obtain the invariant that for every remaining $g \in U$ and every remaining~$c \in Z$,
\[
u_i(g \cup \{c\}) \;=\; u_i(g) + u_i(c) \;<\; 0 \qquad \forall i \in N,
\]
i.e., no remaining objective chore can be added to any current (meta-)good without making the resulting bundle strictly negative for every agent.
This is the core ``chore-maximality'' condition we will use later.

\item \textbf{Phase 2 (Pairwise merging to enforce disjoint interest sets).}
While $Z \neq \emptyset$ and there exists an agent~$i$ with $\deg(i) \ge 2$ in $R[N, U]$, we pick two distinct neighbors $g_1, g_2 \in U$ of~$i$ and tentatively merge them into $S_{\mathrm{temp}} \gets g_1 \cup g_2$.
If this tentative merge is ``too strong'' in the sense that there exists a remaining chore~$c \in Z$ and an agent~$j\in N$ with $u_j(S_{\mathrm{temp}} \cup \{c\}) \ge 0$, we consume one such chore by setting $g_{\mathrm{new}} \gets S_{\mathrm{temp}} \cup \{c\}$ and removing $c$ from $Z$; otherwise we keep $g_{\mathrm{new}} \gets S_{\mathrm{temp}}$.
We then replace $g_1$ and $g_2$ in~$U$ by $g_{\mathrm{new}}$ and update the interest graph accordingly.
Each iteration strictly decreases~$|U|$, so Phase~2 terminates.
\end{itemize}

We summarize below in \Cref{lem:algoOneProperties} the properties of the output of Algorithm~\ref{alg:condpairwise} and defer its proof to Appendix~\ref{App:Algone}.

\begin{lemma}
\label{lem:algoOneProperties}
Let $G = \{M_1, \dots, M_\ell\}$ be the set of meta-goods and $Z_{\mathrm{rem}}$ the set of objective chores returned by \Cref{alg:condpairwise}.
Then, the following properties hold:
\begin{enumerate}[label=(P\arabic*),ref=(P\arabic*)]
\item\label{prop:POne}
\textbf{Disjoint interest sets.}
Either $Z_{\mathrm{rem}} = \emptyset$, or for any pair of meta-goods $M_i, M_j \in G$, their interest sets are disjoint, i.e., $T_i \cap T_j = \varnothing$ for all $i \neq j$.

\item\label{prop:PTwo}
\textbf{Upper bound on interested values.}
For each~$M_j \in G$ and each~$i \in T_j$, $0 \le u_i(M_j) \le 1$.

\item\label{prop:PThree}
\textbf{Chore-maximality w.r.t.\ residual chores.}
For each~$M_j \in G$, each $c \in Z_{\mathrm{rem}}$, and each agent~$i \in N$,
\[
u_i(M_j \cup \{c\}) = u_i(M_j) + u_i(c) < 0.
\]
\end{enumerate}
\end{lemma}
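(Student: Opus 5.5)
The plan is to verify the three properties separately, each by an invariant maintained through the two phases of \Cref{alg:condpairwise}. I will use two facts throughout: every $c\in Z$ is an objective chore, so $u_i(c)\in[-1,0)$ for all~$i$; and chores are only ever removed from~$Z$, so a condition quantified over all $c\in Z$ keeps holding when $Z$ shrinks. Termination is immediate: each Phase~1 iteration shrinks~$Z$, and each Phase~2 iteration shrinks~$|U|$.

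\textbf{Invariants.} I will show that the following hold from the end of Phase~1 onwards, where (I1) and (I2) are universal over all agents:
\begin{itemize}
\item[(I1)] $u_k(g)\le 1$ for every $k\in N$ and every $g\in U$;
\item[(I2)] $u_k(g)+u_k(c)<0$ for every $k\in N$, every $g\in U$ and every $c\in Z$;
\item[(I3)] every $g\in U$ is a meta-good.
\end{itemize}

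\emph{Phase~1.} Invariant (I1) holds initially because item values lie in $[-1,1]$. Adding a chore to a good only decreases every agent's value, so (I1) is preserved. The merge condition $u_i(g\cup\{c\})\ge 0$ for some~$i$ preserves (I3). Invariant (I2) is exactly the stopping condition of Phase~1, and the interest graph $R[N,U]$ is built only after this phase.

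\emph{Phase~2, (I1).} This is the step I expect to be the crux, since it is where the unit bound could fail; it is handled by using (I2) for the old bundles.
\begin{itemize}
\item In the \texttt{if}-branch, $u_k(S_{\mathrm{temp}}\cup\{c\}) = \bigl(u_k(g_1)+u_k(c)\bigr)+u_k(g_2) < 0+1 = 1$ by (I2) and (I1).
\item In the \texttt{else}-branch, the loop guard gives $Z\neq\emptyset$. Fix any $c\in Z$. The failure of the \texttt{if}-test gives $u_k(S_{\mathrm{temp}})+u_k(c)<0$ for every~$k$. Since $u_k(c)\ge -1$, this yields $u_k(S_{\mathrm{temp}})<1$.
\end{itemize}

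\emph{Phase~2, (I2).} Only the new bundle needs checking, because removing~$c$ from~$Z$ only helps.
\begin{itemize}
\item In the \texttt{else}-branch, (I2) for $g_{\mathrm{new}}$ is exactly the negated test.
\item In the \texttt{if}-branch, for any remaining $c'\in Z$ and any~$k$, we have $u_k(g_{\mathrm{new}}\cup\{c'\}) = \bigl(u_k(g_1)+u_k(c')\bigr)+\bigl(u_k(g_2)+u_k(c)\bigr)<0$, using (I2) twice.
\end{itemize}

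\emph{Phase~2, (I3).} In the \texttt{if}-branch, agent~$j$ witnesses $u_j(g_{\mathrm{new}})\ge 0$. In the \texttt{else}-branch, the selected agent~$i$ is adjacent to both $g_1$ and $g_2$, so $u_i(g_1)+u_i(g_2)\ge 0$.

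\textbf{Deriving the properties.} Property~\ref{prop:PTwo} follows from (I1) together with the definition of~$T_j$. Property~\ref{prop:PThree} is (I2) applied to the final sets $G=U$ and $Z_{\mathrm{rem}}=Z$. For Property~\ref{prop:POne}, observe that the edge updates keep $E$ equal to $\{(i,g): u_i(g)\ge 0\}$. Hence, when Phase~2 stops with $Z_{\mathrm{rem}}\neq\emptyset$, the loop guard forces $\deg(i)\le 1$ for every agent~$i$. This means no agent lies in two interest sets, so the interest sets are pairwise disjoint.
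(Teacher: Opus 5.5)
Your proof is correct and follows essentially the same route as the paper's: you establish the unit upper bound and chore-maximality at the end of Phase~1, show that both branches of Phase~2 preserve them via the same regroupings $\bigl(u_k(g_1)+u_k(c)\bigr)+u_k(g_2)$ and $\bigl(u_k(g_1)+u_k(c')\bigr)+\bigl(u_k(g_2)+u_k(c)\bigr)$, and read off disjointness from the loop guard. Your explicit invariant bookkeeping is slightly more careful than the paper's write-up but adds no new idea; it covers the case $Z=\emptyset$ when Phase~2 begins and uses the fact that $E$ always equals $\{(i,g): u_i(g)\ge 0\}$.
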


\subsection[Case I: Running Out of Chores]{Case I: Running Out of Chores ($|Z_{\mathrm{rem}}| = 0$)}
\label{sec:main:0-remaing-chores}

When $|Z_{\mathrm{rem}}| = 0$, Algorithm~\ref{alg:condpairwise} outputs only meta-goods $G = \{M_1, \dots, M_\ell\}$ (and no residual chores).
We therefore obtain an induced instance on the object set~$G$, where each meta-good~$M_j$ is treated as a single indivisible object with value $u_i(M_j)$ for agent~$i$ (well-defined by additivity).

We now verify that the hypotheses of \Cref{prop:SG_only} hold for this induced instance.
By \Cref{lem:algoOneProperties}, each meta-good~$M_j$ has a nonempty interest set $T_j = \{i \in N \colon u_i(M_j) \ge 0\}$; hence every object~$M_j$ is a subjective good (some agent values it non-negatively).
Moreover, for any~$M_j \in G$ and any agent~$i \in N$, we have $u_i(M_j) \le 1$: if $u_i(M_j) \ge 0$, then $i \in T_j$ and Property~\ref{prop:PTwo} of \Cref{lem:algoOneProperties} gives $u_i(M_j) \le 1$, while if $u_i(M_j) < 0$, then the inequality is trivial.
Thus, the induced instance satisfies the assumptions of \Cref{prop:SG_only}.

Therefore, running \IMWM on the meta-goods $G$ yields an allocation whose maximum
envy-eliminating subsidy is at most~$1$ per agent, completing Case~I.

\subsection[Case II: Sparse Chores]{Case II: Sparse Chores ($0<|Z_{\mathrm{rem}}|<|G|$)}
\label{subsec:sparse-chores}

We now turn to the algorithm and analysis of the case in which the number of residual objective chores~$Z_{\mathrm{rem}}$ is strictly less than the number of created meta-goods $G = \{M_1, \dots, M_\ell\}$ when \Cref{alg:condpairwise} terminates.
Recall that each~$c \in Z_{\mathrm{rem}}$ is an objective chore with $u_i(c) \in [-1, 0)$ for all agents~$i \in N$.
For ease of notation, let $k \coloneqq |Z_{\mathrm{rem}}|$.
We thus have $0 < k < \ell$.

\paragraph{Ordering and Relabeling Agents and Meta-Goods}
In order to better describe our algorithm, we first relabel agents and meta-goods.
For each meta-good~$M_j \in G$, let $v_j = \max_{i \in T_j} u_i (M_j)$ be the maximum value an agent has for~$M_j$.
Relabel the meta-goods so that
$$
1 \geq v_1 \geq v_2 \geq \cdots \geq v_k \geq v_{k+1} \geq \cdots \geq v_\ell \geq 0.
$$
Next, for each meta-good~$M_j$, relabel~$j \in N$ as the agent attaining this maximum, with arbitrary tie-breaking, i.e.,
$$
u_j(M_j) = v_j \qquad \forall j \in [\ell].
$$
Note that each such agent is from the interest set~$T_j$.
This relabelling is well-defined because these agents are all distinct: by Property~\ref{prop:POne} of \Cref{lem:algoOneProperties}, the interest sets $T_i$ and $T_j$ are disjoint for any $i \neq j$.

\begin{algorithm}[p]
\caption{Envy-Free Outcome for the Case of Sparse Chores}
\label{alg:sparse}
\DontPrintSemicolon
\small

\KwIn{Agents~$N$, meta-goods $G=\{M_1,\dots,M_\ell\}$, objective chores $Z_{\mathrm{rem}}=\{c_1,\dots,c_k\}$, and utilities $(u_i)_{i\in N}$.}
\KwOut{An envy-free allocation with payments $(\alloc,\mathbf p)$.}

\lForEach{$M_j\in G$}{$v_j\gets\max_{i\in T_j}u_i(M_j)$.}
Relabel the meta-goods so that $1\ge v_1\ge\cdots\ge v_\ell\ge0$, and relabel the agents so that $u_j(M_j)=v_j$ for every $j\in[\ell]$.\;

\BlankLine
\tcp{Construct the partial outcome for $\{M_1,\dots,M_k\}\cup Z_{\mathrm{rem}}$.}
$\lambda\gets v_{k+1}$; $A_i\gets\emptyset$ and $p_i\gets0$ for all $i\in N$; $\widetilde N\gets[k]$.\;
Allocate $Z_{\mathrm{rem}}$ to $\widetilde N$ using \IMWPM; let $c_i^*$ be the chore assigned to each $i\in\widetilde N$, and set $A_i\gets\{c_i^*\}$.\;
Compute, using \Cref{prop:OC_only}, a payment vector $\mathbf p^0$ that makes the chore allocation envy-free on $\widetilde N$.\;
$p^0_{\max}\gets\max_{i\in\widetilde N}p_i^0$.\;
\lForEach{$i\in\widetilde N$}{$p_i\gets p_i^0+(1-p^0_{\max})$; $A_i\gets A_i\cup M_i$.}
Replace $(p_i)_{i\in\widetilde N}$ by the payment vector returned by \Cref{lem:RestoreI2} for $(\alloc,\mathbf p)$ on $\widetilde N$.\;

\BlankLine
\tcp{Add the remaining agents one at a time.}
\While{$\widetilde N\neq N$}{
    Pick any $x\in N\setminus\widetilde N$.\;
    $\widetilde N^+\gets\widetilde N\cup\{x\}$; $(\alloc^+,\mathbf p^+)\gets(\alloc,\mathbf p)$; $(A_x^+,p_x^+)\gets(\emptyset,\lambda)$.\;
    \If{$u_x(A_x^+)+p_x^+<\max_{j\in\widetilde N}\bigl(u_x(A_j)+p_j\bigr)$}{
        Choose $i_1\in\argmax_{j\in\widetilde N}\bigl(u_x(A_j)+p_j\bigr)$.\;
        Choose a directed equality path $i_1\to i_2\to\cdots\to i_s$ in $\mathcal H(\alloc,\mathbf p)$ with $i_s\in S(\lambda,\mathbf p)$.\;
        $(A_x^+,p_x^+)\gets(A_{i_1},p_{i_1})$.\;
        \lForEach{$r\in\{1,\dots,s-1\}$}{$(A_{i_r}^+,p_{i_r}^+)\gets(A_{i_{r+1}},p_{i_{r+1}})$.}
        $(A_{i_s}^+,p_{i_s}^+)\gets(\emptyset,\lambda)$.\;
        Replace $(p_i^+)_{i\in\widetilde N^+}$ by the payment vector returned by \Cref{lem:RestoreI2} for $(\alloc^+,\mathbf p^+)$ on $\widetilde N^+$.\;
    }
    $(\widetilde N,\alloc,\mathbf p)\gets(\widetilde N^+,\alloc^+,\mathbf p^+)$ \tcp*{Commit the candidate state.}
}

\BlankLine
\tcp{Allocate the remaining meta-goods.}
\lForEach{$j\in\{k+1,\dots,\ell\}$}{$A_j\gets A_j\cup M_j$; $p_j\gets p_j-v_j$.}

\Return{$(\alloc,\mathbf p)$}\;
\end{algorithm}

\paragraph{Overview of the Algorithm}
We are now ready to present the high-level idea of our algorithm, which consists of two stages; the full procedure appears in \Cref{alg:sparse}.
\begin{itemize}
\item In the first stage of \Cref{alg:sparse}, we allocate the first $k$ meta-goods $\{M_j\}_{j \in [k]}$ and all residual objective chores $Z_{\mathrm{rem}} = \{c_1, \dots, c_k\}$.
Allocating these items constitutes the main obstacle to allocating all items while maintaining bounds on individual subsidies.

Since the number of objective chores is less than the number of meta-goods, there must exist some agent~$i$ who gets her meta-good (assuming~$M_j$) without being allocated to any objective chores from~$Z_{\mathrm{rem}}$.
Envy-freeness between agents in the interest set~$T_j$ suggests that each agent in~$T_j \setminus \{i\}$ must receive a positive subsidy payment to eliminate their envy towards agent~$i$.
Next, in order to ensure envy-freeness from agents outside~$T_j$ to agents within~$T_j$, we need to carefully allocate the meta-goods, the objective chores, and subsidy payments.

Our goal at this stage of the algorithm is to obtain an allocation $\alloc$ of $\{M_j\}_{j \in [k]}$ and $Z_{\mathrm{rem}}$ as well as a payment vector $\mathbf{p}$ such that the following two properties hold:
\begin{itemize}
\item The outcome $(\alloc, \mathbf{p})$ is envy-free.

\item For each~$i \in N$, we have $v_{k+1} \leq p_i \leq 1$.

(Recall that $v_{k+1} = u_{k+1}(M_{k+1})$.
This property will be useful for the next stage of the algorithm when we allocate the remaining meta-goods $\{M_{k+1}, \dots, M_\ell\}$.)
\end{itemize}

\item The final loop of \Cref{alg:sparse} implements the straightforward second stage.
For each~$j \in \{k+1, \dots, \ell\}$, we give the meta-good~$M_j$ to agent~$j$ and deduct $u_j(M_j)$ amount of subsidy from $p_j$.
It can be seen easily that the updated outcome is envy-free and moreover, each agent receives a subsidy payment of at most~$1$.

Note that $u_j(M_j)$ can be as large as $v_{k+1} = u_{k+1}(M_{k+1})$.
Therefore, $p_j$ must be sufficiently large to ensure that the deduction operation is valid.
This explains the reason why we place a lower bound on the subsidy payment each agent receives in the previous stage of the algorithm.
\end{itemize}

\subsubsection[Allocation of the First Meta-Goods and Residual Chores]{Allocation of $\{M_i\}_{i \in [k]}$ and $Z_{\mathrm{rem}}$}

We now allocate the first $k$ meta-goods and the objective chores $Z_{\mathrm{rem}}=\{c_1,\dots,c_k\}$, where $k<\ell$.
Our goal is to construct an envy-free outcome $(\alloc,\mathbf{p})$ satisfying $v_{k+1}\le p_i\le 1$ for every agent~$i\in N$.
For ease of notation, let us define the following utility threshold:
$$
\lambda \coloneqq v_{k+1}.
$$
We have $0 \leq \lambda < 1$.
Indeed, for any residual chore~$c \in Z_{\mathrm{rem}}$, chore-maximality gives
\[
\lambda + u_{k+1}(c) = u_{k+1}(M_{k+1} \cup \{c\}) < 0,
\]
while $u_{k+1}(c) \geq -1$.
We have $u_i(M_i) \geq \lambda$ for all agents~$i \in [k]$.
Moreover, for any meta-good~$M_j$ with $k+1 \leq j \leq \ell$, every agent~$r \in N$ values~$M_j$ at most~$\lambda$, i.e., $u_r(M_j) \leq \lambda$.

The initialization block and inductive loop of \Cref{alg:sparse} construct the partial outcome $(\alloc, \mathbf{p})$ incrementally.
Starting from the initial set of active agents~$\widetilde{N} = \{1, \dots, k\}$, we allocate $\{M_i\}_{i \in [k]}$ and $Z_{\mathrm{rem}}$ to the active agents in an envy-free manner.
We then iteratively add agents who are not yet active to $\widetilde{N}$, while maintaining the following two invariants at the beginning and end of each iteration:
\begin{enumerate}[label=\textbf{Invariant (I\arabic*):},ref=(I\arabic*),leftmargin=*]
\item\label{invariant:EF}
The allocation $(\alloc, \mathbf{p})$ is envy-free among agents in~$\widetilde{N}$.
Moreover, for every~$i \in \widetilde{N}$,
$$
u_i(A_i) + p_i \geq \lambda, \qquad
u_j(A_i) \leq 0 \text{ for all } j \in N, \quad \text{ and } \quad
\lambda \leq p_i \leq 1.
$$

\item\label{invariant:path}
Every agent in~$\widetilde{N}$ has a directed equality path, in the directed equality graph $\mathcal{H}(\alloc, \mathbf{p})$, to some agent belonging to the following non-empty set
$$
S(\lambda, \mathbf{p}) = \{i \in \widetilde{N} \mid u_i(A_i) + p_i = \lambda\}.
$$
\end{enumerate}

We show that, as long as these invariants hold, it is always possible to add a new agent and restore the invariants if they are broken.
The high-level idea is that we will maintain the lower bound~$\lambda$ on the compensated utility $u_i(A_i) + p_i$ of active agents, while modifying payments and reallocating bundles as new agents are introduced.
Throughout the execution of this process, the set of items $\{M_j\}_{j \in [k]}$ and $Z_{\mathrm{rem}}$ remain allocated; only the set of active agents grows.

\paragraph{Restoring Invariants}
We isolate a technical lemma which helps us restore invariant~\ref{invariant:path} whenever invariant~\ref{invariant:EF} holds and will be used repeatedly in the analysis.
It is worth noting that this can always be enforced by only decreasing payments.
To do so, we need a structural property of envy-free outcomes: agents whose compensated utility exceeds~$\lambda$ can be connected, via tight envy constraints, to agents who attain the bound exactly.

We are now ready to introduce the \emph{directed equality graph}, which captures tight envy constraints between active agents~$\widetilde{N}$.
Note that directed equality graphs are unrelated to the envy graph~$\mathcal{G}_\alloc$ or matching graphs defined in \Cref{def:envy-graph,def:matching-graphs}, respectively.

\begin{definition}[Directed equality graph]
\label{def:equiv-graph}
Given an outcome $(\alloc, \mathbf{p})$ that is envy-free among active agents~$\widetilde{N}$, its \emph{directed equality graph} is the directed graph $\mathcal{H}(\alloc, \mathbf{p})$ on vertex set~$\widetilde{N}$ containing an \emph{equality edge} $i \to j$ whenever agent~$i$ is indifferent between her own bundle and $j$'s bundle under the payments, i.e.,
\[
u_i(A_i) + p_i = u_i(A_j) + p_j.
\]
A \emph{directed equality path} is a directed path in $\mathcal{H}(\alloc, \mathbf{p})$.
\end{definition}

Starting from any envy-free outcome in which each agent's bundle has non-positive utility and compensated utility at least~$\lambda$, \Cref{lem:RestoreI2} below constructs a new envy-free payment vector which never increases any payment and never drops below~$\lambda$, such that every agent has a directed equality path to an agent whose compensated utility is exactly~$\lambda$.
Since later we apply this adjustment to varying subsets of agents, we state the lemma for an arbitrary subset $\widetilde{N} \subseteq N$.

\begin{lemma}
\label{lem:RestoreI2}
Let $0 \leq \lambda <1$.
Consider a set of agents~$\widetilde{N} \subseteq N$ and an outcome $(\alloc, \mathbf{p})$ that is envy-free among agents in~$\widetilde{N}$; moreover, for all~$i \in \widetilde{N}$, we have
\[
u_i(A_i) + p_i \geq \lambda, \qquad  u_i(A_i) \leq 0, \quad \text{ and } \quad  \lambda \leq p_i \leq 1.
\]
Then, an alternative payment vector~$\mathbf{p}'$, with $\lambda \leq p'_i \leq p_i$ for all~$i \in \widetilde{N}$, can be computed in polynomial time such that the following properties hold.
\begin{itemize}
\item Outcome $(\alloc, \mathbf{p}')$ is envy-free among agents in~$\widetilde{N}$.

\item The set $S(\lambda, \mathbf{p}') \coloneqq \{i \in \widetilde{N} \mid u_i(A_i) + p'_i = \lambda\}$ is non-empty.
Moreover, in the directed equality graph $\mathcal{H}(\alloc, \mathbf{p}')$, every agent in~$\widetilde{N}$ has a directed equality path to some agent in~$S(\lambda, \mathbf{p}')$.

\item For all~$i \in \widetilde{N}$, $u_i(A_i) + p'_i \geq \lambda$.
\end{itemize}
\end{lemma}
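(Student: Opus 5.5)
The plan is a \emph{uniform payment-reduction} procedure in the spirit of equilibrium price-descent arguments. Write $x_i \coloneqq u_i(A_i)+p_i$ for the compensated utility of $i\in\widetilde N$. A first remark removes one of the three constraints. Since $u_i(A_i)\le 0$, any payment vector $\mathbf p'$ with $u_i(A_i)+p'_i\ge\lambda$ automatically satisfies $p'_i\ge \lambda-u_i(A_i)\ge\lambda$. It therefore suffices to keep envy-freeness on $\widetilde N$ and the bound $x'_i\ge\lambda$, while only ever decreasing payments. Decreasing payments also preserves $p'_i\le p_i\le 1$.

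Given the current payments, let $S\coloneqq S(\lambda,\mathbf p)$, and let $R\subseteq\widetilde N$ be the set of agents that have a directed equality path in $\mathcal H(\alloc,\mathbf p)$ to some agent of $S$. In particular $S\subseteq R$, and $R$ may be empty. If $R=\widetilde N$, stop. Otherwise let $W\coloneqq\widetilde N\setminus R$ and lower $p_i$ by a common amount $\delta\ge 0$ for every $i\in W$. The effect on the constraints is as follows.
\begin{itemize}
\item Envy constraints inside $W$ and inside $R$ are unaffected, since both sides move together or neither moves.
\item Constraints from $R$ to $W$ gain slack.
\item A constraint from $i\in W$ to $j\in R$ loses exactly $\delta$ of slack.
\item The bound $x_i\ge\lambda$ for $i\in W$ loses $\delta$ of slack.
\end{itemize}
By the definition of $R$, no $i\in W$ lies in $S$ and no $i\in W$ has an equality edge into $R$; otherwise $i$ would itself reach $S$. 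So
\[
\delta\coloneqq\min\Bigl(\min_{i\in W}(x_i-\lambda),\ \min_{i\in W,\,j\in R}\bigl(x_i-u_i(A_j)-p_j\bigr)\Bigr)
\]
is strictly positive. It is also finite, because $W\neq\emptyset$ makes the first term finite. Choose exactly this $\delta$.

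After the update, envy-freeness and $x_i\ge\lambda$ still hold, and some $i\in W$ becomes tight. Either $x_i=\lambda$, so $i$ joins $S$, or $i$ acquires an equality edge to an agent of $R$; in both cases $i$ enters the new $R$. Moreover, every agent previously in $R$ remains in $R$. Its equality path to $S$ lies entirely inside $R$, because every vertex on such a path itself reaches $S$. Payments and bundles of agents in $R$ are unchanged, so all of these equality edges and the $S$-memberships survive. Hence $|R|$ strictly increases each round. After at most $|\widetilde N|$ rounds, each requiring only $O(n^2)$ arithmetic, we reach $R=\widetilde N$. At that point $S(\lambda,\mathbf p')$ is nonempty, since $R\neq\emptyset$ forces $S\neq\emptyset$, and every agent has an equality path into it.

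The step requiring the most care is the monotonicity of $R$, namely that equality paths to $S$ are never destroyed. This holds because those paths never pass through $W$, and the only equality edges that can disappear are edges from $R$ into $W$, which are irrelevant to reaching $S$. The remaining checks are routine bookkeeping.
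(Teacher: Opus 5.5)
Your proposal is correct and follows the paper's approach: you repeatedly lower, by a common amount, the payments of the agents that cannot yet reach $S(\lambda,\mathbf p)$ until one of them hits $\lambda$ or gains an equality edge into $R$, and you use $u_i(A_i)\le 0$ to get $p'_i\ge\lambda$ at the end. The only differences are cosmetic: the paper's separate preprocessing step for $S(\lambda,\mathbf p)=\emptyset$ becomes the $R=\emptyset$ case of your explicit $\delta$, and you spell out why $R$ grows monotonically (equality paths to $S$ stay inside $R$), which the paper only states briefly.
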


\begin{proof}
Let $S(\lambda, \mathbf{p}) \coloneqq \{i \in \widetilde{N} \mid u_i(A_i) + p_i = \lambda\}$ denote the set of agents who are getting exactly~$\lambda$.
We first check whether $S(\lambda, \mathbf{p})$ is empty.
If $S(\lambda, \mathbf{p}) = \emptyset$, we uniformly decrease the payments of all agents until the first time some agent satisfies $u_i(A_i) + p_i = \lambda$. Since payments are decreased uniformly among all agents, envy-freeness is preserved.
Thus, without loss of generality, we may assume that $|S(\lambda, \mathbf{p})| \ge 1$.
Note that property $u_i(A_i) + p_i \geq \lambda$ for all $i\in \widetilde{N}$ is maintained by this preprocessing step, since payments are decreased only until the first agent reaches equality.

Let $R \subseteq \widetilde{N}$ denote the set of agents that can reach some agent in $S(\lambda, \mathbf{p})$ via a directed equality path in the equality graph $\mathcal{H}(\alloc, \mathbf{p})$, and let $T \coloneqq \widetilde{N} \setminus R$.
By definition, no agent in~$T$ has a directed equality path to any agent in~$R$.
Moreover, since $(\alloc, \mathbf{p})$ is envy-free, agents in~$T$ do not envy agents in~$R$.

We now uniformly decrease the payments of all agents in~$T$ 
until the first moment at which either an agent in $T$ reaches total value $\lambda$ or an equality edge appears from an agent in $T$ to an agent in $R$.
This operation preserves envy-freeness.
Envy relations among agents in~$T$ are unaffected by a uniform payment decrease, and agents in~$R$ continue not to envy agents in~$T$ since only the payments of agents in~$T$ are reduced.
Furthermore, agents in~$T$ do not envy agents in~$R$: they were initially envy-free towards~$R$, and the payments in~$T$ are decreased only until the first equality edge to an agent in~$R$ appears.
Hence, envy-freeness is preserved throughout the process.

At this point, one of the two stopping events enlarges~$R$.
If an equality edge appears from an agent in~$T$ to an agent in~$R$, then the former agent obtains a directed equality path to~$S(\lambda,\mathbf{p})$.
If instead an agent in~$T$ reaches compensated utility~$\lambda$, then that agent joins~$S(\lambda,\mathbf{p})$ and hence belongs to~$R$ via a path of length zero.
In either case, the size of~$T$ strictly decreases.
Repeating this procedure at most $|\widetilde{N}|$ times yields $T = \varnothing$; that is, every agent has a directed equality path to an agent receiving exactly~$\lambda$, and throughout the process the payments are weakly decreased while envy-freeness is preserved.
Let $\mathbf{p}'$ be the vector of payments constructed, at the end of this process.
By construction, $S(\lambda, \mathbf{p}')$ is non-empty and
every agent has directed equality path in $\mathcal{H}(\alloc, \mathbf{p}')$ to some agent in $S(\lambda, \mathbf{p}')$.

Note that every agent~$i \in \widetilde{N}$ satisfies $u_i(A_i) + p'_i \geq \lambda$.
This holds because the inequality $u_i(A_i) + p_i \geq \lambda$ was satisfied initially.
During the procedure, we decrease an agent's payment only if the agent does not have a directed equality path to any agent receiving a total value of exactly~$\lambda$.
Once such a path appears, the agent's payment no longer decreases.
In particular, if at any point $u_i(A_i) + p_i = \lambda$, then agent~$i$ itself joins~$S(\lambda,\mathbf{p})$ and therefore~$R$.
From that point onward, the agent's payment is never decreased further.
Therefore, the inequality $u_i(A_i) + p'_i \geq \lambda$ is maintained for all agents.

Since each agent's payment is only weakly decreased, we have $p'_i \leq p_i$ for all~$i \in \widetilde{N}$.
Furthermore, because $u_i(A_i) \leq 0$ and $u_i(A_i) + p'_i \geq \lambda$, we have $p'_i \geq \lambda$ for each~$i \in \widetilde{N}$.

\medskip
\noindent\textbf{Polynomial-time Computation.}  
We now explain why the above payment vector $\mathbf{p}'$ can be computed in polynomial time.
The key point is that the set~$R$ grows monotonically.
Once an agent belongs to~$R$, her payment is never decreased again, and the equality path witnessing that she can reach an agent in~$S(\lambda,\mathbf{p})$ is not destroyed by later steps.
In each round with~$T \neq \varnothing$, we decrease only the payments of agents in~$T$ until either an agent from~$T$ reaches compensated utility~$\lambda$ or obtains an equality edge to~$R$, and therefore joins~$R$.
Thus, the set~$R$ strictly increases in every round and can increase at most~$|N|$ times.
Each round only requires constructing the directed equality graph, computing the agents that can reach~$S(\lambda,\mathbf{p})$, and performing the corresponding payment decrease, all of which can be done from the current utilities and payments in polynomial time.
Hence the desired payment vector~$\mathbf{p}'$ can be found in polynomial time.
\end{proof}

With \Cref{lem:RestoreI2} in hand, we proceed with the following observation.

\begin{observation}
\label{obs:restore-I2}
Suppose invariant~\ref{invariant:EF} holds but invariant~\ref{invariant:path} is violated for some allocation
$(\alloc, \mathbf{p})$.
Then, \Cref{lem:RestoreI2} can be applied to restore invariant~\ref{invariant:path}.
Since \Cref{lem:RestoreI2} does not modify the allocation~$\alloc$, the property $u_j(A_i) \leq 0$ for all~$j \in N$ is preserved, and invariant~\ref{invariant:EF} continues to hold.
Consequently, there exists an alternative payment vector that restores~\ref{invariant:path} while preserving~\ref{invariant:EF}, and thus satisfies both invariants simultaneously.
\end{observation}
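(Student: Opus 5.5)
The plan is to verify that \Cref{lem:RestoreI2} applies under the hypotheses of the observation, and then to check that each clause of invariant~\ref{invariant:EF} survives the payment change. First, \ref{invariant:EF} supplies everything \Cref{lem:RestoreI2} needs. It gives envy-freeness among $\widetilde N$, the bound $u_i(A_i)+p_i\ge\lambda$, and $\lambda\le p_i\le 1$ for every $i\in\widetilde N$. The bound $u_i(A_i)\le 0$ follows by specializing the clause $u_j(A_i)\le 0$ for all $j\in N$ to $j=i$. We also have $0\le\lambda<1$, since $\lambda=v_{k+1}$ and this was shown when $\lambda$ was introduced, using chore-maximality (Property~\ref{prop:PThree}) together with $u_{k+1}(c)\ge -1$. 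So \Cref{lem:RestoreI2} can be invoked on $(\alloc,\mathbf p)$ with the active set $\widetilde N$. It returns $\mathbf p'$, computed in polynomial time, with $\lambda\le p'_i\le p_i$.

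Next, I would check both invariants for $(\alloc,\mathbf p')$. The lemma directly yields envy-freeness among $\widetilde N$ and $u_i(A_i)+p'_i\ge\lambda$. It also gives $p'_i\ge\lambda$, and $p'_i\le p_i\le 1$ follows because the lemma never increases a payment. The clause $u_j(A_i)\le 0$ for all $j\in N$ concerns only the allocation, and \Cref{lem:RestoreI2} modifies only payments, so this clause carries over verbatim. Hence \ref{invariant:EF} holds for $(\alloc,\mathbf p')$. Finally, the lemma guarantees that $S(\lambda,\mathbf p')$ is nonempty and that every active agent has a directed equality path in $\mathcal H(\alloc,\mathbf p')$ into it, which is precisely \ref{invariant:path}.

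There is no real obstacle in this argument. The only point that needs care is that \Cref{lem:RestoreI2} asks for $u_i(A_i)\le 0$ for the agent's own bundle, and we must obtain it from the stronger cross-agent clause in \ref{invariant:EF}. We also need $p'_i\le 1$, and that comes from the monotonicity $p'_i\le p_i$ rather than from the lemma's lower bound.
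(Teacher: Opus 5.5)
Your proposal is correct and matches the paper's argument: \ref{invariant:EF} supplies every hypothesis of \Cref{lem:RestoreI2}, including $u_i(A_i)\le 0$ (the case $j=i$) and $0\le\lambda<1$. Because the lemma changes only payments and gives $\lambda\le p'_i\le p_i\le 1$, both invariants hold for $(\alloc,\mathbf p')$. You are in fact more explicit than the paper, which leaves the upper bound $p'_i\le 1$ and the condition $\lambda<1$ implicit.
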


We are now ready to prove the correctness of the first stage of \Cref{alg:sparse}, which constructs the desired allocation of meta-goods $\{M_1, \dots, M_k\}$ and objective chores $Z_{\mathrm{rem}}$, together with subsidy payments to the agents.

\begin{proposition}
\label{Prop:Partial}
There exists an envy-free allocation $(\alloc, \mathbf{p})$ of the set of items $\{M_i\}_{i \in [k]}$ and $Z_{\mathrm{rem}}$ such that $\lambda \leq p_i \leq 1$ for all~$i \in N$.
Moreover, such an allocation can be found in polynomial time.
\end{proposition}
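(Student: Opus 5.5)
The proof follows \Cref{alg:sparse}. It first establishes invariants \ref{invariant:EF} and \ref{invariant:path} on the initial active set $\widetilde N=[k]$. It then shows that one iteration of the while-loop preserves both invariants on $\widetilde N^+=\widetilde N\cup\{x\}$. At termination $\widetilde N=N$, so \ref{invariant:EF} gives envy-freeness and $\lambda\le p_i\le 1$ for all $i\in N$. Polynomial time follows because there are at most $n$ iterations, and each iteration consists of an \IMWPM call, equality-graph reachability, and an application of \Cref{lem:RestoreI2}.

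\textbf{Initialization.} \IMWPM on $[k]$ and $Z_{\mathrm{rem}}$ gives each $i\in[k]$ exactly one chore $c_i^*$. By \Cref{prop:OC_only}, this chore allocation is envy-free under payments $\mathbf p^0\in[0,1]^k$. Shifting all payments by $1-p^0_{\max}$ keeps envy-freeness and yields payments in $[1-p^0_{\max},1]$, with the maximum equal to $1$.

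Next we add $M_i$ to agent $i$. Agent $i$ now gains $u_i(M_i)\ge\lambda$ on her own bundle. For any other $j\in[k]$, $i\notin T_j$ by \ref{prop:POne}, so $u_i(M_j)<0$ and adding $M_j$ only lowers $i$'s value for $A_j$. Hence envy-freeness is preserved.

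By \ref{prop:PThree}, $u_r(M_i\cup\{c_i^*\})<0$ for every $r\in N$, so $u_j(A_i)\le 0$ for all $j$. For compensated utility, agent $k$ (or whichever agent holds the payment $1$) has $u_i(A_i)+p_i\ge -1+\lambda+1=\lambda$, using $u_i(c)\ge -1$ and $u_i(M_i)\ge\lambda$. I would try to show that every agent gets at least $\lambda$, arguing via the envy-free condition toward the agent with payment $1$; this part needs care. If it fails, I would instead lower payments only to the extent envy-freeness allows.

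Once $u_i(A_i)+p_i\ge\lambda$ holds for all $i$, $u_i(A_i)\le 0$ forces $p_i\ge\lambda$. \Cref{lem:RestoreI2} then yields \ref{invariant:path} while keeping \ref{invariant:EF}.

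\textbf{Inductive step.} Tentatively give $x$ the pair $(\emptyset,\lambda)$. By \ref{invariant:EF}, $u_i(A_i)+p_i\ge\lambda=u_i(\emptyset)+\lambda$, so no active agent envies $x$. If $x$ envies nobody, all invariants hold and \Cref{lem:RestoreI2} handles \ref{invariant:path}.

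Otherwise, let $i_1$ be $x$'s favourite bundle-payment pair and take an equality path $i_1\to\dots\to i_s$ with $i_s\in S(\lambda,\mathbf p)$. Shift the pairs along the path, give $x$ agent $i_1$'s pair, and give $i_s$ the pair $(\emptyset,\lambda)$.

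The key checks are as follows.
\begin{enumerate}
\item Each $i_r$ with $r<s$ is indifferent between her old pair and $i_{r+1}$'s pair, so her compensated utility is unchanged and she envies no pair. The pairs present are the same multiset as before plus $(\emptyset,\lambda)$, and $i_r$ does not envy $(\emptyset,\lambda)$ because her utility is at least $\lambda$.
\item Agent $i_s$ had utility exactly $\lambda$, which equals her new value $u_{i_s}(\emptyset)+\lambda$, so she also envies nobody.
\item Agents off the path see the same set of pairs as before, plus $(\emptyset,\lambda)$, which they do not envy.
\item Agent $x$ obtains her maximum pair, which by assumption is at least $\lambda$.
\end{enumerate}
All bundles still satisfy $u_j(A_i)\le 0$, and all payments lie in $[\lambda,1]$. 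Applying \Cref{lem:RestoreI2} then restores \ref{invariant:path}.

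\textbf{Main obstacle.} I expect the main obstacle to be the initialization bound $u_i(A_i)+p_i\ge\lambda$ for all $i\in[k]$, since the shifted chore payments might leave some agent strictly below $\lambda$. My first attempt would be to use that every agent weakly prefers her own pair to the maximal-payment pair, and to lower-bound that pair's value from her perspective using \ref{prop:PThree} together with $u_i(M_i)\ge\lambda$.
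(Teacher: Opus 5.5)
Your inductive step is correct and matches the paper's argument: the rotated pairs are the old multiset of pairs plus $(\emptyset,\lambda)$, so every check goes through, and \Cref{lem:RestoreI2} then restores the path invariant. The genuine gap is the one you flag yourself. You never establish $u_i(A_i)+p_i\ge\lambda$ for all $i\in[k]$ at initialization. Without it, \ref{invariant:EF} never holds and the induction cannot start.

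The route you sketch fails. Suppose you compare $i$'s pair with the payment-$1$ pair after the meta-goods are added. You get $u_i(A_i)+p_i\ge u_i(c_r^*)+u_i(M_r)+1$. For $i\ne r$, the term $u_i(M_r)$ is negative and has no lower bound, since a meta-good is an arbitrary bundle. Property \ref{prop:PThree} only gives the upper bound $u_i(M_r\cup\{c_r^*\})<0$, which is useless here. Your fallback of lowering payments moves compensated utilities in the wrong direction.

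The missing idea is to use envy-freeness of the shifted chore-only allocation, before any meta-good is added. Let $r$ be the agent with shifted payment $p_r=1$. Then for every $i\in[k]$,
\[
u_i(c_i^*)+p_i\ \ge\ u_i(c_r^*)+p_r\ \ge\ -1+1=0 .
\]
This uses only $u_i(c)\ge -1$ for objective chores, which is exactly why the shift makes the maximum payment equal to $1$. Adding $M_i$ to agent $i$'s own bundle then gives $u_i(A_i)+p_i\ge u_i(M_i)\ge\lambda$. Since $u_i(A_i)=u_i(M_i\cup\{c_i^*\})<0$ by \ref{prop:PThree}, it follows that $p_i\ge\lambda$. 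With this line inserted, your proof coincides with the paper's.
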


\begin{proof}
\textbf{Initialization.}
We start with the initial set of active agents $\widetilde{N}_0 = \{1, \dots, k\}$, and allocate $\{M_i\}_{i \in [k]}$ and $Z_{\mathrm{rem}}$  among these agents while satisfying invariants~\ref{invariant:EF} and~\ref{invariant:path}.

First, we allocate the set of chores $Z_{\mathrm{rem}} = \{c_1, \dots, c_k\}$ among the agents in~$\widetilde{N}_0$ using the \IMWPM algorithm (see \Cref{alg:IMWPM}).
The \IMWPM algorithm matches each agent to exactly one chore.
For each agent~$i \in \widetilde{N}_0$, let $c_i^*$ denote the chore assigned the agent.
Since each~$c \in Z_{\mathrm{rem}}$ is an objective chore with utilities $u_i(c) \in [-1, 0)$ for all agents~$i \in \widetilde{N}_0$, \Cref{prop:OC_only} implies that the resulting allocation is envy-freeable among agents in~$\widetilde{N}_0$.
Moreover, there exists a payment vector~$\textbf{p}^0$ satisfying $0 \leq p^0_i \leq 1$ for every~$i \in \widetilde{N}_0$ that establishes envy-freeness:
\[
u_i(c_i^*) + p^0_i \geq u_i(c_j^*) + p^0_j \qquad \text{for all } i,j \in \widetilde{N}_0.
\]

Invariant~\ref{invariant:EF} requires that each agent~$i \in \widetilde{N}$ should have a combined utility of at least~$\lambda$ for their bundle and payment.
We thus increase the agents' payments by a constant simultaneously to achieve this goal.
Let $p^0_{\max} \coloneqq \max_{i \in \widetilde{N}_0} p^0_i$ denote the maximum subsidy in the payment vector~$\textbf{p}^0$.
We shift all payments by the constant $1 - p^0_{\max}$ and define a new payment vector~$\mathbf{p}$ by
\[
p_i \coloneqq p^0_i + (1 - p^0_{\max}) \qquad \text{for all } i \in \widetilde{N}_0.
\]
This uniform shift preserves envy-freeness for agents in~$\widetilde{N}_0$, meaning that
\begin{equation}
\label{eq:sparse:EF}
u_i(c_i^*) + p_i \geq u_i(c_j^*) + p_j \qquad \text{for all}~i, j \in \widetilde{N}_0.
\end{equation}

Moreover, for each~$i \in \widetilde{N}_0$, since $0 \leq p^0_i \leq p^0_{\max} \leq 1$, the new payment satisfies $0 \leq p_i \leq 1$.
Let $r \in \widetilde{N}_0$ be an agent such that $p^0_r = p^0_{\max}$; then $p_r = 1$.
By envy-freeness, for every~$i \in \widetilde{N}_0$, we have
\begin{equation}
\label{eq:sparse:lb-zero}
u_i(c_i^*) + p_i \geq u_i(c_r^*) + p_r \geq -1 + 1 = 0,
\end{equation}
where the last inequality follows from $u_i(c_r^*) \in [-1, 0)$.
We thus obtain an envy-free allocation of the chores $Z_{\mathrm{rem}}$, in which each agent receives exactly one chore, together with subsidies~$\mathbf{p}$ such that
\[
u_i(c_i^*) + p_i \geq 0 \qquad \text{for all } i \in \widetilde{N}_0.
\]

Next, we allocate the subjective meta-goods $\{M_i\}_{i \in [k]}$ to agents in~$\widetilde{N}_0$.
Recall that $\widetilde{N}_0 = [k]$.
For each~$i \in [k]$, we give meta-good~$M_i$ to agent~$i$, and define
\[
A_i \coloneqq \{c_i^*\} \cup M_i.
\]

We now show that the allocation $(\alloc, \mathbf{p})$ satisfies invariant~\ref{invariant:EF}.
By Property~\ref{prop:PThree} of \Cref{lem:algoOneProperties}, each meta-good satisfies \choreMaximality, implying that $u_j(\{c_i^*\} \cup M_i) < 0$ for all~$i, j \in N$, and in particular $u_j(A_i) \leq 0$ for every~$i \in \widetilde{N}_0$ and every~$j \in N$.
Recall that the initial active agents $\widetilde{N}_0 = \{1, \dots, k\}$ are labeled so that $u_i(M_i) \geq \lambda$.
Together with \Cref{eq:sparse:lb-zero}, we have
\[
u_i(A_i) + p_i = u_i(c_i^*) + u_i(M_i) + p_i \geq \lambda.
\]
Since $u_i(\{c_i^*\} \cup M_i) < 0$ for all~$i$, this inequality further implies that for all~$i \in \widetilde{N}_0$, $\lambda \leq p_i \leq 1$.

It remains to verify envy-freeness between agents in~$\widetilde{N}_0$.
By Property~\ref{prop:POne} of \Cref{lem:algoOneProperties}, for all~$i \in \widetilde{N}_0$, the interest sets~$T_i$'s are pairwise disjoint, and thus for any $j \neq i$, we have $u_i(M_j) < 0$.
Hence, envy-freeness follows as for any distinct agents $i, j \in \widetilde{N}_0$,
\begin{align*}
u_i(A_i) + p_i &= u_i(c_i^*) + u_i(M_i) + p_i \\
&\geq u_i(c_i^*) + p_i \tag*{$\because u_i(M_i) \geq 0$} \\
&\geq u_i(c_j^*) + p_j \tag*{$\because$ \Cref{eq:sparse:EF}} \\
&\geq u_i(c_j^*) + u_i(M_j) + p_j \tag*{$\because u_i(M_j) < 0$} \\
&= u_i(A_j) + p_j.
\end{align*}

We conclude that the allocation $(\alloc, \mathbf{p})$ satisfies invariant~\ref{invariant:EF}.

Invariant~\ref{invariant:path} further requires that the set $S(\lambda, \mathbf{p}) = \{i \in \widetilde{N}_0 \mid u_i(A_i) + p_i = \lambda\}$ is non-empty and each agent in~$\widetilde{N}_0$ has a directed equality path in the directed equality graph~$\mathcal{H}(\alloc, \mathbf{p})$ to some agent in~$S(\lambda, \mathbf{p})$.
By \Cref{obs:restore-I2}, we can modify the payments~$\mathbf{p}$ to ensure that both invariants~\ref{invariant:EF} and~\ref{invariant:path} are satisfied simultaneously.

In short, at the end of the initialization block of \Cref{alg:sparse}, we obtain for the initial active set~$\widetilde{N}_0$ an outcome $(\alloc, \mathbf{p})$ satisfying both invariants~\ref{invariant:EF} and~\ref{invariant:path}.

\medskip
\noindent\textbf{Inductive Step.}
We now justify the inductive loop of \Cref{alg:sparse}, which enlarges the active set to include all agents~$N$ while maintaining invariants~\ref{invariant:EF} and~\ref{invariant:path}.
Assume that at the beginning of an iteration, the active set~$\widetilde{N}$ and the allocation $(\widetilde{\alloc}, \mathbf{p})$ satisfy invariants~\ref{invariant:EF} and~\ref{invariant:path}.
We add a new agent~$x \notin \widetilde{N}$ to the active agent set and compute a new allocation for the agents in $\widetilde{N} \cup \{x\}$ such that both invariants~\ref{invariant:EF} and~\ref{invariant:path} are still satisfied.

We first extend the allocation $(\widetilde{\alloc}, \mathbf{p})$ to agents in $\widetilde{N} \cup \{x\}$ by setting $\widetilde{A}_x = \emptyset$ and $p_x = \lambda$.
Note that no agent~$i \in \widetilde{N}$ envies agent~$x$ since by invariant~\ref{invariant:EF}, we have $u_i(\widetilde{A}_i) + p_i \geq \lambda = u_i(\widetilde{A}_x) + p_x$.

If agent~$x$ does not envy any agent in~$\widetilde{N}$, then this extended allocation is envy-free and satisfies invariant~\ref{invariant:EF}.
Invariant~\ref{invariant:path} is also satisfied, since every agent in~$\widetilde{N}$ continues to satisfy \ref{invariant:path} by assumption, and agent~$x$ attains a value of exactly~$\lambda$, and therefore belongs to the set $S(\lambda, \mathbf{p})$ under the new extended allocation.
Thus, in this case both the invariants can be maintained while extending the set of active agents to $\widetilde{N} \cup \{x\}$.

Suppose, on the other hand, agent~$x$ envies someone in~$\widetilde{N}$.
Let $i_1 \in \widetilde{N}$ be the agent that $x$ envies the most, that is, $i_1 = \argmax_{j \in \widetilde{N}} ( u_x(\widetilde{A}_j) + p_j )$.
By invariant~\ref{invariant:path}, agent~$i_1$ has a directed equality path to some agent~$i_s$ in $S(\lambda, \mathbf{p}) = \{i \in \widetilde{N} \mid u_i(\widetilde{A}_i) + p_i = \lambda \}$.
Without loss of generality, denote this directed equality path as
$$
i_1 \rightarrow i_2 \rightarrow \cdots \rightarrow i_s
$$
where, if $i_1 \in S(\lambda, \mathbf{p})$, we may take $s = 1$.

We now define a new allocation $(\alloc, \mathbf{p}')$ over the agent set $\widetilde{N} \cup \{x\}$ as follows:
\[
\begin{cases}
(A_x, p'_x) &= (\widetilde{A}_{i_1}, p_{i_1}), \\
(A_{i_r}, p'_{i_r}) &= (\widetilde{A}_{i_{r+1}}, p_{i_{r+1}}) \qquad \text{for } r \in \{1, 2, \dots, s-1\}, \\
(A_{i_s} , p'_{i_s}) &= (\widetilde{A}_x, p_x) = (\emptyset, \lambda),
\end{cases}
\]
and for all other agents $j \in \widetilde{N} \setminus \{i_1, \dots, i_s\}$, set $(A_j, p'_j) = (\widetilde{A}_j, p_j)$.

We show that the new allocation $(\alloc, \mathbf{p}')$ over $\widetilde{N} \cup \{x\}$ satisfies invariant~\ref{invariant:EF}.
\begin{itemize}
\item For each~$i \in \widetilde{N} \cup \{x\}$, we have $u_j(A_i) \leq 0$ for all~$j \in N$ and $\lambda \leq p'_i \leq 1$.

\item Agent~$x$ is envy-free because, in the envy branch of \Cref{alg:sparse}, she receives the bundle and payment of agent~$i_1$, whom she envied the most.
Moreover, in the allocation $(\widetilde{\alloc}, \mathbf{p})$ agent~$x$ received $(\widetilde{A}_x, p_x) = (\emptyset, \lambda)$ and thus got a utility of exactly~$\lambda$.
It follows easily that $u_x(A_x) + p'_x \geq \lambda$.

\item All agents in~$\widetilde{N}$ receive the same utility under the new allocation $(\alloc, \mathbf{p}')$ as they did under $(\widetilde{\alloc}, \mathbf{p})$.
Each agent~$i_r \in \{i_1, \dots, i_{s-1}\}$ had an equality edge to agent~$i_{r+1}$ under $(\widetilde{\alloc}, \mathbf{p})$; consequently, by receiving the bundle and payment previously assigned to~$i_{r+1}$, agent~$i_r$ attains the same utility as before.
As for agent~$i_s$, since $i_s \in S(\lambda, \mathbf{p})$ under $(\widetilde{\alloc}, \mathbf{p})$, she received utility~$\lambda$.
In the new allocation $(\alloc, \mathbf{p}')$, agent~$i_s$ continues to receive utility~$\lambda$ for her allocation $(\emptyset, \lambda)$, and thus her utility is unchanged.
It follows that for each agent~$i \in \widetilde{N}$, $u_i(A_i) + p'_i = u_i(\widetilde{A}_i) + p_i \geq \lambda$.

Clearly, no agent in~$\widetilde{N}$ envies agent~$i_s$.
Moreover, since we only permute existing bundle-payment pairs, agents in~$\widetilde{N}$ remain envy-free towards any agent in~$\widetilde{N} \cup \{x\}$.
\end{itemize}
We conclude that the allocation $(\alloc,\mathbf{p}')$ is envy-free for all agents in $\widetilde{N} \cup \{x\}$, and furthermore, invariant~\ref{invariant:EF} is maintained.

By \Cref{obs:restore-I2}, if invariant~\ref{invariant:EF} is satisfied, then we can modify the payment vector~$\mathbf{p}'$ to ensure that both invariants~\ref{invariant:EF} and~\ref{invariant:path} are satisfied simultaneously.
Hence, we can extend the set of active agents to $\widetilde{N} \cup \{x\}$ while maintaining both invariants.

In either branch, \Cref{alg:sparse} commits the candidate outcome as the new current state and adds~$x$ to the active set.
Thus, whenever $N \setminus \widetilde{N} \neq \emptyset$, we can add one more agent, update the allocation and payments, and maintain both invariants.
Once all agents~$N$ become active, both invariants hold for~$N$, the proof follows.

\medskip
\noindent\textbf{Polynomial-time Computation.}
We finally explain why the allocation and payments in this proposition can be computed in polynomial time.
The initialization uses the \IMWPM routine on the residual chores and then performs only a constant shift of the resulting payment vector, so this part is polynomial-time.
After the initial outcome is constructed, \Cref{lem:RestoreI2} restores invariant~\ref{invariant:path} in polynomial time.

For the inductive phase, the key progress measure is the active set~$\widetilde{N}$. This set only grows: in each iteration, exactly one new agent~$x$ is added, and no active agent is ever removed. Thus the outer loop runs at most~$n$ times. Within each iteration, we only need to check whether~$x$ envies an active agent, choose a most-envied active agent, find the directed equality path guaranteed by invariant~\ref{invariant:path}, and rotate the bundle--payment pairs along this path. All of these operations can be carried out in polynomial time. After every rotation, \Cref{alg:sparse} invokes \Cref{lem:RestoreI2} on the candidate outcome and enlarged active set, which again takes polynomial time.

Therefore, the algorithm performs only polynomially many polynomial-time operations, and hence computes the desired allocation and payment vector in polynomial time.
\end{proof}

\subsubsection{Completing the Allocation}
Note that, so far, we have only allocated $\{M_i\}_{i \in[k]}$ and $Z_{\mathrm{rem}}$, so we have not yet allocated the subjective meta-goods $\{M_{k+1}, \dots, M_\ell\}$.
Fortunately, most of the heavy lifting is done by \Cref{Prop:Partial}.
The final loop of \Cref{alg:sparse} allocates these remaining items in the straightforward way analyzed in \Cref{thm:CaseII}, thereby completing the allocation without creating envy or increasing the subsidy payments.

\begin{theorem}
\label{thm:CaseII}
Under Case~II, there exists a polynomial-time algorithm that computes an envy-free allocation with subsidies, where each agent receives at most one dollar.
\end{theorem}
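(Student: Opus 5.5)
Starting from \Cref{Prop:Partial}, we take the envy-free outcome $(\alloc,\mathbf p)$ on all of $N$ that allocates $\{M_i\}_{i\in[k]}\cup Z_{\mathrm{rem}}$ and satisfies invariant~\ref{invariant:EF}. In particular, $\lambda\le p_i\le 1$ and $u_j(A_i)\le 0$ for all $i,j\in N$. We then run the final loop of \Cref{alg:sparse}. For each $j\in\{k+1,\dots,\ell\}$, agent $j$ receives $M_j$ and her payment is set to $p_j'=p_j-v_j$. All other agents keep their bundles and payments. Since $0\le v_j\le v_{k+1}=\lambda\le p_j$, the new payments satisfy $0\le p'_j\le p_j\le 1$. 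Hence the subsidy bound holds once envy-freeness is established. Note that the agents $k+1,\dots,\ell$ are distinct by the relabeling, so each agent receives at most one extra meta-good.

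Envy-freeness is checked pairwise, with $A'_i$ and $p'_i$ denoting the final bundles and payments. For the envious agent $i$, the compensated utility either does not change, or, when $i\in\{k+1,\dots,\ell\}$, it changes by $u_i(M_i)-v_i=0$, because agent $i$ attains $v_i$ by the relabeling. So $u_i(A'_i)+p'_i=u_i(A_i)+p_i$ in all cases. For the envied agent $j$, if $j\notin\{k+1,\dots,\ell\}$, nothing changes and the old inequality applies. If $j\in\{k+1,\dots,\ell\}$, then
\[
u_i(A'_j)+p'_j=u_i(A_j)+p_j+u_i(M_j)-v_j\le u_i(A_j)+p_j,
\]
because $u_i(M_j)\le v_j$. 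For $i=j$ this holds with equality. For $i\ne j$ it follows from the definition of $v_j$ as the maximum over $T_j$, together with $u_i(M_j)<0\le v_j$ when $i\notin T_j$. Combining this with the envy-freeness of $(\alloc,\mathbf p)$ gives envy-freeness of the final outcome.

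For the running time, \Cref{Prop:Partial} is polynomial. The relabeling only needs sorting, and the final loop is linear. The only real obstacle is bounding the payments below, namely showing $p_j\ge v_j$ so that the deduction keeps $p'_j\ge 0$. This is exactly what the lower bound $p_i\ge\lambda=v_{k+1}$ in invariant~\ref{invariant:EF} was designed to guarantee, combined with the ordering $v_j\le v_{k+1}$ for $j>k$. Everything else is a direct verification.
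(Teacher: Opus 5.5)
Your proposal is correct and follows essentially the same argument as the paper. Starting from the outcome of \Cref{Prop:Partial}, you give $M_j$ to agent $j$ for $j\in\{k+1,\dots,\ell\}$ and deduct $v_j$. You then note that every agent's compensated utility is unchanged, that $u_i(M_j)-v_j\le 0$ rules out new envy toward $j$, and that $p_j\ge\lambda=v_{k+1}\ge v_j$ keeps payments nonnegative, with the same polynomial-time accounting as the paper.
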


\begin{proof}
We start with the envy-free allocation $(\alloc, \mathbf{p})$ of the items $\{M_i\}_{i \in [k]}$ and $Z_{\mathrm{rem}}$ obtained by \Cref{Prop:Partial}.
Recall that, due to the labeling, we have $u_j(M_j) = v_j$ for all~$j \in \{k+1, \dots, \ell\}$.

We construct a complete allocation~$\widetilde{\alloc}$ from~$\alloc$, together with adjusted subsidy payments~$\mathbf{p}'$, as follows:
\begin{equation}
\widetilde{A}_i =
\begin{cases}
A_i \cup M_i & i \in \{k+1, \dots, \ell\} \\
A_i          & i \notin \{k+1, \dots, \ell\}
\end{cases}
\qquad
p'_i =\begin{cases}
p_i - v_i & i \in \{k+1, \dots, \ell\} \\
p_i       & i \notin \{k+1, \dots, \ell\}
\end{cases}
\end{equation}
Note that under~$\widetilde{\alloc}$, every subjective good as well as objective chore is allocated.
We show that the allocation $(\widetilde{\alloc}, \mathbf{p}')$ is envy-free, and satisfies $0 \leq p'_i \leq 1$ for all~$i \in N$.

In both allocations $(\widetilde{\alloc}, \mathbf{p}')$ and $(\alloc, \mathbf{p})$, each agent receives the same value from their own bundle plus payment.
This is immediate for agents in $N \setminus \{k+1, \dots, \ell\}$, whose bundle-payment pairs remain unchanged.
For each agent~$j \in \{k+1, \dots, \ell\}$, since $u_j(M_j) = v_j$, we have
$$
u_j(A_j) + p_j = u_j(\widetilde{A}_j) + p'_j \qquad \text{ for each } j \in \{k+1, \dots, \ell\}.
$$
Thus, every agent attains the same utility for their own bundle payment pair in both allocations.

For a fixed~$j \in \{k+1, \dots, \ell\}$, observe that for all~$i \in N \setminus {j}$, we have $u_i(M_j)-v_j\leq 0$: either $i \notin T_j$, in which we have $u_i(M_j) < 0$, or $i \in T_j$, in which case $u_i(M_j) \leq \max_{r \in T_j} u_r(M_j) = v_j$.
It follows that, no agent~$i \in N$ envies agent~$j \in \{k+1, \dots, \ell\}$ in the allocation $(\widetilde{\alloc}, \mathbf{p}')$, since
\[
u_i(\widetilde{A}_i) + p'_i = u_i(A_i) + p_i \geq u_i(A_j) + p_j \geq u_i(A_j) + p_j + u_i(M_j) - v_j = u_i (\widetilde{A}_j) + p'_j.
\]
Finally, since every agent attains the same value in both allocations, and the allocations and payments of agents in $N \setminus \{k+1, \dots, \ell\}$ remain unchanged, the envy-freeness of $(\alloc, \mathbf{p})$ implies that no agent envies any agent in $N \setminus \{k+1, \dots, \ell\}$ under $(\widetilde{\alloc}, \mathbf{p}')$.
Hence, the allocation $(\widetilde{\alloc}, \mathbf{p}')$ is envy-free.

We now verify $0 \leq p'_i \leq 1$ for all~$i \in N$.
Since $p_i \leq 1$ by \Cref{Prop:Partial}, it follows immediately that $p'_i \leq 1$.
Moreover, by \Cref{Prop:Partial}, we have $p_i \geq \lambda$ for all~$i \in N$.
Because $\lambda \geq v_j$ for each $j \in \{k+1, \dots, \ell\}$, it follows that $p'_i \geq 0$ for all~$i \in N$.

\medskip
\noindent\textbf{Polynomial-time Computation.}
It remains to show that this completion step is polynomial-time.
The preliminary relabeling of meta-goods and agents only requires computing the values~$v_j$ and sorting them.
By \Cref{Prop:Partial}, the partial allocation of $\{M_i\}_{i \in [k]}$ and~$Z_{\mathrm{rem}}$ together with its payment vector can be computed in polynomial time.
After that, the final loop of \Cref{alg:sparse} makes a single pass over the meta-goods $M_{k+1}, \dots, M_\ell$: assign~$M_j$ to agent~$j$ and replace~$p_j$ by~$p_j-v_j$.
Thus, the final completion performs only polynomially many additional operations, and hence the Case~II allocation can be found in polynomial time.
\end{proof}

\subsection[Case III: Abundant Chores]{Case III: Abundant Chores ($|Z_{\mathrm{rem}}| \geq |G|$)}
\label{subsec:abundant-chores}

By the case distinction, upon termination of \Cref{alg:condpairwise}, we have meta-goods \(G = \{M_1, \dots, M_\ell\}\) and a set of objective chores \(Z_{\mathrm{rem}} = \{c_1, \dots, c_k\}\) satisfying $k = |Z_{\mathrm{rem}}|  \geq |G| = \ell$.
Furthermore, by Property~\ref{prop:POne} of \Cref{lem:algoOneProperties}, interest sets $T_j$'s are pairwise disjoint.
Together with $\bigcup_{j \in [\ell]} T_j \subseteq N$, it follows that $\ell \leq n$.
Additionally, by Property~\ref{prop:PThree} of \Cref{lem:algoOneProperties}, combining any meta-good~$M_j \in G$ and any remaining chore~$c \in Z_{\mathrm{rem}}$ results in a \emph{meta-chore}, i.e.,
\[
u_i(M_j \cup c) < 0 \qquad \text{ for all } i \in N.
\]

For any meta-good~$M_j \in G$, if $\max_{i \in T_j} u_i(M_j) = 0$, then $M_j$ can be ignored during the main allocation process.
This is because after allocating all residual objective chores and all other meta-goods in an envy-free manner with subsidies, meta-good~$M_j$ can be assigned arbitrarily to any agent in the interest set~$T_j$.
Since every agent in~$T_j$ values~$M_j$ at~$0$, no envy is created among them. Moreover, for any agent not in~$T_j$, the bundle of the agent receiving~$M_j$ is weakly worse, and hence these agents remain envy-free.

\begin{observation}
\label{obs:non-zero-meta-goods}
For each meta-good~$M_j \in G$, we may assume that
\[
\max_{i \in T_j} u_i(M_j) > 0.
\]
\end{observation}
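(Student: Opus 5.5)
The plan is a reduction: split off every meta-good whose maximum interested value is zero, solve the smaller instance, then give those meta-goods back at the end without changing any payment. Let $G_0 \coloneqq \{M_j \in G : \max_{i \in T_j} u_i(M_j) = 0\}$ and $G^+ \coloneqq G \setminus G_0$. The reduced instance keeps all agents $N$, all residual chores $Z_{\mathrm{rem}}$, and only the meta-goods in $G^+$.

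First I check that the reduced instance still satisfies the standing hypotheses. Properties \ref{prop:POne}–\ref{prop:PThree} of \Cref{lem:algoOneProperties} involve individual meta-goods or pairs of them, so they survive deleting meta-goods. The case condition $|Z_{\mathrm{rem}}| \ge |G|$ also survives, since $|G^+| \le |G|$. Assume, as the statement does, that the subsequent construction for Case~III (or Cases~I–II, if removal changes the case) produces an envy-free outcome $(\alloc, \mathbf p)$ on $G^+ \cup Z_{\mathrm{rem}}$ with $0 \le p_i \le 1$.

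Second I reinsert each $M_j \in G_0$. Pick any $a_j \in T_j$; this set is non-empty because $M_j$ is a meta-good. Add $M_j$ to $A_{a_j}$ and leave all payments unchanged.
\begin{itemize}
\item Agent $a_j$ has $u_{a_j}(M_j) = 0$, since $0 \le u_{a_j}(M_j) \le \max_{T_j} = 0$. So her own compensated utility is unchanged.
\item Any other $i \in T_j$ also values $M_j$ at exactly $0$, so her value for $a_j$'s bundle is unchanged.
\item Any $i \notin T_j$ has $u_i(M_j) < 0$, so $a_j$'s bundle only becomes less attractive to her.
\end{itemize}
Additivity handles several zero meta-goods going to the same agent, since each contributes $0$ to its recipient and a non-positive amount to every other agent. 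Every agent's own utility plus payment is therefore unchanged, and every agent's value for any other bundle weakly decreases, so envy-freeness holds. Payments are untouched, so the bound $0 \le p_i \le 1$ is kept. All of this is a linear-time post-processing step.

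I do not expect a real obstacle here. The only care point is that the reduction must not break any hypothesis the later Case~III analysis relies on, such as the count comparison or chore-maximality. Both are monotone under deleting meta-goods, so I would state this explicitly rather than grind through it.
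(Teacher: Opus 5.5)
Your proposal is correct and matches the paper's argument. You set aside the meta-goods with $\max_{i\in T_j}u_i(M_j)=0$, solve the rest, and then give each such $M_j$ to an agent of $T_j$ with payments unchanged, which creates no envy because agents in $T_j$ value $M_j$ at exactly $0$ and agents outside $T_j$ value it negatively. Your explicit check that (P1)--(P3) and the count condition $|Z_{\mathrm{rem}}|\ge|G|$ survive the deletion is a small detail the paper leaves implicit.
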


\paragraph{Pairing Configurations}
Let $\Inj([\ell], [k])$ denote the set of all injective functions $\varphi \colon [\ell] \to [k]$ (i.e., one-to-one assignment of the $\ell$ meta-goods $\{M_j\}_{j \in [\ell]}$ to $\ell$ distinct chores in $Z_{\mathrm{rem}}$).
For each $\varphi \in  \Inj([\ell], [k])$, define the corresponding meta-chore $\widetilde{c}_{\varphi(i)} \coloneqq M_i \cup c_{\varphi(i)}$ for each~$i \in [\ell]$.
Note that each pairing $\varphi \in \Inj([\ell], [k])$  forms a partition of the item set~$M$, denoted as
\[
J^\varphi = \left( \bigcup_{i \in [\ell]} \widetilde{c}_{\varphi(i)} \right) \cup \left( Z_{\mathrm{rem}} \setminus \bigcup_{j \in [\ell]} c_{\varphi(j)} \right).
\]
The first term represents the $\ell$ meta-chores obtained by combining each of the $\ell$ meta-good with a distinct objective chore, while the second term consists of the remaining $k - \ell$ unattached chores from $Z_{\mathrm{rem}}$ that are singletons.
Note that for any $\varphi \in \Inj([\ell], [k])$, the cardinality satisfies $|J^\varphi| = |Z_{\mathrm{rem}}|$.

For later use, note that for each~$i \in [\ell]$, each~$j \in T_i$, and every pairing~$\varphi$, we have
\begin{align}
\label{eq:chorestild}
u_j(\widetilde{c}_{\varphi(i)}) = u_j(M_i \cup c_{\varphi(i)}) \geq u_j(c_{\varphi(i)}) \geq -1,
\end{align}

\paragraph{Roundwise Optimal Pairing}
For each pairing~$\varphi$, let $\IMWPM(H[N, J^\varphi]) = (\mu^1, \dots, \mu^T)$ denote an execution of \IMWPM with respect to the original utility profile~$u$.
(Recall that \IMWPM adds dummy items if $|J^\varphi|$ is not a multiple of~$n$.)
Let $T$-dimensional vector
\[
\val(\IMWPM(H[N, J^\varphi])) = (\val(\mu^1), \dots, \val(\mu^T))
\]
denote the vector of matching values, where in coordinate~$t \in [T]$,
\[
\val(\mu^t) \coloneqq \sum_{i \in N} u_i(\mu^t_i).
\]

We select an injective map~$\varphi^* \in \Inj([\ell], [k])$ and the tie-breaking in \IMWPM so that $\val(\mu^1)$ is maximized over all pairing configurations; subject to the selected matching~$\mu^1$, the value~$\val(\mu^2)$ is maximized over all configurations consistent with~$\mu^1$; and so on.
We call the resulting sequence a \emph{roundwise optimal matching} under~$u$.
Later in \Cref{sec:polytime}, we will describe how to compute the desired injective map~$\varphi^*$ and the sequence of matchings $\IMWPM(H[N, J^{\varphi^*}])$ in polynomial time through flow networks.

\begin{observation}
\label{obs:IMWPT-observation}
When \IMWPM runs for a single round (i.e., $|J^{\varphi^*}| \leq n$), a meta-good paired with some singleton chore may be allocated to some agent outside the interest set of the meta-good.

When \IMWPM runs for at least two rounds (i.e., $|J^{\varphi^*}| \geq n + 1$), every agent is matched to at least one real chore (singleton chore or meta-chore) in the first two rounds.
\end{observation}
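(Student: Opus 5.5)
The plan treats the two parts of \Cref{obs:IMWPT-observation} separately, since they are of different natures. The first part is only a cautionary remark about what \emph{can} happen. I would justify it by exhibiting the mechanism behind it rather than by proving anything general. In a single round, \IMWPM computes one maximum-weight perfect matching. Every object of $J^{\varphi^*}$ is a real chore: by Property~\ref{prop:PThree} of \Cref{lem:algoOneProperties}, each meta-chore $\widetilde c_{\varphi^*(i)}=M_i\cup c_{\varphi^*(i)}$ is strictly negative for all agents, and singleton chores in $Z_{\mathrm{rem}}$ are objective chores. Consequently nothing forces an agent of $T_i$ to receive $\widetilde c_{\varphi^*(i)}$. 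A small instance in which matching an agent of $T_i$ to $\widetilde c_{\varphi^*(i)}$ is suboptimal for the total weight (for example, because that agent's dislike of another chore is much milder than some other agent's) shows that this indeed occurs.

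For the second part, the key step is to control where the dummy objects go. First, when $|J^{\varphi^*}|\ge n+1$, padding gives $|J^{\varphi^*}|+d=Tn$ with $T\ge 2$ and $0\le d\le n-1$ dummies. Second, I claim that every maximum-weight perfect matching in round $1$ uses all $d$ dummies. This holds because each dummy has value $0$ for every agent, whereas every real object has strictly negative value for every agent, as noted above. If some dummy were unused in round $1$, then, since $d\le n-1<n$, some agent would be matched to a real object. Swapping that agent onto the unused dummy would strictly increase the matching weight, contradicting optimality. This argument applies equally to the roundwise-optimal choice of $\varphi^*$ and to its tie-breaking, because every round is still a maximum-weight perfect matching on the remaining objects.

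Given this claim, $J^2$ contains only real objects, and it is nonempty because $T\ge2$. Therefore the perfect matching $\mu^2$ gives every agent a real chore, either a singleton chore or a meta-chore. Hence every agent receives at least one real chore within the first two rounds; in fact round $2$ alone suffices.

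The only step needing care is the strictness used in the exchange argument. It requires that every real object be strictly negative for all agents. For meta-chores this is exactly Property~\ref{prop:PThree}, and I expect it to be the one point to verify explicitly, since it would fail if some $M_i$ were left unpaired.
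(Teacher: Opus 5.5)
Your argument for the second part is correct and complete; the paper states this observation without proof, so there is nothing to compare it against. Every object of $J^{\varphi^*}$ is strictly negative for every agent, and there are at most $n-1$ dummies. Hence every maximum-weight perfect matching in round~$1$ uses all the dummies, and round~$2$ already gives every agent a real chore.

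The first part, however, is not established, and the mechanism you sketch for it cannot work. You reason as if $\varphi^*$ were fixed and only the matching were optimized. In fact $\varphi^*$ is chosen jointly with the matchings to maximize $\val(\mu^1)$. Suppose, as in your parenthetical, an agent $s\in T_i$ is matched to a different singleton chore $c'$ while some $r\notin T_i$ receives $M_i\cup c$. Re-pairing $M_i$ with $c'$, so that $s$ gets $M_i\cup c'$ and $r$ gets $c$, is another valid pairing configuration. It raises the round value by $u_s(M_i)-u_r(M_i)>0$, so this situation is never roundwise optimal. More generally, the exchange in the proof of Lemma~\ref{lem:abundant:interest-assignment} strictly improves the round in its singleton and cycle cases. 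So the phenomenon can only arise through its dummy case: the interested agent ends up with a dummy, and an outsider who dislikes the meta-chore less absorbs it.

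A concrete witness has $n=2$, a single subjective good $M$ and an objective chore $c$, with $u_s(M)=\tfrac12$, $u_s(c)=-1$ and $u_r(M)=u_r(c)=-\tfrac1{10}$. Algorithm~\ref{alg:condpairwise} performs no merge, so $G=\{M\}$ with $T=\{s\}$ and $Z_{\mathrm{rem}}=\{c\}$. This is Case~III with a single round and one dummy. The maximum-weight perfect matching gives $M\cup c$ to $r$ (value $-\tfrac15$) and the dummy to $s$, since $-\tfrac15>-\tfrac12$. The ``may'' claim needs an example of this dummy-based kind, not an appeal to agents' relative dislike of other chores.
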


We next show that the roundwise optimal matching under~$u$ never assigns an object to an agent who values it below~$-1$.

\begin{lemma}
\label{lem:abundant:interest-assignment}
Let $\alloc^* = (A^*_1, \dots, A^*_n)$ be the allocation produced by the roundwise optimal execution of \IMWPM on~$J^{\varphi^*}$ under~$u$.
Then, in every round~$t$ and for every agent~$i \in N$,
\[
u_i(\mu_i^t) \geq -1.
\]
Equivalently, $u_i(c) \geq -1$ for every non-dummy object~$c \in A_i^*$.
\end{lemma}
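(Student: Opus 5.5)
The plan is a lexicographic exchange argument against the choice of $(\varphi^*,\mu^1,\mu^2,\dots)$.

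\textbf{Easy objects.} First I dispose of the objects that cannot violate the bound.
\begin{itemize}
\item Dummy objects are worth $0$.
\item Unattached singleton chores $c\in Z_{\mathrm{rem}}$ satisfy $u_i(c)\ge -1$ for every agent by normalization.
\item A meta-chore $\widetilde c_{\varphi^*(r)}=M_r\cup c_{\varphi^*(r)}$ assigned to an agent $j\in T_r$ has value at least $-1$ by \eqref{eq:chorestild}.
\end{itemize}
So the only possible violation is a meta-chore $\widetilde c=M_r\cup c$ given in some round $t$ to an agent $j\notin T_r$ with $u_j(\widetilde c)<-1$. I will take the \emph{earliest} such round $t$ and derive a contradiction by exhibiting another pairing or another tie-breaking that keeps $\val(\mu^1),\dots,\val(\mu^{t-1})$ and strictly increases $\val(\mu^t)$. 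Note that $j\notin T_r$ gives $u_j(M_r)<0$. The violation itself reads $u_j(M_r)<-1-u_j(c)$, i.e.\ the meta-good is the part that hurts $j$.

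\textbf{Exchange moves.} I will try the following moves, in this order.
\begin{enumerate}
\item \emph{Re-pairing with a later or same-round singleton.} Suppose some singleton chore $c'$ (not attached under $\varphi^*$) is assigned in round $t'\ge t$. Re-pair $M_r$ with $c'$ instead of $c$, so that $j$ now receives the bare chore $c$. Then $j$'s round-$t$ value rises by $-u_j(M_r)>0$. If $t'>t$, the first $t$ rounds are unchanged except for this strict gain, which contradicts roundwise optimality. The new holder of $M_r\cup c'$ only matters in a later round, and any perfect matching there is still feasible since the object count per round is preserved. If $t'=t$, the change in $\val(\mu^t)$ is $u_h(M_r)-u_j(M_r)$, where $h$ is the holder of $c'$. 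This move alone is not enough: I will combine it with also swapping the objects of $h$ and $j$, or choose the better of the two assignments.
\item \emph{Swapping with an interested agent.} Use an agent $a\in T_r$ with $u_a(M_r)>0$, which exists by \Cref{obs:non-zero-meta-goods}. In round $t$, swap the objects of $a$ and $j$, and simultaneously swap the chore partners if this helps. Disjointness of interest sets (Property~\ref{prop:POne}) gives $u_j(M_{r'})<0$ for all $r'$ with $j\notin T_{r'}$. Chore-maximality (Property~\ref{prop:PThree}) gives $u_a(M_r)<-u_a(c'')$ for every residual chore $c''$. The aim is to use these to show the swapped round-$t$ value strictly exceeds the original.
\end{enumerate}

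\textbf{Main obstacle.} The hard case is when every unattached singleton was already used in rounds before $t$ (or $k=\ell$, so there are no singletons), and the interested agent $a$ holds some other meta-chore in round $t$. For this case my first attempt is the 2-cycle exchange: $a$ and $j$ trade objects, and $M_r$ trades partners with the meta-good in $a$'s object. I would bound the resulting gain using $u_j(M_r)<-1-u_j(c)$ together with $u_a(\cdot)\ge -1$ on chores. If this fails, I would switch to a global argument. I would set up the round-$t$ choice of pairing and matching as a max-weight problem over the flow network of \Cref{sec:polytime}, and argue that an optimal solution never uses an edge of weight below $-1$: any such edge can be rerouted through an alternating cycle that includes an interested agent, with nonnegative change in every earlier round.
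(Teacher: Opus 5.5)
\textbf{There is a genuine gap: the case that needs an argument is left open, and the one concrete move offered for it fails.}

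Your reduction to the only dangerous configuration is correct: a meta-chore $M_r\cup c$ held in round $t$ by some $j\notin T_r$ with $u_j(M_r\cup c)<-1$. Your cross-round move for an unattached singleton used in a later round $t'>t$ is also sound, since that later singleton witnesses that round $t$ may pair one fewer meta-good and stay extendable. Beyond that, three things are missing.
\begin{itemize}
\item The same-round singleton case (holder $h\notin T_r$) is not shown to give a strict gain; ``choose the better of the two'' does not guarantee one.
\item The case where the interested agent holds a dummy is never worked out.
\item The 2-cycle exchange you propose for the hard case genuinely fails. Suppose $a\in T_r$ holds $c_a\cup M_h$, and you give $a$ the object $c_a\cup M_r$ and $j$ the object $c\cup M_h$. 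Then $a$ gains $u_a(M_r)-u_a(M_h)>0$. (This is where disjointness matters: $a\in T_r$ forces $a\notin T_h$. What you attribute to Property~\ref{prop:POne} is just the definition of $T_{r'}$.) But $j$'s change $u_j(M_h)-u_j(M_r)$ is not controlled by anything you list. For example, $u_j(c)=-0.1$, $u_j(M_r)=-1$, $u_j(M_h)=-2$, $u_a(M_r)=0.1$, $u_a(M_h)=-0.05$ gives a net change of $-0.85$. Swapping whole objects between $a$ and $j$ has the same defect.
\end{itemize}
The closing appeal to ``an alternating cycle that includes an interested agent'' does not say which cycle, or why its change is strictly positive.

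\textbf{The missing idea:} route each displaced meta-good to \emph{its own} interested agent, never back to $j$. For each $h$, fix $s_h\in T_h$ with $u_{s_h}(M_h)>0$ (\Cref{obs:non-zero-meta-goods}); these agents are distinct by Property~\ref{prop:POne}. Working inside round $t$ only, move $M_r$ into $s_r$'s object. If that object contained some $M_h$, move $M_h$ into $s_h$'s object, and continue. Because the $s_h$ are distinct and each meta-good has a single holder in the round, the chain stops in one of three ways:
\begin{itemize}
\item[(i)] It reaches an agent holding a singleton chore. She absorbs the last carried meta-good, and $j$ keeps the bare chore $c$.
\item[(ii)] It reaches an agent holding a dummy. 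She swaps with $j$ and receives $c$ plus the last carried meta-good, worth more than $u(c)\ge -1$ to her, so she loses less than $1$. Meanwhile $j$ gains more than $1$.
\item[(iii)] It returns to $j$, closing a cycle in which $j$ receives a meta-good she values positively.
\end{itemize}
Every intermediate agent trades a meta-good she values negatively (she lies in only one interest set) for one she values positively. So the value of round $t$ strictly increases. The same chores and the same number of meta-goods are still used in round $t$, so no other round changes and extendability is automatic. This single exchange covers all of your cases at once. Choosing the earliest bad round or re-pairing across rounds is unnecessary, and the threshold $-1$ is used only in case (ii).
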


\begin{proof}
For each meta-good~$M_j$, fix an agent~$s_j \in T_j$ with $u_{s_j}(M_j)>0$; such an agent exists by \Cref{obs:non-zero-meta-goods}.
The agents~$s_j$ are distinct because the interest sets are pairwise disjoint.

Suppose, for a contradiction, that in some round an agent~$r$ receives an object~$c \cup M_j$ with
\[
u_r(c \cup M_j)<-1.
\]
Every residual singleton chore has value at least~$-1$, so $u_r(M_j)<0$ and hence $r \notin T_j$.
Consider the object received by~$s_j$ in the same round.
If it contains another meta-good~$M_h$, replace~$M_h$ by~$M_j$ and carry~$M_h$ to its chosen agent~$s_h$.
Continue in this way whenever the next chosen agent holds another meta-good.
Because the agents~$s_j$ are distinct and each meta-good has a single holder in the round, the process cannot revisit an earlier assignment unless the next chosen agent is~$r$.
Thus, it either reaches an agent holding a singleton chore or a dummy item, or returns to~$r$ and closes a cycle.

If it reaches a singleton chore, attach the last carried meta-good to that chore and leave~$r$ with the singleton chore~$c$.
If it reaches a dummy item, give the dummy to~$r$ and pair the last carried meta-good with~$c$ for the final chosen agent.
If it returns to~$r$, rotate the carried meta-goods around the resulting cycle.
In all cases, the same chores and the same number of meta-goods are used in this round, so the new pairing is feasible and all other rounds remain unchanged.

Every intermediate chosen agent replaces a meta-good she values negatively by one she values positively.
In the singleton and cycle cases, every affected agent strictly improves.
In the dummy case, agent~$r$ gains more than~$1$, while the final chosen agent loses less than~$1$ because her new combined object has value strictly greater than~$-1$ by \Cref{eq:chorestild}; all intermediate chosen agents strictly improve.
Thus, the value of this round under~$u$ strictly increases, contradicting roundwise optimality.
\end{proof}

\paragraph{Utility Thresholding}
For any pairing~$\varphi$ and every agent~$i \in N$, define the thresholded utility profile~$\widehat{u}$ on~$J^\varphi$ by
\[
\widehat{u}_i(c) \coloneqq \max(-1,u_i(c)) \qquad \text{for every } c \in J^\varphi,
\]
and extend~$\widehat{u}_i$ additively to unions of objects from~$J^\varphi$.
The threshold is used only to obtain the subsidy bound.

\begin{observation}\label{obs:Equival}
For the pairing~$\varphi^*$ and sequence~$(\mu^1,\dots,\mu^T)$ selected above, the same matching sequence is roundwise optimal under~$\widehat{u}$.
Moreover,
\[
\widehat{u}_i(\mu_i^t)=u_i(\mu_i^t) \qquad \text{for every } i \in N \text{ and } t \in [T].
\]
\end{observation}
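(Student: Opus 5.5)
The plan has two parts. First, the equality $\widehat{u}_i(\mu_i^t)=u_i(\mu_i^t)$ follows from \Cref{lem:abundant:interest-assignment}. Second, the optimality of $(\mu^1,\dots,\mu^T)$ under $u$ has to be transferred to $\widehat{u}$ by comparing values under the two profiles.

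\textbf{The equality.} By \Cref{lem:abundant:interest-assignment}, every object $\mu_i^t$ in the selected sequence satisfies $u_i(\mu_i^t)\ge -1$; dummy objects have value $0$. Hence $\max(-1,u_i(\mu_i^t))=u_i(\mu_i^t)$. Summing over agents gives $\widehat{\val}(\mu^t)=\val(\mu^t)$ for every round $t$.

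\textbf{Optimality under $\widehat{u}$.} Thresholding only raises values, so $\widehat{u}_i(c)\ge u_i(c)$ for every object, and any matching $\nu$ satisfies $\widehat{\val}(\nu)\ge\val(\nu)$. Upper bounds are therefore the real issue. The plan is to prove, by induction on $t$ and lexicographically, that no pairing and matching sequence $(\varphi,\nu^1,\nu^2,\dots)$ consistent with $\mu^1,\dots,\mu^{t-1}$ has $\widehat{\val}(\nu^t)>\widehat{\val}(\mu^t)$. Fix such a competitor. If it uses no edge with $u_i(\nu_i^t)<-1$, then $\widehat{\val}(\nu^t)=\val(\nu^t)\le\val(\mu^t)=\widehat{\val}(\mu^t)$ by roundwise optimality under $u$. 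Otherwise, apply the exchange argument from the proof of \Cref{lem:abundant:interest-assignment} to $\nu^t$: chase the carried meta-goods through the distinguished agents $s_j$ until reaching a singleton, a dummy, or a cycle. Each such exchange removes an edge below $-1$ and keeps the same set of chores and the same number of meta-goods in the round, so later rounds remain feasible.

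\textbf{Main obstacle.} I need each exchange to weakly increase the thresholded value, not merely the $u$-value. A bad edge contributes exactly $-1$ under $\widehat{u}$, so in the dummy case agent $r$ gains exactly $1$, while the final agent's new object has value in $(-1,0)$ by \eqref{eq:chorestild}, so her loss is less than $1$. Intermediate agents move from a thresholded value of at least $-1$ to a value strictly larger. In the singleton and cycle cases, $r$ ends with $c$, whose value is at least $-1$, so $r$ does not lose under $\widehat{u}$; intermediate agents gain, though their previous objects may also have been thresholded, which needs a careful check. Iterating the exchanges, which eventually terminate, yields a competitor with no bad edges and thresholded value at least $\widehat{\val}(\nu^t)$. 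The case above then bounds it by $\widehat{\val}(\mu^t)$, so ties under $\widehat{u}$ may be broken toward $\mu^t$ and the induction proceeds.
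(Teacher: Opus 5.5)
Your proposal is correct and follows essentially the same route as the paper. You get the equality from \Cref{lem:abundant:interest-assignment}, take any competing round consistent with the earlier matchings, and repeatedly apply the lemma's exchange (singleton, dummy, or cycle endpoint) to remove assignments below $-1$ without lowering the $\widehat{u}$-value, before comparing with $\mu^t$ through optimality under $u$. The ``careful check'' you flag for intermediate agents is immediate: each of them strictly gains under $u$, and $x \mapsto \max(-1,x)$ is nondecreasing, so they weakly gain under $\widehat{u}$.
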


\begin{proof}
The displayed equality follows immediately from \Cref{lem:abundant:interest-assignment}.
Fix a round after the preceding matchings have been selected, and consider any competing choice~$\nu$ for that round that is consistent with those matchings and can be extended to a complete pairing.
If~$\nu$ assigns some object below~$-1$ under~$u$, apply the same reassignment used in the proof of \Cref{lem:abundant:interest-assignment}.
This reassignment preserves the preceding matchings and uses the same selected chores and the same number of paired meta-goods, so it can still be extended to a complete pairing.
Measured under~$\widehat{u}$, it does not decrease the round value: in the singleton and cycle cases the affected agents weakly improve, while at a dummy endpoint the old bad pair and dummy contribute~$-1+0$ and the new dummy and interested pair contribute strictly more than~$0-1$.
It also removes the assignment below~$-1$.
Repeating if necessary gives a feasible choice~$\nu'$ such that
\[
\sum_{i \in N}\widehat{u}_i(\nu_i)
\leq \sum_{i \in N}\widehat{u}_i(\nu'_i)
= \sum_{i \in N}u_i(\nu'_i)
\leq \sum_{i \in N}u_i(\mu_i^t)
= \sum_{i \in N}\widehat{u}_i(\mu_i^t).
\]
The middle inequality follows from the choice of~$\mu^t$ under~$u$.
Thus, the same round is also optimal under~$\widehat{u}$.
Applying this argument round by round proves the observation.
\end{proof}

Recall that $J^{\varphi^*}$ forms a partition of the item set~$M$.
Under Case III, \Cref{lem:abundant:map-back} below establishes that all items can be allocated in an envy-free manner with each agent receiving a subsidy of at most one dollar.

\begin{lemma}
\label{lem:abundant:map-back}
Let $\alloc^* = (A^*_1, \dots, A^*_n)$ be the allocation produced by the roundwise optimal execution of \IMWPM on~$J^{\varphi^*}$ under~$u$.
Then, there exists a payment vector~$\mathbf{p}^*$ with $0 \le p_i^* \le 1$ for every~$i \in N$ such that $(\alloc^*,\mathbf{p}^*)$ is envy-free.
\end{lemma}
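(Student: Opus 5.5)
Following the template of \Cref{prop:SG_only} and \Cref{prop:OC_only}, the plan is to show that $\alloc^*$ is \EFable and that every directed path in its envy graph has weight at most~$1$. The payments $p^*_i \coloneqq \ell_{\alloc^*}(i)$ of \citet{HalpernSh19} then give the result. The difficulty is that the objects $\widetilde c_{\varphi^*(i)}$ may be worth less than $-1$ to agents outside $T_i$, so the argument of \Cref{prop:OC_only} does not apply directly under~$u$. We therefore run it under the thresholded profile $\widehat u$, where every object of $J^{\varphi^*}$ lies in $[-1,0)$, and transfer the result back to~$u$.

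\emph{Step 1 (envy-freeability).} Fix a permutation $\sigma$ of~$N$. In each round~$t$, the reassignment $i \mapsto \mu^t_{\sigma(i)}$ is a perfect matching on the same object set $J^t$ under the same pairing~$\varphi^*$. Round optimality under~$u$ therefore gives $\sum_i u_i(\mu_i^t) \ge \sum_i u_i(\mu^t_{\sigma(i)})$. Summing over rounds yields condition (b) of \Cref{thm:Halpern-Shah-EFable}, so $\alloc^*$ is \EFable under~$u$.

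\emph{Step 2 (path bound under $\widehat u$).} By \Cref{obs:Equival}, $(\mu^1,\dots,\mu^T)$ is roundwise optimal under~$\widehat u$, and every object has $\widehat u$-value in $[-1,0]$, with dummies worth~$0$. Fix a path $P=(1,\dots,k)$ and let $w^{\widehat u}_t(P)=\sum_{i<k}\bigl(\widehat u_i(\mu^t_{i+1})-\widehat u_i(\mu^t_i)\bigr)$. I then mimic the chores argument, telescoping at the start vertex. Define $F_t$ as agent~$1$'s $\widehat u$-value for her worst remaining option among the objects in rounds $\ge t$, suitably defined (e.g.\ $F_t=\min_{c\in J^t}\widehat u_1(c)$, and $0$ after the last round). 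Consider the alternative round-$t$ matching that shifts objects backward along~$P$: agent~$i$ takes $\mu^t_{i+1}$, and the leftover $\mu^t_1$ is swapped with an object from a later round. Since the objects are chores in $[-1,0]$, the optimality of $\mu^t$ should give $w^{\widehat u}_t(P)\le F_{t+1}-F_t$ or a similar telescoping quantity. Summing over rounds bounds the total by the range of $\widehat u_1$ values, i.e.\ by~$1$. I expect this step to follow the proof of \Cref{prop:OC_only} essentially verbatim, with $\widehat u$ in place of~$u$. One subtlety is that swaps with later rounds must respect the pairing: meta-chore objects are fixed units of $J^{\varphi^*}$, so swapping whole objects across rounds is legitimate.

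\emph{Step 3 (transfer back to~$u$).} By \Cref{lem:abundant:interest-assignment}, $\widehat u_i(A^*_i)=u_i(A^*_i)$, while $\widehat u_i(A^*_j)\ge u_i(A^*_j)$ for all~$j$. Hence every edge weight satisfies $w^u(i,j)\le w^{\widehat u}(i,j)$, and every $u$-path weighs at most its $\widehat u$-weight, which is at most~$1$. Thus $0\le \ell_{\alloc^*}(i)\le 1$. Together with Step~1, $\mathbf p^*=(\ell_{\alloc^*}(i))_i$ makes $(\alloc^*,\mathbf p^*)$ envy-free.

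The main obstacle is Step~2. The exchange argument for objective chores must be checked under~$\widehat u$, and in particular it must use only reassignments of existing objects of $J^{\varphi^*}$ (no re-pairing). It must also handle a single-round execution and the dummy objects. If the direct telescoping fails, I would fall back on bounding the path by a two-round comparison. This would use \Cref{obs:IMWPT-observation}, that every agent receives a real object in the first two rounds, so that the head agent's lost value can be charged against one real object of value at least~$-1$.
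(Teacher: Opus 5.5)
Your plan matches the paper's: pass to $\widehat u$, use \Cref{obs:Equival} to treat $(\mu^1,\dots,\mu^T)$ as an \IMWPM execution on $J^{\varphi^*}$ under $\widehat u$, bound paths there, and transfer back using $\widehat u_i(A_i^*)=u_i(A_i^*)$ and $\widehat u\ge u$. The paper does not reprove Step~2. Under $\widehat u$ every object of $J^{\varphi^*}$ is an objective chore with value in $[-1,0)$, so \Cref{prop:OC_only} applies as a black box and gives $\mathbf p^*\in[0,1]^n$ that is envy-free under $\widehat u$. It then checks envy-freeness under $u$ via $u_i(A_i^*)+p_i^*=\widehat u_i(A_i^*)+p_i^*\ge\widehat u_i(A_j^*)+p_j^*\ge u_i(A_j^*)+p_j^*$, which makes your Step~1 unnecessary. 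Your variant is also valid: envy-freeability under $u$ from roundwise optimality, plus the edgewise bound $w^u(i,j)\le w^{\widehat u}(i,j)$. It yields the componentwise-minimal payments under $u$, at the cost of that extra step.

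What is wrong is the telescoping you actually sketch for Step~2: no bound in terms of the \emph{start} agent's values can hold. Take three agents, chores $c,c'$ and one dummy $d$. Let $\widehat u_1(c)=\widehat u_1(c')=-0.1$, $\widehat u_2(c)=-0.9$, and $\widehat u_2(c')=\widehat u_3(c)=\widehat u_3(c')=-1$; this is even a Case~III instance with $G=\emptyset$. The unique maximum-weight perfect matching is $1\mapsto c'$, $2\mapsto c$, $3\mapsto d$. The path $1\to2\to3$ then has weight $0+0.9=0.9$, while your telescoped bound gives $-\min_{g\in J^1}\widehat u_1(g)=0.1$. So both the per-round inequality $w_t(P)\le F_{t+1}-F_t$ and its sum fail.

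The argument of \Cref{prop:OC_only} telescopes at the \emph{terminal} agent~$k$. For $t<T$, let agents $i<k$ take $\mu^t_{i+1}$, let $k$ take her favourite object in $J^{t+1}$, and drop $\mu^t_1$ from round~$t$. This gives $w_t(P)\le\max_{g\in J^t}\widehat u_k(g)-\max_{g\in J^{t+1}}\widehat u_k(g)$. In the last round, close the cycle with $k$ taking $\mu^T_1$. Summing, and using $\widehat u_k(\mu^T_k)\le\max_{g\in J^T}\widehat u_k(g)$ and $\max_{g\in J^1}\widehat u_k(g)\le 0$, gives $w(P)\le-\widehat u_k(\mu^T_1)\le 1$. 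This alternative matching only moves whole objects of $J^{\varphi^*}$, so your pairing concern is handled as you expected. It also covers the single-round and dummy cases, so the fallback via \Cref{obs:IMWPT-observation} is not needed.
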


\begin{proof}
By \Cref{obs:Equival}, the same matching sequence that produces~$\alloc^*$ is an execution of \IMWPM under~$\widehat{u}$.
By construction, every object in~$J^{\varphi^*}$ is an objective chore under~$\widehat{u}$ and has value at least~$-1$ for every agent.
It follows from \Cref{prop:OC_only} that there is a payment vector~$\mathbf{p}^*$ with $0 \le p_i^* \le 1$ for all~$i \in N$ such that $(\alloc^*,\mathbf{p}^*)$ is envy-free under~$\widehat{u}$.

We now show that the allocation $(\alloc^*, \mathbf{p}^*)$ is envy-free with respect to the original utility profile~$u$.
For any set $S \subseteq J^{\varphi^*}$ and any agent~$i \in N$, we have $\widehat{u}_i(S) \ge u_i(S)$.
Moreover, by \Cref{lem:abundant:interest-assignment}, for every agent~$i \in N$ and every item~$c \in A_i^*$, we have $u_i(c) \ge -1$.
It follows that $\widehat{u}_i(c) = u_i(c)$ for all~$c \in A_i^*$ and all~$i \in N$.
Hence, $\widehat{u}_i(A^*_i) = u_i(A^*_i)$ for every agent.
Envy-freeness with respect to utility profile~$u$ follows since, for any two agents~$i, j \in N$, we have
\[
u_i(A_i^*) + p_i^* = \widehat{u}_i(A_i^*) + p_i^* \geq \widehat{u}_i(A_j^*) + p_j^* \geq u_i(A_j^*) + p_j^*,
\]
where the first inequality uses envy-freeness under~$\widehat{u}$.
\end{proof}

\section{Polynomial-Time Computation}
\label{sec:polytime}

In this section, we explain how to implement the construction in the proof of \Cref{thm:main} in polynomial time.
The bundling procedure in \Cref{alg:condpairwise} is polynomial-time.
Case~I is handled by running \IMWM on the resulting subjective-goods instance and then computing the usual longest-path payments in the envy graph.
Case~II is polynomial-time by the local arguments in \Cref{subsec:sparse-chores}, in particular \Cref{Prop:Partial} and \Cref{thm:CaseII}.
The remainder of this section is to explain how to implement the abundant-chores construction from \Cref{subsec:abundant-chores}.

In \Cref{subsec:abundant-chores}, the existence proof selects an injective map~$\varphi^* \in \Inj([\ell], [k])$.
The resulting matching $\IMWPM(H[N, J^{\varphi^*}]) = (\mu^1, \dots, \mu^T)$ has a value vector under~$u$ in which $\val(\mu^1)$ is maximized, and next, subject to the selected matching~$\mu^1$, $\val(\mu^2)$ is maximized, and so on.
As in \Cref{obs:non-zero-meta-goods}, meta-goods \(M_j\) with \(\max_{i\in T_j}u_i(M_j)=0\) can be set aside and assigned at the end to an arbitrary agent in \(T_j\).
This does not create envy and is clearly polynomial-time.
Thus, as in \Cref{subsec:abundant-chores}, we assume throughout the algorithm below that every remaining meta-good \(M_j\) satisfies
\[
\max_{i \in T_j} u_i(M_j) > 0.
\]

At a high level, our algorithm constructs a sequence of flow networks consisting of the agents~$N$, (subsets of) the meta-goods $\{M_1, \dots, M_\ell\}$ and (subsets of) the residual objective chores~$Z_{\mathrm{rem}}$ so that a flow of maximum profit on each such network represents some maximum-weight matching and thus finds a roundwise optimal matching.

We first introduce \emph{residual subproblem}, which consists of remaining real chores \(Z'\subseteq Z_{\mathrm{rem}}\) and remaining meta-goods \(G'\subseteq G\), with \(Z'\neq \emptyset\) and \(|Z'|\ge |G'|\), to be allocated among agents~$N$.
The structural properties from \Cref{lem:algoOneProperties} are inherited by this residual subproblem: the interest sets remain pairwise disjoint, and every pair \(c\cup M_j\), with \(c\in Z'\) and \(M_j\in G'\), is a meta-chore.
Let
\begin{equation}
\label{eq:q}
q \coloneqq
\begin{cases}
|Z'| \bmod n, & \text{if } |Z'|\not\equiv 0 \pmod n,\\
n, & \text{otherwise,}
\end{cases}
\qquad
b \coloneqq |Z'|-|G'|.
\end{equation}
The integer~$q$ is the number of real chores that must be selected when applying one round of \IMWPM to this residual subproblem.
The integer~$b$ is the total number of remaining real chores that need not be paired with meta-goods.
A \emph{feasible next round} selects~$q$ chores, assigns them to distinct agents, and pairs some of them with distinct meta-goods.
If $p$ meta-goods are paired, we call the round \emph{extendable} when
\[
|Z'| - q \ge |G'| - p,
\]
or equivalently when $q - p \leq |Z'| - |G'| = b$.
This condition means that there are still enough real chores to pair all remaining meta-goods after the round.

\begin{figure}[t]
\centering
\begin{tikzpicture}[
>=latex,
scale=1.14,
transform shape,
vertex/.style={circle, draw, line width=0.45pt, minimum size=8.5mm, inner sep=0pt, font=\small},
arc/.style={->, line width=0.6pt},
fam/.style={->, line width=0.42pt, densely dashed, ptgray},
pairarc/.style={->, line width=0.5pt, densely dashed, cyan},
bypass/.style={->, line width=0.55pt},
title/.style={font=\footnotesize\bfseries, align=center}
]
\node[vertex] (s) at (0,0) {$s$};
\node[vertex] (sp) at (1.35,0) {$s'$};

\node[vertex] (c1) at (3.05,1.25) {$c_1$};
\node[vertex] (c2) at (3.05,0) {$c_2$};
\node[vertex] (c3) at (3.05,-1.25) {$c_{|Z'|}$};

\node[vertex] (r1m) at (5.35,1.25) {$r_1^-$};
\node[vertex] (r2m) at (5.35,0) {$r_2^-$};
\node[vertex] (r3m) at (5.35,-1.25) {$r_n^-$};

\node[vertex] (r1p) at (6.85,1.25) {$r_1^+$};
\node[vertex] (r2p) at (6.85,0) {$r_2^+$};
\node[vertex] (r3p) at (6.85,-1.25) {$r_n^+$};

\node[vertex] (m1) at (9.15,1.0) {$m_1$};
\node[vertex] (m2) at (9.15,-0.4) {$m_{|G'|}$};
\node[vertex] (tp) at (9.15,-1.55) {$t'$};
\node[vertex] (t) at (11.05,0.15) {$t$};

\node[title] at (0,2.45) {source};
\node[title] at (1.35,2.45) {round\\selector};
\node[title] at (3.05,2.45) {real chores};
\node[title] at (5.35,2.45) {agent\\input};
\node[title] at (6.85,2.45) {agent\\output};
\node[title] at (9.15,2.45) {meta-goods};
\node[title] at (11.05,2.45) {sink};

\node[font=\scriptsize, black!55] at (3.05,-0.63) {\(\vdots\)};
\node[font=\scriptsize, black!55] at (5.35,-0.63) {\(\vdots\)};
\node[font=\scriptsize, black!55] at (6.85,-0.63) {\(\vdots\)};
\node[font=\scriptsize, black!55] at (9.15,0.3) {\(\vdots\)};

\draw[arc] (s) -- (sp);
\draw[arc] (sp) -- (c1);
\draw[arc] (sp) -- (c2);
\draw[arc] (sp) -- (c3);

\foreach \c in {c1,c2,c3} {
    \foreach \r in {r1m,r2m,r3m} {
        \draw[fam] (\c) -- (\r);
    }
}

\draw[arc] (r1m) -- (r1p);
\draw[arc] (r2m) -- (r2p);
\draw[arc] (r3m) -- (r3p);

\foreach \r in {r1p,r2p,r3p} {
    \foreach \m in {m1,m2} {
        \draw[pairarc] (\r) -- (\m);
    }
}

\draw[arc] (m1) -- (t);
\draw[arc] (m2) -- (t);

\draw[bypass] (r1p) to[out=-32,in=165] (tp);
\draw[bypass] (r2p) to[out=-30,in=180] (tp);
\draw[bypass] (r3p) to[out=-18,in=200] (tp);
\draw[arc] (tp) to[out=0,in=-100] (t);
\end{tikzpicture}

\vspace{1em}
{\footnotesize
\begin{tabular}{@{}lll@{\hspace{2cm}}lll@{}}
\toprule
Arc family & Capacity & Profit & Arc family & Capacity & Profit \\
\midrule
$s \to s'$ & $q$ & $0$ &
$r^+ \to m_j$ & $1$ & $u_r(M_j)$ \\
$s' \to c$ & $1$ & $0$ &
$r^+ \to t'$ & $1$ & $0$ \\
$c \to r^-$ & $1$ & $u_r(c)$ &
$m_j \to t$ & $1$ & $0$ \\
$r^- \to r^+$ & $1$ & $0$ &
$t' \to t$ & $b$ & $0$ \\
\bottomrule
\end{tabular}
}
\caption{The max-profit flow network for the residual subproblem parametrized with $Z'$ and $G'$.
The light dashed arcs are the $c \to r^-$ complete bipartite family, and the blue dashed arcs are the $r^+ \to m_j$ complete bipartite family.
Flow through a meta-good node pairs the selected chore with that meta-good; flow through $t'$ leaves the selected chore unpaired.}
\label{fig:round-flow-network}
\end{figure}

Next, we construct a flow network to capture the residual subproblem; see \Cref{fig:round-flow-network} for a pictorial illustration.
The network has source~$s$, auxiliary source~$s'$, sink~$t$, auxiliary sink~$t'$, one node for each chore~$c \in Z'$, two nodes~$r^-$ and~$r^+$ for each agent~$r \in N$, and one node~$m_j$ for each meta-good~$M_j \in G'$.
Furthermore, the network contains the following arcs:
\begin{itemize}
\item $s \to s'$, with capacity~$q$ and profit~$0$;
\item $s' \to c$, with capacity~$1$ and profit~$0$, for each~$c \in Z'$;
\item $c \to r^-$, with capacity~$1$ and profit~$u_r(c)$, for each~$c \in Z'$ and~$r \in N$;
\item $r^- \to r^+$, with capacity~$1$ and profit~$0$, for each~$r \in N$;
\item $r^+ \to m_j$, with capacity~$1$ and profit~$u_r(M_j)$, for each~$r \in N$ and~$M_j \in G'$;
\item $r^+ \to t'$, with capacity~$1$ and profit~$0$, for each~$r \in N$;
\item $m_j \to t$, with capacity~$1$ and profit~$0$, for each~$M_j \in G'$;
\item $t' \to t$, with capacity~$b$ and profit~$0$.
\end{itemize}

Consider only integral flows on the network.
If a unit of flow uses the path
\[
s \to s' \to c \to r^- \to r^+ \to m_j \to t,
\]
it represents that agent~$r$ receives the paired object $c \cup M_j$.
Put differently, chore~$c$ and meta-good~$M_j$ are paired together and then matched to agent~$r$.
On the other hand, if a unit of flow uses the path
\[
s \to s' \to c \to r^- \to r^+ \to t' \to t,
\]
then agent~$r$ receives (or is matched to) the singleton chore~$c$.
Thus, node~$t'$ is a counter for selected chores that are left unpaired.

We send exactly~$q$ units of flow from~$s$ to~$t$.
It selects~$q$ chores from~$Z'$, assigns them to distinct agents, and pairs some of the selected chores with distinct meta-goods from~$G'$.
Selected chores not paired with a meta-good remain singleton chores.
Agents not assigned a real chore receive dummy items.
Each unit of flow through node~$t'$ is one selected singleton chore left unpaired in the current residual subproblem, and the capacity $b = |Z'| - |G'|$ on the edge~$t' \to t$ ensures that the subsequent residual subproblem remains valid.

\begin{lemma}
\label{lem:round-flow}
For every residual subproblem~$(Z',G')$, integral feasible flows of value~$q$ in the network above are in value-preserving correspondence with extendable feasible next rounds.
Consequently, a maximum-profit integral flow maximizes the value under~$u$ of the next round among all extendable feasible next rounds.
\end{lemma}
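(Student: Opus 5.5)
We will prove the correspondence directly, by exhibiting maps in both directions and checking that they preserve value; the optimality consequence then follows at once. The main point is a small flow-decomposition fact: since every arc except $s\to s'$ and $t'\to t$ has capacity one, an integral flow of value $q$ splits into exactly $q$ unit $s$–$t$ paths. Each such path has one of the two forms displayed before the statement. The network is acyclic (the layers $s, s', Z', N^-, N^+, G'\cup\{t'\}, t$ are traversed in order), so no cycles occur in the decomposition.

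\emph{From flows to rounds.} Take an integral flow $f$ of value $q$ and decompose it into $q$ unit paths. The following facts are read off from the unit capacities.
\begin{itemize}
\item Each chore node $c$ carries at most one unit, since $s'\to c$ has capacity one. Hence exactly $q$ distinct chores are selected.
\item Each agent carries at most one unit, since $r^-\to r^+$ has capacity one. Hence the selected chores go to distinct agents.
\item Each meta-good node carries at most one unit, since $m_j\to t$ has capacity one. Hence the pairings use distinct meta-goods.
\end{itemize}
So the paths define a feasible next round: agent $r$ receives $c\cup M_j$ or the singleton $c$, and the remaining $n-q$ agents receive dummies. Let $p$ be the number of paths through meta-good nodes. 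Then $q-p$ paths pass through $t'$, and the capacity $b$ on $t'\to t$ gives $q-p\le b$, which is exactly extendability.

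For the value, the profit of a path through $m_j$ is $u_r(c)+u_r(M_j)=u_r(c\cup M_j)$ by additivity. The profit of a path through $t'$ is $u_r(c)$, and dummies contribute $0$. Hence the flow profit equals $\val$ of the round under $u$.

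\emph{From rounds to flows.} Conversely, take an extendable feasible next round. Route one unit along the corresponding path for each agent who receives a real object. The capacity constraints hold for the following reasons.
\begin{itemize}
\item The arc $s\to s'$ carries exactly $q$ units.
\item Every chore, agent, and meta-good is used at most once.
\item The arc $t'\to t$ carries $q-p\le b$ units.
\end{itemize}
Flow conservation holds because each unit is routed along a path. The profit matches the round's value by the same computation as above.

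These two maps are mutually inverse up to the choice of path decomposition, and the decomposition is in fact unique here because all middle arcs have unit capacity. This gives a value-preserving correspondence. It follows that a maximum-profit integral flow of value $q$, which is computable since min-cost flow has integral optima, attains the maximum value under $u$ over all extendable feasible next rounds.

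The step that needs care is the definition of $q$ and what "feasible next round" requires. The round must select exactly $q$ real chores, as in \IMWPM with dummies padded so that $|Z'|-q$ is a multiple of $n$. The flow must therefore have value exactly $q$ and not at most $q$. Note that a flow of value $q$ always exists: we can route all $q$ units through $t'$ when $q\le b$, and otherwise pair $q-b$ of the selected chores with meta-goods, which is possible because $|G'|\ge q-b$ when $q\le |Z'|$.
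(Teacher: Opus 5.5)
Your proposal is correct and follows essentially the same argument as the paper. Both directions go through the unit-path decomposition, with unit capacities enforcing distinct chores, agents and meta-goods. The capacity $b$ on $t'\to t$ encodes exactly the extendability condition $q-p\le b$, and additivity makes path profit equal round value. Your extra remarks on uniqueness of the decomposition and on the existence of a value-$q$ flow (via $q\le |Z'|$) are also correct, though the lemma does not need them.
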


\begin{proof}
Given an extendable feasible next round, route each selected chore~$c$ through the agent~$r$ who receives it.
If~$c$ is paired with~$M_j$, continue through~$m_j$; otherwise continue through~$t'$.
The capacities enforce that each selected chore is used once, each agent receives at most one real object, each meta-good is paired at most once, and at most $b=|Z'|-|G'|$ selected chores are left unpaired.
The last condition is exactly the extendability condition $q-p\le b$, where~$p$ is the number of paired meta-goods.

Conversely, since the network is acyclic and all capacities are integral, any integral flow of value~$q$ decomposes into~$q$ paths of one of the two displayed forms.
Reading off these paths gives a feasible next round.
If the flow pairs~$p$ meta-goods, then $q-p$ units use the arc~$t'\to t$, so
\[
q-p\le b=|Z'|-|G'|.
\]
Equivalently, $|Z'|-q\ge |G'|-p$, and hence the residual instance remains extendable.

Finally, the profit of a path assigning~$c$ to~$r$ and then passing through~$t'$ is $u_r(c)$, while the profit of a path assigning $c\cup M_j$ to~$r$ is $u_r(c)+u_r(M_j)$.
Thus, the flow profit is exactly the value under~$u$ of the corresponding next round.
\end{proof}

\begin{proposition}
\label{prop:caseIII-polytime}
Under Case~III, one can compute in polynomial time an allocation and payments satisfying the conclusions of the abundant-chores construction.
\end{proposition}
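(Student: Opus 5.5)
The plan is to build the roundwise optimal execution greedily, one round at a time, solving a max-profit flow problem in each round via \Cref{lem:round-flow}. Then we compute the payments from the resulting allocation by longest paths, as in \citet{HalpernSh19}.

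We begin with the residual subproblem $(Z',G')=(Z_{\mathrm{rem}},G)$, after setting aside the zero-valued meta-goods as in \Cref{obs:non-zero-meta-goods}. This subproblem is valid because $|Z_{\mathrm{rem}}|\ge|G|$. In each round we compute $q$ and $b$ from \eqref{eq:q}, build the network of \Cref{fig:round-flow-network}, and find a maximum-profit integral flow of value exactly $q$. Such a flow can be found in polynomial time, for example by successive longest augmenting paths or by a min-cost flow LP whose integrality is guaranteed by total unimodularity.

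A feasible flow exists because the network is acyclic with integral capacities, and extendability is maintained as follows. Since $b\ge0$ and $|Z'|\ge q$, we can route $\min(q,|G'|)$ units through meta-goods and the remaining $q-\min(q,|G'|)$ units through $t'$. The second quantity is at most $b$: if $q\le|G'|$ it is $0$, and otherwise it equals $q-|G'|\le |Z'|-|G'|$.

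We read off the round $\mu^t$ from the flow, giving dummies to agents who receive no real object. We then delete the selected chores and paired meta-goods, and recurse. Extendability guarantees that the new residual subproblem again has $|Z''|\ge|G''|$. Once $Z'$ is exhausted, $G'$ is also empty, because the last round must satisfy $q-p\le b$ with $q=|Z'|$, so $p=|G'|$. The number of rounds is $T=\lceil k/n\rceil\le m$.

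The main obstacle is arguing that this greedy sequence is a roundwise optimal matching in the sense of \Cref{subsec:abundant-chores}. Here we use a prefix argument. Fix the first $t-1$ rounds as chosen. A pairing $\varphi$ together with an \IMWPM execution consistent with these rounds exists exactly when each chosen round was extendable. Moreover, its $t$-th round is an extendable feasible next round of the current residual subproblem. Conversely, every extendable feasible next round can be completed to a full pairing: pair the leftover meta-goods with leftover chores, then order the remainder into rounds, which is valid since each later round is itself extendable by the feasibility argument above.

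Note also that \IMWPM pads with dummies only in the first round. For this reason the round sizes ($q$ in the first round, then $n$) match the definition of $q$ in \eqref{eq:q}. I would make sure the dummy-padding convention aligns with this, choosing to place the remainder in round one. By \Cref{lem:round-flow}, maximizing the flow profit therefore maximizes $\val(\mu^t)$ among all configurations consistent with $\mu^1,\dots,\mu^{t-1}$. Induction on $t$ then shows that the sequence is roundwise optimal.

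Finally, given the allocation $\alloc^*$, \Cref{lem:abundant:map-back} guarantees that the longest-path payments in $\mathcal G_{\alloc^*}$ lie in $[0,1]$. These payments are computable in polynomial time by Bellman--Ford, since the graph has no positive cycles. We then add back the set-aside meta-goods to agents in their interest sets. Every step is polynomial, so the whole construction runs in polynomial time.
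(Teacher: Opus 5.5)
Your proposal is correct and follows essentially the same route as the paper: round-by-round maximum-profit flows, the $t'\to t$ capacity preserving $|Z'|\ge|G'|$, and a comparison showing the flow-selected round is a maximum-weight perfect matching in every completion of the pairing, followed by longest-path payments justified via \Cref{lem:abundant:map-back}. Two small points to tighten: using all dummies in the first round is forced rather than a convention, since every singleton chore and meta-chore is strictly negative for every agent, so leaving a dummy unused would give a strictly heavier perfect matching. Also, because you take longest-path payments under $u$ rather than under $\widehat{u}$ as the paper does, the bound $p_i\le 1$ needs the componentwise minimality of \citet{HalpernSh19} applied to the payment vector that \Cref{lem:abundant:map-back} guarantees.
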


\begin{proof}
Initialize $Z^1 \coloneqq Z_{\mathrm{rem}}$ and $G^1 \coloneqq G$, after setting aside the zero-valued meta-goods as in \Cref{obs:non-zero-meta-goods}.
At round~$h$, if $Z^h=\emptyset$, stop.
Otherwise, let $q_h$ and $b_h$ be the values of~$q$ and~$b$ from \Cref{eq:q} for the residual subproblem~$(Z^h,G^h)$.
Compute a maximum-profit integral flow of value~$q_h$, allocate the corresponding singleton chores and paired chore--meta-good objects, and remove the selected chores and paired meta-goods to obtain~$(Z^{h+1},G^{h+1})$.

The invariant $|Z^h|\ge |G^h|$ is preserved.
Indeed, if the round pairs~$p_h$ meta-goods, then the capacity of~$t'\to t$ gives
\[
q_h-p_h\le |Z^h|-|G^h|,
\]
which is equivalent to
\[
|Z^h|-q_h\ge |G^h|-p_h.
\]
Thus the next residual subproblem is again abundant.
Each round removes at least one chore, so the procedure terminates; when $Z^h=\emptyset$, the invariant implies $G^h=\emptyset$ as well.
Consequently, every remaining positive-valued meta-good is paired with a distinct real chore, and the completed procedure determines a pairing~$\varphi^*$ and its object collection~$J^{\varphi^*}$.

We next show that the selected rounds form a roundwise optimal execution of \IMWPM under~$u$.
Fix a round~$h$ and consider the final collection of singleton chores and paired objects determined by continuing the procedure to termination.
At the beginning of round~$h$, its remaining real objects correspond exactly to the residual subproblem~$(Z^h,G^h)$.
After the dummy objects required by \IMWPM are added, there are $n-q_h$ dummy objects available in this round.
Every real object is an objective chore under~$u$, whereas every dummy object has value~$0$, so a maximum-weight perfect matching can be chosen to use all available dummy objects.
Its non-dummy part therefore selects exactly~$q_h$ real objects.
If it selects~$p_h'$ paired objects, then it selects $q_h-p_h'$ singleton chores; because the completed residual collection contains only $|Z^h|-|G^h|$ singleton chores, this is an extendable feasible next round.

Conversely, the round selected by the flow, together with dummy objects assigned to agents receiving no real object, is a feasible perfect matching in this same completed collection.
By \Cref{lem:round-flow}, it maximizes the value under~$u$ among all extendable feasible next rounds and hence is a maximum-weight perfect matching for the completed collection.
Maximizing over extendable rounds is exactly maximizing over pairing configurations consistent with the preceding rounds.
Applying this argument round by round therefore shows that the flow procedure produces a roundwise optimal execution of \IMWPM under~$u$.

It remains to compute the payments.
By \Cref{lem:abundant:interest-assignment}, every non-dummy object assigned by this execution has original value at least~$-1$ for its recipient, and \Cref{obs:Equival} shows that the same matching sequence is an execution of \IMWPM under~$\widehat{u}$.
The longest-path payments under~$\widehat{u}$ can therefore be computed in polynomial time, and \Cref{lem:abundant:map-back} shows that they give an envy-free outcome under~$u$ with each payment at most~$1$.
Finally, assign the zero-valued meta-goods set aside at the beginning as in \Cref{obs:non-zero-meta-goods}.

There are at most $\lceil |Z_{\mathrm{rem}}|/n\rceil$ flow rounds, and each network has polynomially many nodes and arcs.
Negating the rational profits gives a minimum-cost flow problem with integral capacities and rational costs of polynomial bit length.
Standard minimum-cost flow algorithms return an integral optimum in time polynomial in the input bit length~\citep[e.g.,][]{AhujaMaOr93}.
The final longest-path computation is also polynomial-time.
\end{proof}

Combining \Cref{prop:caseIII-polytime} with the polynomial-time arguments for Cases~I and~II gives a polynomial-time implementation of the full construction in \Cref{thm:main}.

\section{Discussion}

In this paper, we have studied the fair allocation of indivisible items that may be goods for some agents and chores for others under the subsidy model, and proved that one dollar per agent always suffices to guarantee an envy-free outcome for additive utilities normalized to $[-1, 1]$ (Theorem~\ref{thm:main}).
Since the bound is tight already for goods-only instances~\citep{BrustleDiNa20}, it is worst-case optimal in the mixed setting as well.
To obtain the guarantee, we connect the envy-graph characterization of \EFability and longest-path subsidies~\citep{HalpernSh19} with matching-based allocations (\IMWM\ / \IMWPM) and a bundling step that reduces a mixed instance to meta-goods with desirable properties, together with residual objective chores.

A natural direction for future work is to move beyond additive utilities.
Our techniques rely heavily on additivity both to define meta-goods and to obtain telescoping envy-path bounds from iterated matchings.
This approach does not directly extend to richer valuation classes.
This raises the following questions:
for mixed goods and chores with bounded marginal values, does a constant per-agent subsidy bound still hold for submodular or XOS valuations?
Note that this question is open even in the goods-only setting.

\section*{Acknowledgements}

This work was partially supported
by the NSF-CSIRO grant on ``Fair Sequential Collective Decision Making'',
by the ARC Laureate Project FL200100204 on ``Trustworthy AI'', and
by JST ERATO Grant Number JPMJER2301.

\bibliographystyle{plainnat}
\bibliography{bibliography}

@article{AlkanDeGa91,
	author = {Alkan, Ahmet and Demange, Gabrielle and Gale, David},
	title = {Fair Allocation of Indivisible Goods and Criteria of Justice},
	journal = {Econometrica},
	volume = {59},
	number = {4},
	year = {1991},
	pages = {1023--1039},
}

@article{AmanatidisAzBi23,
	author = {Amanatidis, Georgios and Aziz, Haris and Birmpas, Georgios and Filos-Ratsikas, Aris and Li, Bo and Moulin, Herv\'{e} and Voudouris, Alexandros A. and Wu, Xiaowei},
	title = {Fair Division of Indivisible Goods: Recent Progress and Open Questions},
	journal = {Artificial Intelligence},
	year = {2023},
	volume = {322},
	pages = {103965},
}

@article{Aragones95,
	author = {Aragones, Enriqueta},
	title = {A Derivation of the Money {R}awlsian Solution},
	journal = {Social Choice and Welfare},
	volume = {12},
	number = {3},
	year = {1995},
	pages = {267--276},
}

@article{AzizCaIg22,
	author = {Aziz, Haris and Caragiannis, Ioannis and Igarashi, Ayumi and Walsh, Toby},
	title = {Fair Allocation of Indivisible Goods and Chores},
	journal = {Autonomous Agents and Multi-Agent Systems},
	volume = {36},
	number = {1},
	year = {2022},
	pages = {3:1--3:21},
}

@inproceedings{AzizLuMa26,
	author = {Aziz, Haris and Lu, Xinhang and Mackenzie, Simon and Suzuki, Mashbat},
	title = {Fair Division with Indivisible Goods, Chores, and Cake},
	booktitle = {Proceedings of the 27th ACM Conference on Economics and Computation (EC)},
	year = {2026},
	note = {Forthcoming},
}

@inproceedings{BarmanHVSe25,
	author = {Barman, Siddharth and HV, Vishwa Prakash and Sethia, Aditi and Suzuki, Mashbat},
	title = {Fair and Efficient Allocation of Indivisible Mixed Manna},
	booktitle = {Proceedings of the 21st Conference on Web and Internet Economics (WINE)},
	year = {2025},
	pages = {467--483},
}

@inproceedings{BarmanKrNa22,
	author = {Barman, Siddharth and Krishna, Anand and Narahari, Yadati and Sadhukhan, Soumyarup},
	title = {Achieving Envy-Freeness with Limited Subsidies under Dichotomous Valuations},
	booktitle = {Proceedings of the 31st International Joint Conference on Artificial Intelligence (IJCAI)},
	year = {2022},
	pages = {60--66},
}

@article{BarmanVe26,
	author = {Barman, Siddharth and Verma, Paritosh},
	title = {Introspectively Envy-Free and Efficient Allocation of Indivisible Mixed Manna},
	journal = {CoRR},
	volume = {abs/2509.18673},
	year = {2026},
}

@article{BogomolnaiaMoSa17,
	author = {Bogomolnaia, Anna and Moulin, Herv\'{e} and Sandomirskiy, Fedor and Yanovskaya, Elena},
	title = {Competitive Division of a Mixed Manna},
	journal = {Econometrica},
	volume = {85},
	number = {6},
	year = {2017},
	pages = {1847--1871},
}

@inproceedings{BrustleDiNa20,
	author = {Brustle, Johannes and Dippel, Jack and Narayan, Vishnu V. and Suzuki, Mashbat and Vetta, Adrian},
	title = {One Dollar Each Eliminates Envy},
	booktitle = {Proceedings of the 21st ACM Conference on Economics and Computation (EC)},
	year = {2020},
	pages = {23--39},
}

@article{ChaudhuryGaMc23,
	author = {Chaudhury, Bhaskar Ray and Garg, Jugal and McGlaughlin, Peter and Mehta, Ruta},
	title = {A Complementary Pivot Algorithm for Competitive Allocation of a Mixed Manna},
	journal = {Mathematics of Operations Research},
	volume = {48},
	number = {3},
	year = {2023},
	pages = {1630--1656},
}

@article{ElmalemAzGo25,
    author = {Elmalem, Noga Klein and Aziz, Haris and Gonen, Rica and Huang, Xin and Kimura, Kei and Saha, Indrajit and Segal-Halevi, Erel and Sun, Zhaohong and Suzuki, Mashbat and Yokoo, Makoto},
    title = {Whoever Said Money Won't Solve All Your Problems? {W}eighted Envy-Free Allocation with Subsidy},
    journal = {CoRR},
    volume = {abs/2502.09006},
    year = {2025},
}

@article{Foley67,
	author = {Foley, Duncan Karl},
	title = {Resource Allocation and the Public Sector},
	journal = {Yale Economics Essays},
	volume = {7},
	number = {1},
	year = {1967},
	pages = {45--98},
}

@article{GokoIgKa24,
	author = {Goko, Hiromichi and Igarashi, Ayumi and Kawase, Yasushi and Makino, Kazuhisa and Sumita, Hanna and Tamura, Akihisa and Yokoi, Yu and Yokoo, Makoto},
	title = {A Fair and Truthful Mechanism with Limited Subsidy},
	journal = {Games and Economic Behavior},
	volume = {144},
	year = {2024},
	pages = {49--70},
}

@article{GuoLiDe23,
	author = {Guo, Hao and Li, Weidong and Deng, Bin},
	title = {A Survey on Fair Allocation of Chores},
	journal = {Mathematics},
	volume = {11},
	number = {16},
	year = {2023},
	pages = {3616},
}

@inproceedings{HalpernSh19,
	author = {Halpern, Daniel and Shah, Nisarg},
	title = {Fair Division with Subsidy},
	booktitle = {Proceedings of the 12th International Symposium on Algorithmic Game Theory (SAGT)},
	year = {2019},
	pages = {374--389},
}

@inproceedings{HosseiniSiVa23,
	author = {Hosseini, Hadi and Sikdar, Sujoy and Vaish, Rohit and Xia, Lirong},
	title = {Fairly Dividing Mixtures of Goods and Chores under Lexicographic Preferences},
	booktitle = {Proceedings of the 22nd International Conference on Autonomous Agents and Multiagent Systems (AAMAS)},
	year = {2023},
	pages = {152--160},
}

@article{Klijn00,
	author = {Klijn, Flip},
	title = {An Algorithm for Envy-free Allocations in an Economy with Indivisible Objects and Money},
	journal = {Social Choice and Welfare},
	volume = {17},
	number = {2},
	year = {2000},
	pages = {201--215},
}

@inproceedings{KulkarniMeTa21,
	author = {Kulkarni, Rucha and Mehta, Ruta and Taki, Setareh},
	title = {Indivisible Mixed Manna: On the Computability of {MMS} + {PO} Allocations},
	booktitle = {Proceedings of the 22nd ACM Conference on Economics and Computation (EC)},
	year = {2021},
	pages = {683--684},
}

@article{LiSuSu25,
	author = {Li, Bo and Sun, Ankang and Suzuki, Mashbat and Xing, Shiji},
	title = {On the Subsidy of Envy-Free Orientations in Graphs},
	journal = {CoRR},
	volume = {abs/2502.13671},
	year = {2025},
}

@article{LiuLuSu24,
	author = {Liu, Shengxin and Lu, Xinhang and Suzuki, Mashbat and Walsh, Toby},
	title = {Mixed Fair Division: {A} Survey},
	journal = {Journal of Artificial Intelligence Research},
	volume = {80},
	year = {2024},
	pages = {1373--1406},
}

@incollection{Maskin87,
	author = {Maskin, Eric S.},
	title = {On the Fair Allocation of Indivisible Goods},
	booktitle = {Proceedings of the Arrow and the Foundations of the Theory of Economic Policy},
	editor = {Feiwel, George R.},
	publisher = {Palgrave Macmillan UK},
	year = {1987},
	pages = {341--349},
}

@article{MeertensPoRe02,
	author = {Meertens, Marc and Potters, Jos and Reijnierse, Hans},
	title = {Envy-free and {P}areto Efficient Allocations in Economies with Indivisible Goods and Money},
	journal = {Mathematical Social Sciences},
	volume = {44},
	number = {3},
	year = {2002},
	pages = {223--233},
}

@inproceedings{NarayanSuVe21,
	author = {Narayan, Vishnu V. and Suzuki, Mashbat and Vetta, Adrian},
	title = {Two Birds with One Stone: Fairness and Welfare via Transfers},
	booktitle = {Proceedings of the 14th International Symposium on Algorithmic Game Theory (SAGT)},
	year = {2021},
	pages = {376--390},
}

@inproceedings{NguyenRo23,
	author = {Nguyen, Trung Thanh and Rothe, J\"{o}rg},
	title = {Complexity Results and Exact Algorithms for Fair Division of Indivisible Items: {A} Survey},
	booktitle = {Proceedings of the 32nd International Joint Conference on Artificial Intelligence (IJCAI)},
	year = {2023},
	pages = {6732--6740},
}

@article{Suksompong21,
	author = {Suksompong, Warut},
	title = {Constraints in Fair Division},
	journal = {ACM SIGecom Exchanges},
	volume = {19},
	number = {2},
	year = {2021},
	pages = {46--61},
}

@article{Suksompong25,
	author = {Suksompong, Warut},
	title = {Weighted Fair Division of Indivisible Items: {A} Review},
	journal = {Information Processing Letters},
	volume = {187},
	year = {2025},
	pages = {106519},
}

@article{Svensson83,
	author = {Svensson, Lars-Gunnar},
	title = {Large Indivisibles: An Analysis with Respect to Price Equilibrium and Fairness},
	journal = {Econometrica},
	volume = {51},
	number = {4},
	year = {1983},
	pages = {939--954},
}

@inproceedings{WuXuZh25,
	author = {Wu, Xiaowei and Xue, Quan and Zhou, Shengwei},
	title = {A Little Subsidy Ensures MMS Allocation for Three Agents},
	booktitle = {Proceedings of the 34th International Joint Conference on Artificial Intelligence (IJCAI)},
	year = {2025},
	pages = {4073--4081},
}

@inproceedings{WuXuZh25-revisiting,
	author = {Wu, Xiaowei and Xue, Quan and Zhou, Shengwei},
	title = {Revisiting Proportional Allocation with Subsidy: Simplification and Improvements},
	booktitle = {Proceedings of the 34th International Joint Conference on Artificial Intelligence (IJCAI)},
	year = {2025},
	pages = {4082--4090},
}

@inproceedings{WuZh24,
	author = {Wu, Xiaowei and Zhou, Shengwei},
	title = {Tree Splitting Based Rounding Scheme for Weighted Proportional Allocations with Subsidy},
	booktitle = {Proceedings of the 20th Conference on Web and Internet Economics (WINE)},
	year = {2024},
	pages = {295--313},
}

@inproceedings{WuZhZh23,
	author = {Wu, Xiaowei and Zhang, Cong and Zhou, Shengwei},
	title = {One Quarter Each (on Average) Ensures Proportionality},
	booktitle = {Proceedings of the 19th Conference on Web and Internet Economics (WINE)},
	year = {2023},
	pages = {582--599},
}

@article{KawaseMaSuTaYo25,
  author  = {Kawase, Yasushi and Makino, Kazuhisa and Sumita, Hanna and Tamura, Akihisa and Yokoo, Makoto},
  title   = {Towards Optimal Subsidy Bounds for Envy-Freeable Allocations},
  journal = {Artificial Intelligence},
  volume  = {348},
  pages   = {104406},
  year    = {2025},
}

@article{Su99,
  author  = {Su, Francis Edward},
  title   = {Rental Harmony: {Sperner's} Lemma in Fair Division},
  journal = {The American Mathematical Monthly},
  volume  = {106},
  number  = {10},
  pages   = {930--942},
  year    = {1999},
}

@inproceedings{LiptonMaMoSa04,
  author    = {Lipton, Richard J. and Markakis, Evangelos and Mossel, Elchanan and Saberi, Amin},
  title     = {On Approximately Fair Allocations of Indivisible Goods},
  booktitle = {Proceedings of the 5th ACM Conference on Electronic Commerce (EC)},
  year      = {2004},
  pages     = {125--131},
}

@article{Budish11,
  author  = {Budish, Eric},
  title   = {The Combinatorial Assignment Problem: Approximate Competitive Equilibrium from Equal Incomes},
  journal = {Journal of Political Economy},
  volume  = {119},
  number  = {6},
  pages   = {1061--1103},
  year    = {2011},
}

@article{CaragiannisKuMoPrShWa19,
  author    = {Caragiannis, Ioannis and Kurokawa, David and Moulin, Herv{\'e} and Procaccia, Ariel D. and Shah, Nisarg and Wang, Junxing},
  title     = {The Unreasonable Fairness of Maximum Nash Welfare},
  journal   = {ACM Transactions on Economics and Computation},
  volume    = {7},
  number    = {3},
  articleno = {12},
  numpages  = {32},
  year      = {2019},
}

@book{AhujaMaOr93,
	author = {Ahuja, Ravindra K. and Magnanti, Thomas L. and Orlin, James B.},
	title = {Network Flows: Theory, Algorithms, and Applications},
	publisher = {Prentice Hall},
	address = {Englewood Cliffs, NJ},
	year = {1993},
	isbn = {978-0-13-617549-0},
	url = {https://www.pearson.com/en-us/subject-catalog/p/network-flows-theory-algorithms-and-applications/P200000003456},
}

\appendix

\section{Omitted Proofs}

\subsection{Proof of Proposition~\ref{prop:OC_only}}
\label{app:prop32}

Let $\mathcal{A}=(A_1,\dots,A_n)$ be the allocation returned by \IMWPM. For each round~$t$, let $\mu^t=\{\mu^t_1,\dots,\mu^t_n\}$ denote the items allocated in that round, and let $J^t$ be the set of items available at its beginning.

Fix a path $P$ in the envy graph $\mathcal{G}_\mathcal{A}$; without loss of generality, let this directed path be $1\rightarrow 2\rightarrow \cdots \rightarrow k$. Then
\begin{align*}
w_{\mathcal{A}}(P) = \sum_{i=1 }^{k-1} w_{\mathcal{A}}(i, i+1) = \sum_{i=1 }^{k-1} \left( u_i(A_{i+1}) - u_i(A_i) \right)
&= \sum_{i=1 }^{k-1} \sum_{t = 1}^T \left( u_i(\mu^t_{i+1}) - u_i(\mu^t_i) \right) \\
&= \sum_{i=1 }^{k-1} \sum_{t = 1}^T w_{\mu^t}(i, i+1) \\
&= \sum_{t = 1}^T w_{\mu^t} (P)
\end{align*}

Observe that for any  round $1\leq t \leq T-1$, the quantity $w_{\mu^t} (P)$ can be bounded from above by the difference in utility of agent $k$'s favorite available item between rounds $t$ and $t+1$. To see this, consider an alternative matching $\widetilde{\mu^t}$ that is feasible in round $t$, defined as follows:
$$
\widetilde{\mu^t_i} := \begin{cases}
 \argmax\limits_{g\in J^{t+1} } u_k (g) \quad & \text{ if } i=k  \\
 \mu^t_{i+1} \quad & \text{ if } i\in \{1,...,k-1\} \\
 \mu^t_{i} \quad & \text{ if } i\not\in \{1,...,k\}
\end{cases}
$$
Note that the feasibility of $\widetilde{\mu^t}$ follows since every matched item is available in round $t$. Since $\mu^t$ is max weight matching, we have
\begin{align*}
\sum_{i\in N} u_i(\widetilde{\mu^t_i} ) - \sum_{i\in N} u_i(\mu^t_i )  &= \max\limits_{g\in J^{t+1}} u_k (g) - u_k (\mu^t_k) + \sum_{i=1}^{k-1} (u_i(\mu^t_{i+1})-u_i(\mu^t_{i})) \\
& = \max\limits_{g\in J^{t+1} } u_k (g) - u_k (\mu^t_k) + w_{\mu^t} (P) \leq 0
\end{align*}
Moreover, since $\mu^t_k \in J^{t}$, we have $u_k(\mu^t_k) \leq \max_{g\in J^{t}} u_k(g)$. Combining these inequalities gives
\[
w_{\mu^t} (P) \leq \max_{g\in J^{t}} u_k (g) - \max\limits_{g\in J^{t+1} } u_k (g).
\]

For the final round $T$, we use a similar argument, but modifying the alternate feasible matching by setting $\widetilde{\mu^T_k}=\mu^T_1$ and $\widetilde{\mu^T_i}=\mu^T_{i+1}$ for each $i\in \{1,...,k-1\}$, while keeping all other assignments unchanged. This gives,
\[
w_{\mu^T} (P) \leq  u_k (\mu^T_k) -  u_k (\mu^T_1).
\]
Combining the above inequalities, we obtain
\begin{align*}
    w_{\mathcal{A}}(P) = \sum_{t=1}^T w_{\mu^t}(P) &\leq u_k (\mu^T_k) -  u_k (\mu^T_1) + \sum_{t=1}^{T-1}  \max_{g\in J^{t} } u_k (g) - \max\limits_{g\in J^{t+1} } u_k (g)  \\
    &= u_k (\mu^T_k)  -  u_k (\mu^T_1)+ \max_{g\in J^{1} } u_k (g) - \max_{g\in J^{T} } u_k (g)  \\
    &= -(\max_{g\in J^{T} } u_k (g) - u_k (\mu^T_k)  ) + (\max_{g\in J^{1} } u_k (g) -u_k (\mu^T_1)) \\
    & \leq -u_k (\mu^T_1) \leq 1
\end{align*}
where the last inequality follows since $\max_{g\in J^{T} } u_k (g) \geq u_k (\mu^T_k)$, and since each item is an objective chore  $\max_{g\in J^{1} } u_k (g) <0$.

Finally, note that since the allocation in each round is determined by a maximum-weight matching, it is envy-freeable. Moreover, the sum of envy-freeable allocations is itself envy-freeable, and therefore the allocation $\mathcal{A}$ is envy-freeable. By the result of \citet{HalpernSh19}, setting $p_i\ = \ell_\alloc(i)$  for each $i\in N$ yields an  envy-free allocation.  Since the weight of any path in the envy graph is bounded above by one, we have the desired subsidy bound.

\subsection{Proof of Lemma~\ref{lem:algoOneProperties}}
\label{App:Algone}

We first check that each element~$M_j \in G$ is indeed a meta-good by showing that its interest set~$T_j$ is non-empty.
The non-emptiness of~$T_j$ is guaranteed because Phase~2 only merges the two subjective goods $g_1$ and $g_2$ with a chore~$c$ when some agent values the whole bundle $g_1 \cup g_2 \cup \{c\}$ non-negatively (line~\ref{alg:IfMerge}).
If no chore is added to the newly formed bundle (line~\ref{alg:ElseMerge}), then the selected agent~$i$, who values both $g_1$ and $g_2$ non-negatively, also values their union non-negatively.
Thus, in each iteration of the \verb|while|-loop in Phase~2, the newly formed subjective good remains a subjective good.
As a result, when the algorithm terminates, for every element of~$G$, there is at least one agent who values it non-negatively.

\medskip
\noindent\textbf{Property~\ref{prop:POne}: Disjoint interest sets.}
In Phase~2, we merge whenever some agent has degree at least~$2$ in $R[N, U]$.
Each merging operation replaces two neighbors of that agent by a single object, so $|U|$ strictly decreases and the loop terminates.
If $Z_{\mathrm{rem}} \neq \varnothing$, the termination condition of the \verb|while|-loop in Phase~2 implies that $\deg(i) \le 1$ for all agents~$i$ in the final interest graph $R[N, G]$.
Equivalently, each agent belongs to at most one interest set, that is, $T_i \cap T_j = \varnothing$ for all~$i \neq j$.

(When $Z_{\mathrm{rem}} = \varnothing$, disjointness may not be guaranteed.)

\medskip
\noindent\textbf{Properties~\ref{prop:PTwo} and~\ref{prop:PThree}: Upper bound on interested values and chore-maximality.}

If $Z \neq \emptyset$ when Phase~2 begins, both \ref{prop:PTwo} and \ref{prop:PThree} hold.
This is because, first, no subjective goods have yet been merged, implying~\ref{prop:PTwo}.
Moreover, if \ref{prop:PThree} did not hold, then we would have a subjective good which is not chore-maximal, meaning that the \verb|while|-loop of Phase~1 would still be running.

In the following, we show that any execution of Phase~2 preserves both \ref{prop:PTwo} and~\ref{prop:PThree}.
Phase~2 only executes if $Z \neq \emptyset$, meaning that at least one objective chore is available.
We now check that the invariants are preserved under the two possible bundling operations (lines~\ref{alg:IfMerge} and \ref{alg:ElseMerge}).
\begin{itemize}
\item In the \verb|if| case (line~\ref{alg:IfMerge}), the item $g_{\mathrm{new}}$ must be chore-maximal since for all agent~$j \in N$ and objective chore~$d \in Z$,
\[
u_j(g_{\mathrm{new}} \cup\{d\}) = u_j(g_1) + u_j(g_2) + u_j(c) + u_j(d) = \bigl( u_j(g_1) + u_j(c) \bigr) + \bigl( u_j(g_2) + u_j(d) \bigr) < 0.
\]

Likewise for the upper bound: $\forall j \in N$,
\[
u_j(g_{\mathrm{new}}) = u_j(g_1) + u_j(g_2) + u_j(c) = u_j(g_1) + \bigl( u_j(g_2) + u_j(c) \bigr) < u_j(g_1) \leq 1.
\]

\item In the \verb|else| case (line~\ref{alg:ElseMerge}), the branch condition gives, for every agent~$j$ and every remaining chore~$c$,
\[
u_j(g_{\mathrm{new}})+u_j(c)<0.
\]
Thus $g_{\mathrm{new}}$ is chore-maximal. Moreover, since $u_j(c)\ge -1$, every interested agent~$j$ satisfies
\[
u_j(g_{\mathrm{new}})<-u_j(c)\le 1,
\]
which proves the required upper bound.
\end{itemize}

\end{document}